\newtheorem{lemma}{Lemma}[section]
\newtheorem{theorem}[lemma]{Theorem}
\newtheorem{corollary}[lemma]{Corollary}
\newtheorem{conjecture}[lemma]{Conjecture}
\newtheorem{question}[lemma]{Question}
\newtheorem{proposition}[lemma]{Proposition}
\theoremstyle{definition}
\newtheorem{definition}[lemma]{Definition}
\newcommand\amo{\operatorname{\mathrm{AMO}}}
\newcommand\exone{\operatorname{\mathrm{EO}}}
\newcommand\penc{P-encoding}
\newcommand\pencs{P-encodings}
\newcommand\pencS[1]{\mathcal{P}(#1)}
\newcommand\pencQS[1]{\mathcal{P}_2(#1)}
\newcommand\amoS[1]{\mathcal{A}(#1)}
\newcommand\exoneS[1]{\mathcal{E}(#1)}
\newcommand\var[1]{\mathrm{var}(#1)}
\newcommand\lit[1]{\mathrm{lit}(#1)}
\newcommand\resolvent[2]{{\cal R}(#1, #2)}
\newcommand\upclos[2][]{{\cal U}_{#1}(#2)}
\newcommand\vek[1]{\mathbf{#1}}
\newcommand\partass{\rho}
\renewcommand\phi{\varphi}
\newcommand\Rtype{R} 
\newcommand\Qtype{Q} 
\newcommand\subst[3]{#1[#2\leftarrow #3]}
\newenvironment{amocond}{%
   \begin{enumerate}[label=(P\arabic*)]
      }{%
   \end{enumerate}
}
\newenvironment{regcond}{%
   \begin{enumerate}[label=(R\arabic*)]
   }{%
\end{enumerate}
}
\title{A Lower Bound on CNF Encodings of the At-Most-One Constraint}
\author{Petr Ku{\v c}era\thanks{Department of Theoretical Computer Science
and Mathematical Logic,
Faculty of Mathematics and Physics,
Charles University, Czech Republic,
kucerap@ktiml.mff.cuni.cz}
   \and Petr Savick{\'y}\thanks{Institute of
      Computer Science,
   The Czech Academy of Sciences,
   Czech Republic,
   savicky@cs.cas.cz}
\and Vojt{\v e}ch Vorel\thanks{Department of Theoretical Computer Science
and Mathematical Logic,
Faculty of Mathematics and Physics,
Charles University, Czech Republic,
vorel@ktiml.mff.cuni.cz}
   }
\date{}
\begin{document}

   \maketitle

   \begin{abstract}
      Constraint ``at most one'' is a basic cardinality constraint
      which requires that at most one of its \(n\) boolean inputs is
      set to \(1\). This constraint is widely used when translating
      a problem into a \emph{conjunctive normal form} (CNF) and we
      investigate its CNF encodings suitable for this purpose. An
      encoding differs from a CNF representation of a function in that
      it can use auxiliary variables. We are especially interested in
      propagation complete encodings which have the property that unit
      propagation is strong enough to enforce consistency on input
      variables. We show a lower bound on the number of clauses in any
      propagation complete encoding of the ``at most one'' constraint. The
      lower bound almost matches the size of the best known encodings.
      We also study an important case of 2-CNF encodings where we show
      a slightly better lower bound. The lower bound holds also for
      a related ``exactly one'' constraint.
   \end{abstract}

   \section{Introduction}

In this paper we study the properties of one of the most basic
cardinality constraints --- the ``at most one'' constraint on \(n\)
boolean variables which requires that at most one input variable is set
to \(1\). This constraint is widely used when translating a problem
into a \emph{conjunctive normal form} (CNF). Since the ``at most one''
constraint is anti-monotone, it has a unique minimal prime CNF representation
which requires \(\binom{n}{2}=\Theta(n^2)\) negative clauses, where \(n\) is the
number of input variables. However, there are CNF encodings of size $O(n)$
which use additional auxiliary variables. Several encodings for this
constraint were considered in the literature. Let us mention \emph{sequential
   encoding}~\cite{ccSIN_2005} which addresses also more general
cardinality constraints. The same encoding was also called \emph{ladder
   encoding} in~\cite{amoHOL_2013}, and it forms the smallest variant
of the \emph{commander-variable encodings}~\cite{amoKLI_2007}.
After a minor simplification, it
requires \(3n-6\) clauses and \(n-3\) auxiliary variables. Similar,
but not smaller encodings can also be obtained as special cases of
\emph{totalizers}~\cite{ccBAI_2003} and \emph{cardinality
   networks}~\cite{ccASI_2011}.
Currently the smallest known encoding is the \emph{product encoding}
introduced by Chen~\cite{amoCHE_2010}. It consists of \(2n + 4
   \sqrt{n} + O(\sqrt[4]{n})\) clauses and uses \(O(\sqrt{n})\)
auxiliary variables.
The sequential and the product encodings are described in
Section~\ref{ssec:sequential-enc} and
Section~\ref{sec:product-enc} with some modifications.
It is worth noting that the product encoding can be derived using
the monotone circuit of size $kn + o(n)$ for the function $T^n_k$
described in~\cite{D84} and in~\cite{W87}, if $k=2$.
Section~\ref{sec:monotone-circuits} provides more detail on this.

Other encodings introduced in the literature for the ``at most one''
constraint use more clauses than
either sequential or product encoding does. These include the \emph{binary
encoding}~\cite{ccHAI_2012,amoFRI_2005} and the \emph{bimander
encoding}~\cite{amoHOL_2013}.

All the encodings for the ``at most one'' constraint we have mentioned
are in the form of a 2-CNF formula which is
a CNF formula where all clauses consist of at most two literals. This restricted
structure guarantees that the encodings are propagation complete.
The notion of propagation
completeness was introduced by~\cite{BM12} as a generalization of unit
refutation completeness introduced by~\cite{V94}. We say that a
formula \(\phi\) is \emph{propagation complete} if for any set of
literals \(e_i,\; i \in I\), the following property holds: either
\(\phi\wedge\bigwedge_{i \in I} e_i\) is contradictory and this can
be detected by unit propagation, or unit propagation started with
\(\phi\wedge\bigwedge_{i \in I} e_i\) derives all literals \(g\)
that logically follow from this formula. It was shown
in~\cite{BBCGK13} that a prime 2-CNF formula is always propagation complete.
Since unit propagation is a standard tool used in state-of-the-art
SAT solvers~\cite{BHMW09}, this makes 2-CNF formulas as a part of 
a larger instance simple for them.

When encoding a constraint into a CNF formula, a weaker condition
than propagation completeness of the resulting formula is often
required. Namely, we require that unit propagation on the encoding
is strong enough to
enforce some kind of local consistency, for instance generalized arc
consistency (GAC), see for example~\cite{B07}. In this case we only
care about propagation completeness with respect to input variables
and not necessarily about behaviour on auxiliary variables. Later we
formalize this notion as \emph{propagation complete encoding} (PC encoding).
Let us note that this name was also used in~\cite{BHKM16} to denote an
encoding of a given constraint which is propagation complete with
respect to all variables including the auxiliary ones.

Chen~\cite{amoCHE_2010} conjectures that the product encoding is the
smallest possible PC encoding of the ``at most one'' constraint.
In this paper we provide support for the positive
answer to this conjecture. Our lower bound almost matches the size of
the product encoding. We show that any propagation complete
encoding of the ``at most one'' constraint on \(n\) variables requires
at least \(2n + \sqrt{n} - O(1)\) clauses. The lower bound actually
holds for a related constraint ``exactly one'' as well. We also consider the
important special case of 2-CNF encodings for which we achieve a
better lower bound, namely, any 2-CNF encoding of the ``at most one'' constraint
on \(n\) variables requires at least \(2n+2\sqrt{n}-O(1)\) clauses.

We should note that having a smaller encoding is not necessarily an advantage
when a SAT solver is about to be used. Adding auxiliary variables can
be costly because the SAT solver has to deal with them and
possibly use them for decisions. However, encodings using auxiliary variables
can be useful for constraints whose CNF representation is too large.
Moreover, the experimental results in~\cite{ML07} suggest that a SAT solver
can be modified to minimize the disadvantage of introducing auxiliary
variables. Another experimental evaluation of various
cardinality constraints and their encodings appears in~\cite{FG10}.
A propagation complete
encoding can also be used as a part of a general purpose CSP solver
where unit propagation can serve as a propagator of GAC,
see~\cite{B07}.

The paper is organized as follows. In Section~\ref{sec:results}, we
give the necessary definitions and formulate the main result
in Theorem~\ref{thm:lower-bound}. In particular, we introduce
the notion of a \penc{} that captures the common properties
of propagation complete encodings of the ``at most one''
and the ``exactly one'' constraints used for the lower bounds.
Moreover, we define a specific form of a \penc{} which we call
a regular form and formulate Theorem~\ref{thm:structure-general-CNF-2}
that is the basis of the proofs of
the lower bounds by considering separately the encodings in
the regular form and the encodings not in this form.
We also prove that Theorem~\ref{thm:structure-general-CNF-2} is sufficient for the lower bound $2n$ on
the size of the considered encodings.
In Section~\ref{sec:def}, we recall the known results and present some
auxiliary results we use in the rest of the paper.
In Section~\ref{sec:regular-form}, we prove the properties
of the encodings not in the regular form that imply
Theorem~\ref{thm:structure-general-CNF-2}.
Section~\ref{sec:gen-lb} contains the proof of a lower bound
$2n + \sqrt{n} - O(1)$ on the
size of any propagation complete encoding of the ``at most one''
and the ``exactly one'' constraints obtained by
analysis of the encodings in the regular form.
In Section~\ref{sec:quad-lb} we prove a lower bound
$2n + 2 \sqrt{n} - O(1)$
on the size of 2-CNF encodings of the ``at most one'' constraint
by a different analysis of the encodings in the regular form.
We close the paper with notes on
possible directions for
further research in Section~\ref{sec:further} and concluding
remarks in Section~\ref{sec:conclusion}.

A preliminary version of this paper appeared in~\cite{KSV17}. Due to
page limitations, several proofs were omitted or only sketched in the
conference version. In this version of the paper, we have included all
proofs and improved their readability. The lower bounds
 were slightly improved since the conference version as well.

   \section{Definitions and Results}
\label{sec:results} 

In this section we introduce the notions used throughout the paper,
state the main results and give an overview of their proof.
We use $\subset$ to denote strict inclusion.

\subsection{At-Most-One and Exactly-One Functions}

In this paper we are interested in two special cases of cardinality
constraints represented by ``at most one'' and ``exactly one'' functions.
These functions differ only on the zero input.

\begin{definition}
   For every $n \ge 1$, the function \(\amo_n(x_1, \dots, x_n)\) (\emph{at most one}) is defined as follows:
   Given an assignment \(\alpha:\{x_1, \ldots, x_n\} \to \{0,1\}\), the
   value \(\amo_n(\alpha)\) is \(1\) if and only if there is at most
   one index \(i\in\{1, \dots, n\}\) for which \(\alpha(x_i)=1\).
\end{definition}

\begin{definition}
   For every $n \ge 1$, the function \(\exone_n(x_1, \dots, x_n)\) (\emph{exactly one}) is defined as follows:
   Given an assignment \(\alpha:\{x_1, \ldots, x_n\} \to \{0, 1\}\), the value
   \(\exone_n(\alpha)\) is \(1\) if and only if there is exactly
   one index \(i\in\{1, \dots, n\}\) for which \(\alpha(x_i)=1\).
\end{definition}

We study propagation complete encodings of these two functions
using their common generalization called \penc{}
introduced in Definition~\ref{def:p-enc}.

\subsection{CNF Encoding}
\label{sub:cnf-enc} 

We work with formulas in conjunctive normal form (CNF formulas).
For a standard notation see e.g.~\cite{CH11}. Namely, a \emph{literal}
is a variable \(x\) (\emph{positive literal}), or its negation \(\neg
   x\) (\emph{negative literal}).
If $x$ is a variable, then let $\lit{x}=\{x, \neg x\}$.
If $\vek{z}$ is a vector of variables, then we denote by $\lit{\vek{z}}$
the union of $\lit{x}$ over $x \in \vek{z}$.
For simplicity, we write \(x\in\vek{z}\) if \(x\) is a variable that
occurs in \(\vek{z}\), so \(\vek{z}\) is considered as a set here,
although, the order of the variables in \(\vek{z}\) is important.
Given a literal \(g\), the term
\(\var{g}\) denotes the variable in the literal \(g\), that is,
\(\var{g}=x\) for \(g \in\lit{x}\). Given a set of literals \(C\),
\(\var{C}=\{\var{g} \mid g\in C\} \).

A \emph{clause} is a disjunction of
a set of literals which does not contain a complementary pair of
literals. A formula is in \emph{conjunctive normal form} (\emph{CNF})
if it is a conjunction of a set of clauses. In this paper, we consider only
formulas in conjunctive normal form and we often simply refer to
a \emph{formula}, by which we mean a CNF formula.
We treat clauses as sets of
literals and formulas as sets of clauses. In particular, the order
of the literals in a clause or clauses in a formula is not important and we
use common set relations and operations (set membership, inclusion, set difference, etc.)
on clauses and formulas.
The empty clause (the contradiction) is denoted \(\bot\) and
the empty formula (the tautology) is denoted \(\top\).

A \emph{unit clause} consists of a single literal. A \emph{binary}
clause consists of two literals. A CNF formula, each clause of which
contains at most \(k\) literals, is said to be a \(k\)-CNF formula.

A \emph{partial assignment} \(\partass\) of variables \(\vek{z}\) is
a subset of \(\lit{\vek{z}}\) that does not contain a complementary pair
of literals, so we have \(|\partass\cap\lit{x}|\le 1\) for each \(x\in\vek{z}\).
By \(\phi(\partass)\) we denote the
formula obtained from $\phi$ by the partial setting of the variables
defined by $\partass$.

A CNF formula \(\varphi(\vek{z})\) represents a boolean function \(f\)
on the variables in $\vek{z}$. We say that a clause \(C\) is an
\emph{implicate} of a formula \(\phi\) if any satisfying assignment
\(\alpha\) of \(\phi\) satisfies \(C\) as well, i.e.
\(\phi(\alpha)=1\) implies \(C(\alpha)=1\) for every assignment
\(\alpha\). We denote this property with \(\phi\models C\). We say
that \(C\) is a \emph{prime implicate} of \(\phi\) if none \(C'\subset
   C\) is an implicate of \(\phi\). Note that whether a clause \(C\)
is a (prime) implicate of \(\phi\) depends only on the function \(f\)
represented by \(\phi\) and we can therefore speak about implicates of
\(f\) as well. We say that CNF \(\phi\) is \emph{prime} if it consists
only of prime implicates of \(\phi\). By the \emph{size} of the
formula \(\varphi\) we mean the number of clauses in
\(\varphi\), it is denoted as \(|\varphi|\) which is consistent with
considering a CNF formula as a set of clauses.

In this paper we also consider encodings of boolean functions defined
as follows.

\begin{definition}[Encoding]
   \label{def:cnf-enc}
   Let \(f(\vek{x})\) be a boolean function on variables \(\vek{x}=(x_1, \dots,
      x_n)\). Let \(\phi(\vek{x},\vek{y})\) be a CNF formula
   on \(n+\ell\) variables, where \(\vek{y}=(y_1, \dots,
      y_\ell)\).
   We call \(\phi\) an \emph{encoding} of \(f\) if
   for every \(\alpha\in{\{0, 1\}}^{n}\) we have
   \begin{equation}
      \label{eq:enc-def}
      f(\alpha)=1\Longleftrightarrow(\exists
      \beta\in{\{0, 1\}}^{\ell})\, \phi(\alpha, \beta)=1 \text{.}
   \end{equation}
   The variables
   in \(\vek{x}\) and \(\vek{y}\) are called \emph{input variables} and
   \emph{auxiliary variables}, respectively.
\end{definition}

\subsection{Propagation Complete Encoding}
\label{sub:pc-enc} 

We are interested in encodings which are propagation
complete.  This notion relies on unit resolution which is a special
case of general resolution.
We say that two clauses \(C_1, C_2\) are resolvable, if there is
exactly one literal \(l\) such that \(l\in C_1\) and \(\neg l\in
   C_2\). The \emph{resolvent} of these clauses is then defined as
\(\resolvent{C_1}{C_2}=(C_1\cup C_2)\setminus\{l, \neg l\}\). If one
of \(C_1\) and \(C_2\) is a unit clause, we say that
\(\resolvent{C_1}{C_2}\) is derived by \emph{unit resolution} from
\(C_1\) and \(C_2\).
We say that a clause
\(C\) can be derived from \(\phi\) by \emph{unit resolution} (or
\emph{unit propagation}), if \(C\) can be derived from \(\phi\) by a
series of unit resolutions. We denote this fact with
\(\phi\vdash_1C\).

\begin{definition}[Propagation complete encoding]
   \label{def:pc-enc}
   Let \(f(\vek{x})\) be a boolean function on variables \(\vek{x}=(x_1, \dots,
      x_n)\). Let \(\phi(\vek{x},\vek{y})\) be a CNF formula
   on \(n+\ell\) variables, where \(\vek{y}=(y_1, \dots,
      y_\ell)\).
   We call \(\phi\) a \emph{propagation complete encoding} (PC encoding) of $f(\vek{x})$
if it is an encoding of \(f\) and for any \(g_1, \dots, g_p\in\lit{\vek{x}}\), \(p\ge 1\)
and for each \(h\in\lit{\vek{x}}\), such that
\begin{equation}
\label{eq:pc-enc-1}
f(\vek{x})\wedge\bigwedge_{i=1}^p g_i\models h
\end{equation}
we have
\begin{equation}
   \label{eq:pc-enc-2}
   \phi\wedge\bigwedge_{i=1}^p g_i\vdash_1
   h\hspace{1em}\text{or}\hspace{1em}
   \phi\wedge\bigwedge_{i=1}^p g_i\vdash_1 \bot\text{.}
\end{equation}
   If \(\phi\) is a prime CNF formula, we call it \emph{prime PC encoding}.
\end{definition}

Note that the definition of a propagation complete encoding is less
restrictive than requiring that formula \(\phi\) is propagation complete as
defined in~\cite{BM12}. The difference is that in a PC encoding we
only consider literals on input variables as assumptions and
consequences of unit propagation.
The definition of a propagation complete formula~\cite{BM12}
does not distinguish input and
auxiliary variables and the implication from~\eqref{eq:pc-enc-1}
to~\eqref{eq:pc-enc-2}
is required for all literals on all variables.

The following notation is used in the rest of the paper.
Let \(g_1, \dots, g_k\) be literals
on variables in \(\phi\).
Then \(\upclos[\phi]{g_1, \dots, g_k}\) denotes the set of
literals that can be derived by unit resolution from
\(\phi\wedge g_1\wedge\dots\wedge g_k\)
that is
\[\upclos[\phi]{g_1, \dots,
   g_k}=\left\{h \mid \phi\wedge\bigwedge_{i=1}^k g_i\vdash_1
   h\right\} \ .\]

\subsection{Propagation Complete Encodings of \(\amo_n\) and \(\exone_n\)}
\label{sub:pc-enc-amo} 

Propagation complete encodings of \(\amo_n\) and \(\exone_n\) share
two common properties which we capture under the notion of \penc{}.

\begin{definition}[\penc{}]
   \label{def:p-enc}
   Let \(\phi(\vek{x}, \vek{y})\) be a formula with
   \(\vek{x}=(x_1, \dots, x_n)\), \(\vek{y}=(y_1, \dots, y_\ell)\),
   \(n\geq 1\), \(\ell\geq 0\). We say that \(\phi\) is a \penc{} if
   it satisfies the following two conditions.
   \begin{amocond}
      \item \(\phi\wedge x_i\) is satisfiable for each \(i\in\{1,
               \dots, n\}\),\label{cond:pc-amo-1}
      \item \(\phi\wedge x_i\vdash_1\neg x_j\) holds for each \(i,
            j\in\{1, \dots, n\}\) with \(i \neq j\),\label{cond:pc-amo-2}
   \end{amocond}
\end{definition}

One can easily verify that being a \penc{} is a necessary condition
for a formula to be a propagation complete encoding of \(\amo_n\) or \(\exone_n\).

\begin{lemma}
   \label{lem:p-enc}
   Let \(\varphi(\vek{x}, \vek{y})\) be a PC encoding of \(\amo_n\)
   or \(\exone_n\). Then \(\varphi\) is a \penc{}.
\end{lemma}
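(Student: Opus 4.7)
The plan is to verify the two clauses~\ref{cond:pc-amo-1} and~\ref{cond:pc-amo-2} of Definition~\ref{def:p-enc} directly, using only the semantics of $\amo_n$ and $\exone_n$, the definition of an encoding, and the PC-encoding property from Definition~\ref{def:pc-enc}. The argument is short because the two functions both accept precisely the weight-at-most-one (resp.\ weight-one) assignments, and both are inconsistent with having two input variables simultaneously set to $1$.

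For~\ref{cond:pc-amo-1}, I would exhibit a satisfying assignment of $\phi \wedge x_i$. Fix $i$ and let $\alpha$ be the input assignment with $\alpha(x_i)=1$ and $\alpha(x_k)=0$ for $k\neq i$. This $\alpha$ has exactly one input variable set to $1$, hence $\amo_n(\alpha)=\exone_n(\alpha)=1$. By the encoding equivalence~\eqref{eq:enc-def}, there exists an assignment $\beta$ of the auxiliary variables with $\phi(\alpha,\beta)=1$. This joint assignment satisfies $\phi\wedge x_i$.

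For~\ref{cond:pc-amo-2}, fix $i\neq j$. For $f\in\{\amo_n,\exone_n\}$, any assignment $\alpha$ with $f(\alpha)=1$ and $\alpha(x_i)=1$ must have $\alpha(x_j)=0$, because $\alpha$ has at most one input set to $1$. Therefore $f(\vek{x})\wedge x_i\models \neg x_j$, which is exactly an instance of~\eqref{eq:pc-enc-1} with $p=1$, $g_1=x_i$ and $h=\neg x_j$. The PC-encoding assumption then gives the disjunction in~\eqref{eq:pc-enc-2}: either $\phi\wedge x_i\vdash_1\neg x_j$ or $\phi\wedge x_i\vdash_1\bot$. The second alternative is ruled out by soundness of unit resolution together with~\ref{cond:pc-amo-1} just proved, since $\phi\wedge x_i$ is satisfiable and hence cannot derive the empty clause. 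Thus $\phi\wedge x_i\vdash_1\neg x_j$ as required.

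There is no substantial obstacle: the only subtlety worth being explicit about is the dependency between the two items, namely that establishing~\ref{cond:pc-amo-1} before~\ref{cond:pc-amo-2} is what allows us to eliminate the contradiction alternative in~\eqref{eq:pc-enc-2}. Carrying out the steps in this order yields the lemma.
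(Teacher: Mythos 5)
Your proof is correct: establishing~\ref{cond:pc-amo-1} from the encoding equivalence~\eqref{eq:enc-def} applied to the weight-one assignment, and then using satisfiability of $\phi\wedge x_i$ to rule out the contradiction branch of~\eqref{eq:pc-enc-2} when deriving~\ref{cond:pc-amo-2}, is exactly the straightforward verification the paper leaves to the reader (it states the lemma without proof as easily verifiable). No gaps; the ordering of the two items, which you flag explicitly, is the only point of care and you handle it properly.
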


It turns out that conditions~\ref{cond:pc-amo-1}
and~\ref{cond:pc-amo-2} are enough to show the lower bounds
on the size of PC encodings of \(\amo_n\) and \(\exone_n\)
and these lower bounds are
derived in the following sections by proving a lower bound on
the size of \pencs{}.

Every \penc{} with \(n\) input variables is an encoding
of either \(\amo_n\) or \(\exone_n\), however, if it is
an encoding of $\exone_n$, it may
not be propagation complete. In particular, this means
that the converse of Lemma~\ref{lem:p-enc} is not true,
see Section~\ref{sec:relations-p-enc} for more detail.

\subsection{The Main Result}
\label{sub:main-result} 

Let us introduce the following notation.

\begin{definition} \label{def:size-functions}
We denote the minimum size of a \penc{} with $n$ input variables
by \(\pencS{n}\) and the minimum size of a 2-CNF \penc{} with $n$
input variables by \(\pencQS{n}\).
\end{definition}

We pay special attention to 2-CNF encodings of \(\amo_n\).
The minimum size of these encodings is $\pencQS{n}$ as explained in
Section~\ref{sec:quad-lb}.
One can prove by contradiction that there are no 2-CNF encodings
of \(\exone_n\) of \(n\geq 3\) input variables as follows. Given an encoding
\(\varphi(\vek{x}, \vek{y})\) of \(\exone_n\),
we can eliminate an auxiliary variable $y$ from \(\varphi\)
by removing clauses containing $y$ or $\neg y$ and replacing
them with the resolvents of all pairs of these clauses resolvable using the
variable $y$. We call this step \emph{DP-elimination} of $y$,
since its repetition for all variables is one of the parts of
Davis-Putnam algorithm~\cite{DP60} (see also~\cite{BHMW09}).
After eliminating all auxiliary
variables, the remaining formula is
a 2-CNF representation of \(\exone_n\), since 2-CNF formulas
are closed under resolution. This is a contradiction, since
\(x_1\lor\cdots\lor x_n\) is a prime implicate of \(\exone_n\).

We are now ready to state the main result of this paper.

\begin{theorem}
   \label{thm:lower-bound}
Every PC encoding of $\amo_n$ or $\exone_n$ has size at least
$\pencS{n}$, the smallest size of a 2-CNF encoding of $\amo_n$
is equal to $\pencQS{n}$, and
   \begin{enumerate}
      \item\label{enum:lower-bound:1} For \(3\leq n \leq 8\) we have \(\pencS{n}=\pencQS{n}=3n-6\).
      \item\label{enum:lower-bound:2} For \(n \geq 9\) we have
         \(\pencS{n}\geq 2n+\sqrt{n}-2\).
      \item\label{enum:lower-bound:3} For \(n\geq 9\) we have
         \(\pencQS{n} \geq 2n+2\sqrt{n}-3\).
   \end{enumerate}
\end{theorem}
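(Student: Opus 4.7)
The plan is to deduce every clause of the theorem from lower bounds on $\pencS{n}$ and $\pencQS{n}$, and then to prove those bounds by combining the structural dichotomy promised by Theorem~\ref{thm:structure-general-CNF-2} with a direct analysis of \pencs{} in regular form. The first assertion, that every PC encoding of $\amo_n$ or $\exone_n$ has size at least $\pencS{n}$, is immediate from Lemma~\ref{lem:p-enc}: any such encoding is a \penc{}, hence its size is lower bounded by the minimum size of a \penc{}. For the equality $\pencQS{n}$ with the minimum size of a 2-CNF encoding of $\amo_n$, one direction (the inequality $\geq \pencQS{n}$) will use the same argument once it is shown that a 2-CNF encoding of $\amo_n$ satisfies (P1) and (P2); condition (P1) is automatic, and (P2) follows because a 2-CNF encoding of $\amo_n$ may, without increasing its size, be assumed prime, so the cited fact that prime 2-CNF formulas are propagation complete forces $\phi\wedge x_i\vdash_1\neg x_j$ from $\phi\wedge x_i\models\neg x_j$. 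The reverse direction follows by exhibiting a 2-CNF \penc{} of size $\pencQS{n}$ that is in fact an $\amo_n$-encoding; this will be a byproduct of the regular-form analysis in Section~\ref{sec:quad-lb}.

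Second, I would bound $\pencS{n}$ and $\pencQS{n}$ from below by splitting \pencs{} according to whether they are in the regular form defined in Section~\ref{sec:results}. By Theorem~\ref{thm:structure-general-CNF-2}, any \penc{} not in regular form has size at least $3n-6$, which already dominates the claimed bounds in part~\ref{enum:lower-bound:1} and is comfortably above $2n+\sqrt{n}-2$ and $2n+2\sqrt{n}-3$ in the stated range $n\geq 9$. It therefore suffices to bound the size of \pencs{} in regular form. For these, condition (P2) forces, for every ordered pair $i\neq j$, a unit-propagation chain from $x_i$ to $\neg x_j$; counting the clauses that must carry these chains and using (P1) to guarantee the propagations do not collapse, one first obtains the baseline $2n$ and then extracts the additive $\sqrt{n}-O(1)$ correction in the general CNF setting (Section~\ref{sec:gen-lb}) and the sharper $2\sqrt{n}-O(1)$ correction in the 2-CNF setting (Section~\ref{sec:quad-lb}). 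Part~\ref{enum:lower-bound:1} is handled by the same regular-form argument specialized to small $n$, where the lower bound $3n-6$ matches the sequential/ladder upper bound described in the introduction.

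The main obstacle will be the combinatorial analysis of regular-form \pencs{}. The easy counting already gives $2n$, but pushing past $2n$ requires identifying a structural bottleneck: roughly, among the $n$ propagation fan-outs $\upclos[\phi]{x_i}$, only a few can share the same auxiliary-variable ``interface'' without contradiction with (P1), so the number of distinct interfaces scales with $\sqrt{n}$, each demanding its own clause contribution. The 2-CNF case is slightly easier in spirit but gives the better $2\sqrt{n}$ term because each binary clause can propagate at most one literal per assumption, which rigidifies the interfaces and doubles the extra cost. Making these intuitions precise—especially identifying the correct invariant that measures the ``branching'' of propagation in an arbitrary regular \penc{}—is what I expect to be the technically heaviest part, and it is carried out in Sections~\ref{sec:gen-lb} and~\ref{sec:quad-lb}.
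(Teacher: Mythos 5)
Your reduction to the regular case misuses Theorem~\ref{thm:structure-general-CNF-2}. That theorem does not say that a \penc{} outside regular form has size at least $3n-6$; it only gives a trichotomy: either there is a \penc{} on the same $n$ inputs with at least one clause fewer, or a \penc{} on $n-1$ inputs with at least three clauses fewer, or the formula is regular. Turning this into a lower bound requires taking a minimum-size \emph{prime} \penc{} (so the first alternative is excluded by minimality, after Lemma~\ref{lem:prime-encoding}) and running an induction on $n$, in which the non-regular case only yields $|\phi|\ge\pencS{n-1}+3$ (resp.\ $\pencQS{n-1}+3$), and the regular case is bounded directly; one also needs base cases ($\pencS{3}=3$ via Lemma~\ref{lem:base-cases}, and for the 2-CNF bound the values at $n=9,10$), none of which appear in your outline. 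Worse, the statement you substitute for the induction is false: for infinitely many large $n$ there are \pencs{} \emph{not} in regular form of size only $2n+O(\sqrt{n})$, far below $3n-6$ --- for instance, applying one step of the sequential recurrence~\eqref{eq:construct-by-reduce} on top of the product encoding. So the claim that non-regularity forces $3n-6$ clauses cannot replace the inductive argument, and without it your plan proves nothing for non-regular encodings when $n\ge 9$.

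Two further points. For part~\ref{enum:lower-bound:3}, even within the regular case a direct counting of ``interfaces'' is not enough: the paper must first eliminate regular encodings in which two sets $\PB{r}=\PB{s}$ coincide or three sets $\PB{r},\PB{s},\PB{t}$ form a triangle (Propositions~\ref{lem:Equal-PB-Cont} and~\ref{lem:Triangles}), each case again feeding the induction with a drop to $n-2$ inputs and $6$ or $7$ clauses; only in the remaining restricted regular case does a triangle-free (Mantel-type) count give $|\vek{y}|\ge 2\sqrt{n}$ and hence the $2\sqrt{n}$ term, and this analysis itself relies on 2-CNF-specific facts (reversibility of unit propagation, absence of positive occurrences of input variables). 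Finally, your reverse inequality for the second assertion (minimum 2-CNF encoding of $\amo_n$ is at most $\pencQS{n}$) is not a byproduct of the regular-form analysis: it needs the observation that no 2-CNF formula encodes $\exone_n$ for $n\ge 3$ (via DP-elimination of auxiliary variables), so that a minimum 2-CNF \penc{} is, by Lemma~\ref{lem:p-enc-is-enc}, already an encoding of $\amo_n$; this is how Lemma~\ref{lem:minimum-2-CNF-AMO} closes that direction.
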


The lower bound for $3\leq n \leq 8$ is tight, since for every $n \ge 3$,
there is a 2-CNF encoding of $\amo_n$ of size $3n-6$ and it is a \penc{}.
The lower bound for $n \ge 9$ is almost tight, since
for every sufficiently large $n$, the product encoding~\cite{amoCHE_2010}
of $\amo_n$ has size $2n + 4 \sqrt{n} + O(n^{1/4})$ and is
a 2-CNF \penc{}.
Moreover, in Section~\ref{sec:relations-p-enc}, we prove
that $\pencS{n}$ is a close estimate of the minimum size of PC encodings of the
functions $\amo_n$ and $\exone_n$. Namely, for each of these functions,
there is a PC encoding of size at most $\pencS{n}+1$.

The first part of Theorem~\ref{thm:lower-bound} follows from
Lemma~\ref{lem:p-enc} and Lemma~\ref{lem:minimum-2-CNF-AMO}.
Our proof of parts~\ref{enum:lower-bound:2}
and~\ref{enum:lower-bound:3} relies on the notion of \pencs{} in
regular form we define below.
Let \(\varphi(\vek{x}, \vek{y})\) be a \penc{} with input variables
\(x_1, \ldots, x_n\).
Given a variable \(x_i\), \(i=1, \ldots, n\),
unit propagation on formula \(\phi\wedge x_i\) starts with clauses
which contain the negative literal \(\neg x_i\). It is
important to distinguish different types of \pencs{} according
to the structure of these clauses.
For each \(i=1, \dots, n\) let us denote
\begin{equation}
   \label{eq:define-sets-Q}
   \QA{i}=\{C\in\phi\;|\;\neg x_i\in C\}\text{.}
\end{equation}

\begin{definition}
   \label{def:regular-form}
   Let \(\varphi(\vek{x}, \vek{y})\) be a \penc{} with input variables
   \(x_1, \ldots, x_n\). We say that
   \(\phi(\vek{x}, \vek{y})\) is in \emph{regular form} if
   the following holds for each \(i\in\{1,\ldots,n\}\):
   \begin{regcond}
   \item\label{regular-cond1}\(\left|\QA[\phi]{i}\right|=2\).
   \item\label{regular-cond2}Clauses in \(\QA[\phi]{i}\) contain no input variables other than \(x_i\).
   \item\label{regular-cond3}Clauses in \(\QA[\phi]{i}\) are binary.
   \end{regcond}
\end{definition}

It is interesting to note that the construction of the product
encoding introduced by Chen~\cite{amoCHE_2010} leads to an encoding in
regular form and this form is probably the best for most values of $n$.
On the other hand, there are infinitely many rare values of $n$,
for which using a \penc{} not in regular
form allows to slightly reduce the size. This is used to describe the
product encoding in Section~\ref{sec:product-enc}.

The following theorem is used later to reduce the analysis of the
minimum size \pencs{} to the analysis of \pencs{} in regular form
and an induction argument. The theorem will be used for both
general CNF and 2-CNF formulas. Since the minimum size of a \penc{} can
be different in these two classes of formulas, we do not use the
assumption that $\phi(\vek{x}, \vek{y})$ is a minimum size \penc{}
and include condition (a) that
has the same effect and can be used for both general CNF and 2-CNF
formulas.

\begin{theorem}
   \label{thm:structure-general-CNF-2}
   If $\phi(\vek{x}, \vek{y})$ is a prime \penc{} with input variables
   \(x_1, \ldots, x_n\), $n \ge 4$,
   then at least one of the following holds:
   \begin{enumerate}[label={(\alph*)}] 
      \item\label{enum:structure-general-CNF-2:1}
         There is a \penc{} $\phi'$ with \(n\) input variables, such
         that $|\phi| \ge |\phi'| + 1$.
      \item\label{enum:structure-general-CNF-2:2}
         There is a \penc{} $\phi'$ with \(n-1\) input variables, such
         that $|\phi| \ge |\phi'| + 3$.
      \item\label{enum:structure-general-CNF-2:3}
         Formula $\phi$ is in regular form.
   \end{enumerate}
   Moreover if \(\phi\) is a
   2-CNF formula, then so is \(\phi'\) in
   cases~\ref{enum:structure-general-CNF-2:1}
   and~\ref{enum:structure-general-CNF-2:2}.
\end{theorem}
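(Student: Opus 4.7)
The plan is to prove the contrapositive: assume $\phi$ is a prime \penc{} with $n \geq 4$ input variables that is not in regular form, and construct a \penc{} $\phi'$ witnessing~\ref{enum:structure-general-CNF-2:1} or~\ref{enum:structure-general-CNF-2:2}. By assumption some input variable $x_i$ violates at least one of (R1), (R2), (R3). A basic observation, used throughout, is that each $Q_i$ must contain at least one binary clause $\{\neg x_i, l\}$: otherwise setting $x_i$ cannot trigger any unit propagation, contradicting (P2) since $n \geq 4 > 1$. I would leverage this together with the primality of $\phi$ in every case.

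I would then case-split on which condition fails. \textbf{Case $|Q_i|=1$:} the only clause in $Q_i$ is a binary clause $\{\neg x_i, l\}$, and every propagation $\phi \wedge x_i \vdash_1 \neg x_j$ factors through $l$; symmetrically, every derivation of $\neg x_i$ from another $x_j$ must end by resolving with this clause, so $\phi \wedge x_j \vdash_1 \neg l$ for all $j\ne i$. Hence $x_i$ is unit-propagation-equivalent to $\neg l$ on input assumptions, and I would construct $\phi'$ on $n-1$ input variables by dropping $x_i$ and substituting $\neg l$ for it (the clause $\{\neg x_i, l\}$ then becomes tautological and is discarded). Primality of $\phi$ forces at least two additional clauses containing $x_i$ to collapse into existing ones or become non-prime, giving a saving of $3$ and witnessing~\ref{enum:structure-general-CNF-2:2}. \textbf{Case (R2) fails:} some $C\in Q_i$ contains a literal on another input variable $x_j$. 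If that literal is $\neg x_j$, then $\{\neg x_i,\neg x_j\}\subseteq C$ is already an implicate of the encoded function, so primality forces $C=\{\neg x_i,\neg x_j\}$; the same clause then lies in $Q_j$, and a comparison of the $Q_i,Q_j$ structure yields a removable redundancy, giving~\ref{enum:structure-general-CNF-2:1}. If instead the literal is $x_j$, applying (P2) to $x_j$ shows that $C$ is implied by the other clauses after deletion of $x_j$, again contradicting primality and yielding~\ref{enum:structure-general-CNF-2:1}. \textbf{Case (R1) or (R3) fails with (R2) intact:} every clause in $Q_i$ uses only $\neg x_i$ and auxiliary literals, but either $|Q_i|\ge 3$ or some clause has size $\ge 3$. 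Using the guaranteed binary clause $\{\neg x_i,l\}\in Q_i$ and the fact that the chain through $l$ already derives $\neg x_j$ for every $j\ne i$, I would argue that the surplus clause is not needed to maintain (P1) or (P2) on input variables, so removing it leaves a \penc{} with one fewer clause, witnessing~\ref{enum:structure-general-CNF-2:1}.

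The main obstacle in each case is verifying that $\phi'$ is still a \penc{}: conditions (P1) and (P2) must be re-checked for every remaining input variable, which requires careful tracking of how unit-propagation chains in $\phi$ survive in $\phi'$. The most delicate case is $|Q_i|=1$, where the substitution of $\neg l$ for $x_i$ can disturb chains that passed through $l$ or $\neg l$, and where extracting the full saving of three clauses (rather than just one or two) relies on a primality argument applied to the clauses containing $x_i$ positively. For the 2-CNF addendum, each construction above either deletes a clause, replaces it by a proper sub-clause, or renames a literal, so the 2-CNF property is preserved automatically.
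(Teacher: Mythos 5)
Your overall plan---a case split on which of~\ref{regular-cond1}--\ref{regular-cond3} fails, producing in each case a smaller \penc{}---is the same shape as the paper's argument, but two of your three reductions have genuine gaps. Write $Q_i$ for the set of clauses of $\phi$ containing $\neg x_i$. In your case $|Q_i|=1$, with unique clause $\neg x_i \vee l$, you eliminate the input variable $x_i$ by substituting a literal on $\var{l}$ for it and claim a saving of three clauses because ``primality forces at least two additional clauses containing $x_i$ to collapse''. Nothing forces $x_i$ to occur anywhere else at all: in the formula~\eqref{eq:strange-PC-enc-AMO} the variable $x_1$ has a single occurrence and no positive occurrence, so the saving can be exactly one clause, and one clause with $n-1$ input variables witnesses neither~\ref{enum:structure-general-CNF-2:1} nor~\ref{enum:structure-general-CNF-2:2}. (The sign is also wrong: since $\phi\wedge x_i\vdash_1 l$, the identification must be of $x_i$ with $l$; substituting $\neg l$ turns $\neg x_i\vee l$ into the unit clause $l$, while $\phi\wedge x_j\vdash_1\neg l$ for $j\neq i$, so the resulting formula violates~\ref{cond:pc-amo-1}.) The paper's Proposition~\ref{lem:single-negative} goes in the opposite direction: it substitutes the auxiliary literal by $x_i$, keeps all $n$ input variables, and gains exactly one clause, which is case~\ref{enum:structure-general-CNF-2:1}.

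In your last case (conditions~\ref{regular-cond1} or~\ref{regular-cond3} fail, \ref{regular-cond2} holds) you delete the ``surplus'' clause of $Q_i$ and assert it is not needed for~\ref{cond:pc-amo-1} or~\ref{cond:pc-amo-2}. Deletion is harmless for~\ref{cond:pc-amo-1} but not for~\ref{cond:pc-amo-2}: for $j\neq i$ the literal $\neg x_i$ itself must be derived from $\phi\wedge x_j$, and only clauses of $Q_i$ can produce it; the derivation may go through the long clause $C_2=\neg x_i\vee z_1\vee\cdots\vee z_\ell$ after all $\neg z_r$ have been derived, and not through the binary clause $C_1=\neg x_i\vee y$, because $\neg y$ need not be derivable from $x_j$. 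Primality of $C_2$ only yields $\phi\wedge x_i\not\vdash_1\neg z_r$, which says nothing about propagation started from other inputs. This is precisely why the paper does not delete $C_2$ but replaces it by $C_3=\neg y\vee z_1\vee\cdots\vee z_\ell$ (size unchanged, derivations of $\neg x_i$ rerouted through $C_3$ and then $C_1$), proves the result is again a \penc{}, and only then harvests one clause via Proposition~\ref{lem:single-negative}; and when $|Q_i|\ge 3$ it does not remove a clause but sets $x_i=0$, losing one input variable and at least three clauses (case~\ref{enum:structure-general-CNF-2:2}, Proposition~\ref{prop:number-neg-x-general-CNF}). Finally, your treatment of a failing~\ref{regular-cond2} (``a comparison of the $Q_i,Q_j$ structure yields a removable redundancy'') is not an argument; the paper's route is Proposition~\ref{prop:one-pure-input-clause}: use Lemma~\ref{lem:two-pure-input-clauses} and Lemma~\ref{lem:no-pos-and-neg-x} to see that the companion clauses of $\neg x_i\vee\neg x_j$ contain only auxiliary literals, fix $x_i=0$ (saving two clauses), observe that $\neg x_j$ then occurs only once, and apply Proposition~\ref{lem:single-negative} once more for a total saving of three.
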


We give a proof of Theorem~\ref{thm:structure-general-CNF-2}
in Section~\ref{sec:regular-form}. This theorem allows the following
approach to proving a lower bound. If a given \penc{} \(\varphi(\vek{x}, \vek{y})\)
is not in regular form, we use induction on \(n\), and if it is
in regular form, we prove a lower bound directly.
Although we combine Theorem~\ref{thm:structure-general-CNF-2} later
with additional arguments to prove stronger lower bounds,
the following simple corollary of this theorem
already gives the main term of the lower bound.

\begin{corollary}
   \label{cor:two-n}
   Let \(\varphi(\vek{x}, \vek{y})\) be a \penc{} with input
   variables \(x_1, \ldots, x_n\), \(n\geq 6\). Then \(\varphi\)
   consists of at least \(\pencS{n} \ge 2n\) clauses.
\end{corollary}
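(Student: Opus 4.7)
The plan is to prove the bound $\pencS{n} \ge 2n$ by induction on $n$, with the base case $n = 6$ supplied by part~\ref{enum:lower-bound:1} of Theorem~\ref{thm:lower-bound} (where $\pencS{6} = 3 \cdot 6 - 6 = 12 = 2 \cdot 6$), and the inductive step for $n \ge 7$ handled by Theorem~\ref{thm:structure-general-CNF-2}. Given any \penc{} $\phi$ on $n$ inputs realizing the minimum $\pencS{n}$, I would first argue that we may assume $\phi$ is prime: replacing each clause by a prime sub-implicate of $\phi$ and deleting any clause implied by the rest produces a formula that represents the same function and on which unit propagation is no weaker, so conditions~\ref{cond:pc-amo-1} and~\ref{cond:pc-amo-2} are preserved; the resulting prime \penc{} has size at most $\pencS{n}$ and hence exactly $\pencS{n}$ by minimality.

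I would then invoke Theorem~\ref{thm:structure-general-CNF-2} on this prime $\phi$ (noting $n \ge 7 > 4$) and inspect the three alternatives. Case~\ref{enum:structure-general-CNF-2:1} would produce a \penc{} on $n$ inputs of size at most $\pencS{n} - 1$, contradicting the definition of $\pencS{n}$, so it cannot arise. In case~\ref{enum:structure-general-CNF-2:2} I would obtain a \penc{} on $n - 1$ inputs of size at most $\pencS{n} - 3$, from which the inductive hypothesis yields
\[
    \pencS{n} \;\ge\; \pencS{n-1} + 3 \;\ge\; 2(n-1) + 3 \;=\; 2n + 1 \;>\; 2n.
\]

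In the remaining case~\ref{enum:structure-general-CNF-2:3}, where $\phi$ is in regular form, I would count clauses by decomposing $\phi$ along the sets $\QA{i}$. Condition~\ref{regular-cond1} gives $|\QA{i}| = 2$ for every $i$, while condition~\ref{regular-cond2} forces the only input literal in any member of $\QA{i}$ to be $\neg x_i$; consequently no clause can lie simultaneously in $\QA{i}$ and $\QA{j}$ for $i \ne j$, so these sets are pairwise disjoint subfamilies of $\phi$ and $|\phi| \ge \sum_{i=1}^n |\QA{i}| = 2n$. The only step requiring any real thought is this disjointness argument in the regular case; everything else is a clean bookkeeping application of the structure theorem, so the main obstacle has effectively been packaged into Theorem~\ref{thm:structure-general-CNF-2} itself.
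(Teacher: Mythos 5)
Your route is essentially the paper's own: pass to a prime \penc{} of minimum size, apply Theorem~\ref{thm:structure-general-CNF-2}, rule out case~\ref{enum:structure-general-CNF-2:1} by minimality, induct through case~\ref{enum:structure-general-CNF-2:2}, and in the regular case count the sets \(\QA{i}\), which are pairwise disjoint two-element subsets of \(\phi\) by~\ref{regular-cond1} and~\ref{regular-cond2} --- your disjointness argument is exactly what the paper compresses into ``by definition''. Two caveats. First, your primality reduction adds a step that is both unnecessary and wrongly justified: deleting a clause that is implied by the rest can strictly weaken unit propagation (propagation is a property of the clause set, not of the function it represents), so condition~\ref{cond:pc-amo-2} need not survive such a deletion. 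Primality in this paper only requires that every clause be a prime implicate, and replacing each clause by a prime sub-implicate (Lemma~\ref{lem:prime-encoding}) preserves~\ref{cond:pc-amo-1} and~\ref{cond:pc-amo-2} without increasing the size, which is all Theorem~\ref{thm:structure-general-CNF-2} needs; simply drop the deletion.

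Second, your base case imports part~\ref{enum:lower-bound:1} of Theorem~\ref{thm:lower-bound}, i.e.\ \(\pencS{6}=12\). This is not circular, since the proof of that part (Theorem~\ref{thm:general-lb-3}) nowhere uses the corollary, but it leans on the later and heavier analysis, whereas the corollary's point in the paper is that Theorem~\ref{thm:structure-general-CNF-2} alone already yields the \(2n\) term. The paper avoids this by running the induction from \(n=3\) with the single combined statement \(|\phi|\ge\min(2n,3n-6)\): the base \(\pencS{3}\ge 3\) comes from Lemma~\ref{lem:base-cases}, case~\ref{enum:structure-general-CNF-2:2} loses exactly the three clauses that \(\min(2n,3n-6)\) drops when \(n\) decreases by one, and the regular case gives \(2n\ge\min(2n,3n-6)\) for every \(n\). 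Adopting that bookkeeping makes your argument self-contained; otherwise your proposal is correct as an argument once the deletion step is removed.
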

\begin{proof}
Assume that $\phi$ is a minimum size \penc{} for $n \ge 3$ input variables.
Without loss of generality, we can assume that it is
a prime \penc{} (see also Lemma~\ref{lem:prime-encoding} below).
We prove by induction on \(n\) that $|\phi| \ge \min(2n, 3n-6)$.
For $n \ge 6$, this implies the stated lower bound.
For \(n=3\), one can check
(see also Lemma~\ref{lem:base-cases} below) that
   \(\varphi\) must contain at least three clauses, thus
   \(|\varphi|\geq 3 = \min(2n, 3n-6)\). Now let us assume that \(n\geq 4\).
   Since \(\varphi\) is of minimum size,
no \penc{} with $n$ input variables with fewer clauses exists and
   item~\ref{enum:structure-general-CNF-2:1} of
   Theorem~\ref{thm:structure-general-CNF-2} does not apply. If
   \(\varphi\) is not in regular form then by
   item~\ref{enum:structure-general-CNF-2:2}
of Theorem~\ref{thm:structure-general-CNF-2},
there is a
   \penc{} \(\varphi'\) with \(n-1\) input variables such that
   \(|\varphi|\geq|\varphi'|+3\). By induction hypothesis
   \(|\varphi'|\geq \min(2(n-1), 3(n-1)-6) \ge \min(2n, 3n-6) - 3\), thus
   \begin{equation*}
      |\varphi|\geq |\varphi'|+3 \ge \min(2n, 3n-6)\text{.}
   \end{equation*}
If \(\varphi\) is in regular form, then we have
$|\varphi| \ge 2n \ge \min(2n, 3n-6)$, since by definition,
the union of $\QA{i}$, $i=1,\ldots,n$, contains $2n$ clauses.
\end{proof}

In order to prove the lower bounds presented in
Theorem~\ref{thm:lower-bound}, we perform a more careful analysis of
\pencs{} in regular form. In particular, in Section~\ref{sec:gen-lb}
we show the lower bound on \(\pencS{n}\) and in
Section~\ref{sec:quad-lb} we show the lower bound on \(\pencQS{n}\).

   \section{Known and Auxiliary Results}
\label{sec:def} 

In this section we state known and preliminary results used throughout
the paper. We start by recalling some of the known good encodings
of $\amo_n$ with some modifications.

\subsection{Sequential Encoding}
\label{ssec:sequential-enc}

Let us present a variant of
the \emph{sequential encoding}~\cite{ccSIN_2005}, which addresses also more general
cardinality constraints. This construction has also been called
\emph{ladder encoding} in~\cite{amoHOL_2013}. The following recurrence
describes the sequential encoding $\phi^s_n$ of $\amo_n$ with a minor
simplification which reduces its size to $3n-6$ and the number
of auxiliary variables to $n-3$. The base case is
$$
\phi^s_3(x_1,x_2,x_3) = (\neg x_1 \vee \neg x_2) \wedge
(\neg x_1 \vee \neg x_3) \wedge (\neg x_2 \vee \neg x_3)
$$
and for each $n > 3$, let
$$
\phi^s_n(x_1, \ldots, x_n) =
(\neg x_1 \vee \neg x_2) \wedge (\neg x_1 \vee y) \wedge (\neg x_2 \vee y) \wedge
\phi^s_{n-1}(y, x_3, \ldots, x_n)
$$
where $y$ is an auxiliary variable not used in $\phi^s_{n-1}$.
By induction on $n$, one can verify that $\phi^s_n$ is an encoding of
\(\amo_n\) with $n-3$ auxiliary variables and of size $|\phi^s_n|=3n-6$.
Since it is a prime 2-CNF\@, it is propagation
complete, see~\cite{BBCGK13}. Hence, we have the following.

\begin{lemma} \label{lem:simple-upper-bound}
   For every $n \ge 3$, there is a 2-CNF PC encoding of \(\amo_n\) of
   size \(3n-6\).
\end{lemma}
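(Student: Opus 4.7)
The plan is to verify directly that the family of 2-CNF formulas $\phi^s_n$ introduced just above the lemma has the three required properties: it is a CNF encoding of $\amo_n$, it has size exactly $3n-6$, and it is propagation complete. I will proceed by induction on $n$, with the base case $n=3$ being immediate since $\phi^s_3$ is the canonical prime CNF representation of $\amo_3$ and has $\binom{3}{2}=3=3n-6$ clauses.

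For the inductive step, assume the claim holds for $n-1$. To check the encoding property, fix an assignment $\alpha$ of $x_1,\ldots,x_n$. If $\amo_n(\alpha)=1$, then at most one $x_i$ is set to $1$; I would extend $\alpha$ by defining $y = x_1 \vee x_2$ and, recursively, by picking $\beta$ satisfying $\phi^s_{n-1}(y,x_3,\ldots,x_n)$, which exists because $\amo_{n-1}(y,x_3,\ldots,x_n)(\alpha)=1$. The three added clauses $(\neg x_1 \vee \neg x_2),(\neg x_1 \vee y),(\neg x_2 \vee y)$ are then satisfied by construction of $y$. Conversely, if $\alpha$ (with some extension $\beta$) satisfies $\phi^s_n$, then the first clause forbids both $x_1$ and $x_2$ being $1$, the next two force $y \ge x_1 \vee x_2$, and the inductive hypothesis applied to $\phi^s_{n-1}(y,x_3,\ldots,x_n)$ gives $\amo_{n-1}(y,x_3,\ldots,x_n)=1$; combined this yields $\amo_n(x_1,\ldots,x_n)=1$. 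The size count is immediate: $|\phi^s_n| = 3 + |\phi^s_{n-1}| = 3 + 3(n-1)-6 = 3n-6$.

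For propagation completeness, I would simply invoke the cited result of~\cite{BBCGK13} that every prime 2-CNF formula is propagation complete (as a formula, hence also as an encoding, since PC encoding is a weaker condition restricted to input variables). So the remaining step is to check that $\phi^s_n$ is prime. Each of its binary clauses asserts a nontrivial implication between two distinct literals; dropping either literal turns it into a unit clause that rules out assignments consistent with the function. Formally, for each clause $C$ and each literal $\ell\in C$, I would exhibit an assignment (of inputs and auxiliaries) satisfying $\phi^s_n$ in which $\ell$ is true and the other literal of $C$ is false, which shows $C\setminus\{\ell\}$ is not an implicate.

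The only mildly delicate point is the primality check, since there are $n-3$ auxiliary variables and one must pick the right extensions; the three clauses introduced at each recursive step give binary implications between an input variable and an auxiliary variable $y$ representing the disjunction $x_1 \vee x_2$, and one must make sure these are not redundant after recursion. This is straightforward once one notes that each auxiliary variable $y$ introduced at recursion depth $k$ can consistently be set independently of the variables that appear only at greater or smaller depths, so witnessing assignments for non-implication are easy to build. Beyond that routine check, the lemma follows directly from the inductive definition and the 2-CNF propagation completeness result of~\cite{BBCGK13}.
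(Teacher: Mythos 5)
Your proposal is correct and follows essentially the same route as the paper: verify by induction on $n$ that $\phi^s_n$ is an encoding of $\amo_n$ of size $3n-6$, then observe it is a prime 2-CNF formula and invoke the result of~\cite{BBCGK13} that prime 2-CNF formulas are propagation complete (which implies the weaker PC-encoding property). One small caveat: the auxiliary variables are not set ``independently'' of other depths, since the clauses $(\neg y_k \vee y_{k+1})$ chain them monotonically, but this does not affect the primality check because the needed witnesses are immediate (the all-zero assignment, and for each literal the assignment setting one chosen variable to $1$ together with all deeper auxiliaries to $1$).
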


Since $\amo_{n-1}$ is a symmetric function, the order of the variables
in the formula for this function can be chosen arbitrarily without changing
the function. When a different order
of the variables is used in a recurrence, the obtained formula
has a different form. Let us introduce the \emph{tree encoding}
by the following recurrence. The base case is
$$
\phi^t_3(x_1,x_2,x_3) = \phi^s_3(x_1,x_2,x_3)
$$
and for each $n > 3$, let
$$
\phi^t_n(x_1, \ldots, x_n) =
(\neg x_1 \vee \neg x_2) \wedge (\neg x_1 \vee y) \wedge (\neg x_2 \vee y) \wedge
\phi^t_{n-1}(x_3, \ldots, x_n, y)
$$
where $y$ is an auxiliary variable not used in $\phi^t_{n-1}$.
By a similar argument as above, $\phi^t_n$ is a PC encoding of $\amo_n$.

The size of the formulas $\phi^s_n$ and $\phi^t_n$ is the same
and both are 2-CNF formulas. Let us consider a graph, whose vertices
are variables and edges are the two-element sets $\var{C}$, where $C$ is a clause
in the formula. Both the formulas $\phi^s_n$ and $\phi^t_n$
can be decomposed into triples of clauses which correspond to triangles
in their graph and the triangles are connected via their vertices
into a tree structure. In the graph for
$\phi^s_n$, these triangles form a path of length $n-3$ and in the graph
for $\phi^t_n$, they form a tree with diameter $O(\log n)$.

\subsection{Product Encoding}
\label{sec:product-enc}

Chen~\cite{amoCHE_2010} introduced the \emph{product
   encoding} of \(\amo_n\) which has size \(2n+4\sqrt{n} +
   O(\sqrt[4]{n})\). It turns out that $n=25$ is the smallest
number of the variables, for which
the product encoding outperforms the sequential encoding (\(68\) vs.
\(69\) clauses). On the other hand, we show below that the sequential
encoding is the smallest possible for \(n\le 8\). It is not clear
whether this holds also for \(9 \le n \le 24\).

Let us present a slightly optimized version of the product encoding
using a combination with sequential encoding for some values of $n$.
The combination is obtained by considering two candidates
for the recursive construction of the product encoding $\phi^p_n$
and using the better of them for each $n \ge 7$. The base case
for $n=3$ is
$$
\phi^p_3(x_1,x_2,x_3) = \phi^s_3(x_1,x_2,x_3)
$$
If $n \ge 4$, the first candidate formula for $\phi^p_n(\vek{x})$ is
\begin{equation} \label{eq:construct-by-reduce}
(\neg x_1 \vee \neg x_2) \wedge (\neg x_1 \vee y) \wedge (\neg x_2 \vee y) \wedge
\phi^p_{n-1}(y, x_3, \ldots, x_n)
\end{equation}
as in the recurrence used for the sequential encoding.
If $n \le 6$, let $\phi^p_n(\vek{x})$ be~\eqref{eq:construct-by-reduce}.
If $n \ge 7$, we use the formula described by Chen~\cite{amoCHE_2010}.
Let $m_1 = \lceil \sqrt{n}\, \rceil$ and $m_2 = \lceil n/m_1 \rceil$.
Clearly, we have $m_1 \ge m_2 \ge 3$.
Arrange the input variables
in $n$ pairwise different cells of a rectangular array of dimension $m_1 \times m_2$.
Let $r:\{1,\ldots,n\} \to \{1,\ldots,m_1\}$
and $c:\{1,\ldots,n\} \to \{1,\ldots,m_2\}$ be the functions, such
that $r(i)$ is the row index and $c(i)$ the column index of the
cell containing $x_i$. Let $y_j$, $j=1,\ldots,m_1$ and
$z_j$, $j=1,\ldots,m_2$ be new auxiliary variables. Then, the second
candidate for $\phi^p_n(\vek{x})$ is the formula
\begin{equation} \label{eq:product-encoding}
   \bigwedge_{i=1}^n (\neg x_i \vee y_{r(i)})
\wedge \bigwedge_{i=1}^n (\neg x_i \vee z_{c(i)})
\wedge \phi^p_{m_1}(\vek{y}) \wedge \phi^p_{m_2}(\vek{z})
\end{equation}
It is worth noting that formula~\eqref{eq:product-encoding} is in
regular form, see Definition~\ref{def:regular-form}.
The size of~\eqref{eq:construct-by-reduce} is $3 + |\phi^p_{n-1}|$
and the size of~\eqref{eq:product-encoding}
is $2n + |\phi^p_{m_1}| + |\phi^p_{m_2}|$.
Let $\phi^p_n(\vek{x})$ be the smaller of these formulas,
where any of the candidates can be used, if their sizes are the same.
It appears that formula~\eqref{eq:construct-by-reduce} is smaller
than~\eqref{eq:product-encoding} for $n \le 23$ and for infinitely
many other numbers, in particular, for the numbers $n=m^2+1$ and
$n=m(m+1)+1$, where $m \ge 5$ is an integer. This can be explained
as follows. If $n=m^2$ or $n=m(m+1)$, then
$\phi^p_n(\vek{x})$ is given by~\eqref{eq:product-encoding}.
In this case, the size of the candidate for $\phi^p_{n+1}(\vek{x})$ given
by~\eqref{eq:product-encoding} is at least
$|\phi^p_n(\vek{x})|+4$ and the size of the candidate for $\phi^p_{n+1}(\vek{x})$
given by~\eqref{eq:construct-by-reduce} is $|\phi^p_n(\vek{x})|+3$.

Clearly, the size of $\phi^p_n$ is at most $3 + |\phi^p_{n-1}|$
which is the size of~\eqref{eq:construct-by-reduce} and
using this, one can prove by induction on $n$ that for all \(n \ge 3\),
we have
\begin{equation}
   \label{eq:phi-p-bound}
   |\phi^p_n| \le 3n-6.
\end{equation}
Asymptotically, a better bound was proven
by Chen~\cite{amoCHE_2010}. We present here a proof of this bound for
completeness.

\begin{lemma}[Chen~\cite{amoCHE_2010}] \label{lem:product-encoding}
We have $|\phi^p_n| = 2n + 4\,\sqrt{n} + O(\sqrt[4]{n})$.
\end{lemma}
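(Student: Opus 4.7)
The plan is to establish, by strong induction on $n \ge 3$, the two-sided bound
\[
\bigl||\phi^p_n| - (2n + 4\sqrt{n})\bigr| \le C\,n^{1/4}
\]
for a suitably chosen absolute constant $C$ and all $n \ge N_0$, with the finitely many values $n < N_0$ absorbed by enlarging $C$. Since $\phi^p_n$ is defined as the smaller of the reduction candidate~\eqref{eq:construct-by-reduce} of size $3 + |\phi^p_{n-1}|$ and, for $n \ge 7$, the product candidate~\eqref{eq:product-encoding} of size $2n + |\phi^p_{m_1}| + |\phi^p_{m_2}|$, the upper-bound argument may use either candidate, while the lower-bound argument must apply to both.

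For the upper bound I would work with the product candidate. Using $\sqrt{n} \le m_1 \le \sqrt{n}+1$ together with $m_2 \le n/m_1 + 1 \le \sqrt{n}+1$ gives $2(m_1+m_2) \le 4\sqrt{n}+4$, while $\sqrt{m_i} \le n^{1/4} + O(n^{-1/4})$ and $m_i^{1/4} \le n^{1/8} + O(1)$. Applying the inductive hypothesis to $\phi^p_{m_1}$ and $\phi^p_{m_2}$,
\[
|\phi^p_n| \le 2n + 2(m_1+m_2) + 4(\sqrt{m_1}+\sqrt{m_2}) + C(m_1^{1/4}+m_2^{1/4}) \le 2n + 4\sqrt{n} + 8\,n^{1/4} + 2C\,n^{1/8} + O(1).
\]
Closing the induction requires $8\,n^{1/4} + 2C\,n^{1/8} + O(1) \le C\,n^{1/4}$, which holds for any $C > 8$ once $n$ exceeds a threshold.

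For the lower bound I would bound each candidate separately. For the product candidate, a symmetric calculation with $m_i \ge \sqrt{n}-1$ yields $|\phi^p_n| \ge 2n + 4\sqrt{n} + 8\,n^{1/4} - 2C\,n^{1/8} - O(1)$, comfortably above $2n + 4\sqrt{n} - C\,n^{1/4}$. For the reduction candidate, the inductive hypothesis together with $4\sqrt{n-1} \ge 4\sqrt{n} - O(n^{-1/2})$ gives
\[
3 + |\phi^p_{n-1}| \ge 2n + 1 + 4\sqrt{n-1} - C(n-1)^{1/4} \ge 2n + 4\sqrt{n} - C\,n^{1/4},
\]
where the constant $+1$ dominates the $O(n^{-1/2})$ loss for $n$ large.

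The main obstacle is the careful bookkeeping of constants so that the $8\,n^{1/4}$ term from the product recursion fits inside the error budget $C\,n^{1/4}$; this forces $C > 8$ and a matching threshold $N_0$ above which the lower-order $2C\,n^{1/8}$ and $O(1)$ terms are dominated. Once $C$ and $N_0$ are fixed, each $n \in [3,N_0]$ contributes a bounded discrepancy $\bigl||\phi^p_n| - (2n+4\sqrt{n})\bigr|$ that can be computed directly from the recurrence, and a final increase of $C$ uniformly covers these base cases.
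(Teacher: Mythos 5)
Your argument is correct, but it takes a different route from the paper. The paper's proof is a two-pass bootstrap of the single inequality~\eqref{eq:size-product-enc}: it first plugs the previously established linear bound~\eqref{eq:phi-p-bound}, $|\phi^p_m| \le 3m-6$, into $|\phi^p_n| \le 2n + 2|\phi^p_m|$ with $m=\lceil\sqrt{n}\,\rceil$ to get $|\phi^p_n| \le 2n + O(\sqrt{n})$, and then feeds that improved bound back into the same inequality once more to obtain $|\phi^p_n| \le 2n + 4\sqrt{n} + O(\sqrt[4]{n})$ --- an upper bound only, with all constants hidden in the $O$-notation. You instead guess the final asymptotic form and verify the two-sided estimate $\bigl||\phi^p_n| - (2n+4\sqrt{n})\bigr| \le C n^{1/4}$ by strong induction, which forces you to track an explicit constant $C$, a threshold $N_0$, and the absorption of base cases, and also to check the lower bound against both candidates~\eqref{eq:construct-by-reduce} and~\eqref{eq:product-encoding} (necessary, since $\phi^p_n$ is the minimum of the two and the reduction candidate is in fact chosen for infinitely many $n$); your estimates for both candidates and for the $n-1$ step are correct, and the bootstrapping of $C$ is sound because for $n$ beyond a fixed threshold the inductive inequality $8n^{1/4}+2Cn^{1/8}+O(1)\le Cn^{1/4}$ only improves when $C$ is enlarged, so covering the finitely many base cases by a final increase of $C$ does not break the step. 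What each approach buys: the paper's bootstrap is shorter and dispenses with constant bookkeeping, but it proves only the upper estimate (which is all the paper needs, since the lemma is used to show the encoding nearly matches the lower bound); your induction is longer but actually establishes the statement as literally written, i.e.\ the two-sided bound implicit in the equality with the $O(\sqrt[4]{n})$ error term, and does not rely on~\eqref{eq:phi-p-bound}.
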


\begin{proof}
If $n \ge 7$ and $m=\lceil \sqrt{n}\, \rceil$, the size
of the product encoding satisfies
\begin{equation} \label{eq:size-product-enc}
   |\phi^p_n| \le 2n + 2|\phi^p_m|\text{.}
\end{equation}
By~\eqref{eq:phi-p-bound}, we have $|\phi^p_m| \le 3m-6$.
Together with~\eqref{eq:size-product-enc} we obtain
$$
|\phi^p_n| \le 2n + O(\sqrt{n})\text{.}
$$
Using this bound for $|\phi^p_m|$ and using~\eqref{eq:size-product-enc}
again, we obtain
$$
|\phi^p_n| \le 2n + 4 \sqrt{n} + O(\sqrt[4]{n})
$$
as required.
\end{proof}

\subsection{Relationship to Monotone Circuits}
\label{sec:monotone-circuits} 

Let us briefly describe a connection between the product encoding and
the monotone circuit of the size $kn + o(n)$ for the function $T^n_k$,
described in~\cite{D84} and in Section 6, Theorem {2.3} in~\cite{W87}.
If $k=2$, the construction yields the product encoding for $\amo_n$.
More generally, if $k$ is any constant, we obtain a PC encoding of
``at most $(k-1)$'' of size $kn + o(n)$. This is the smallest
known encoding for this constraint, if $k \le 16$. On the other
hand, for every $k \ge 32$, a smaller encoding can be obtained using
Batcher's sorting network. For large $k$, an even smaller
encoding of size
$C_k n + O(1)$, where $C_k =O(\log k)$, can be obtained using
AKS sorting networks.

Let $T^n_k$ be the threshold function ``at least $k$''
of $n$ variables. By the results cited above, there is a monotone
circuit for this function consisting of $kn + o(n)$
binary AND and OR gates. In order to obtain asymptotically
the same number of clauses in an encoding, the circuit has to
be transformed in such a way that we replace groups of binary OR gates
computing a disjunction of several previous gates by a single OR gate
with multiple inputs. The Horn part of the Tseitin encoding of
the circuit after this transformation consists of $kn + o(n)$ clauses.
If we add a negative unit clause on the output of the circuit,
we obtain an encoding of the ``at most $(k-1)$'' constraint.
Moreover, using the specific structure of the
circuit, one can verify that this encoding is propagation complete.
In particular, if $k=2$, the obtained encoding is the product
encoding of the constraint ``at most one''.

\subsection{P-Encodings and Encodings of $\amo_n$ and $\exone_n$}
\label{sec:relations-p-enc} 

We use \pencs{} as a representation of common properties
of PC encodings of $\amo_n$ and $\exone_n$. Although
the converse of Lemma~\ref{lem:p-enc} is not true (see
below for an example) a partial converse is valid.

\begin{lemma}
   \label{lem:p-enc-is-enc}
   A \penc{} with \(n\) input variables is an encoding (not
   necessarily propagation complete) of either
   \(\amo_n\) or \(\exone_n\).
\end{lemma}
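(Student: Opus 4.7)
The plan is to determine, from the two conditions in Definition~\ref{def:p-enc}, exactly which input assignments $\alpha \in \{0,1\}^n$ satisfy $f(\alpha)=1$, where $f$ is the function encoded by $\phi$ in the sense of Definition~\ref{def:cnf-enc}. The key tool is the soundness of unit propagation: if $\phi \wedge g_1 \wedge \cdots \wedge g_k \vdash_1 h$, then $\phi \wedge g_1 \wedge \cdots \wedge g_k \models h$.

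First I would handle assignments with at least two $1$'s. Pick $i\neq j$ with $\alpha(x_i)=\alpha(x_j)=1$. By condition~\ref{cond:pc-amo-2}, $\phi \wedge x_i \vdash_1 \neg x_j$, hence by soundness $\phi \wedge x_i \wedge x_j$ is unsatisfiable. Therefore no extension $(\alpha,\beta)$ satisfies $\phi$, so $f(\alpha)=0$.

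Next I would show that $f(e_i)=1$ for every unit vector $e_i$ (the assignment with $x_i=1$ and all other input variables $0$). By condition~\ref{cond:pc-amo-1}, $\phi \wedge x_i$ is satisfiable, so there is some $(\alpha,\beta)$ with $\alpha(x_i)=1$ and $\phi(\alpha,\beta)=1$. By the previous paragraph, this $\alpha$ cannot have any other input variable set to $1$; hence $\alpha = e_i$, giving $f(e_i)=1$.

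At this point $f$ is completely determined on all inputs except the all-zero vector $0^n$. The two possibilities $f(0^n)=1$ and $f(0^n)=0$ correspond exactly to $f=\amo_n$ and $f=\exone_n$, respectively, which concludes the proof. I do not anticipate any substantive obstacle: the statement is essentially a direct reading-off of the truth table of $f$ from conditions~\ref{cond:pc-amo-1} and~\ref{cond:pc-amo-2} via the soundness of unit resolution.
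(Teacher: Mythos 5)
Your proposal is correct and follows essentially the same route as the paper: rule out assignments with two or more ones via condition (P2) and soundness of unit resolution, use (P1) together with (P2) to show each unit vector is accepted, and observe that the value on the all-zero assignment is the only remaining freedom, distinguishing $\amo_n$ from $\exone_n$. No gaps.
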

\begin{proof}
   Consider a \penc{} \(\varphi(\vek{x}, \vek{y})\) where
   \(\vek{x}=(x_1, \ldots, x_n)\).
The functions $\amo_n$ and $\exone_n$ differ only on the zero
assignment. In order to prove the statement, it is sufficient
to prove that the function encoded by $\varphi(\vek{x}, \vek{y})$
agrees with $\amo_n$ and $\exone_n$ on the remaining assignments.

Consider a non-zero assignment \(\alpha\) of the input variables.
First,
   assume that \(\alpha(x_i)=\alpha(x_j)=1\) for two indices \(1\leq
      i<j\leq n\). Such \(\alpha\) is a falsifying assignment of both
   functions \(\amo_n\) and \(\exone_n\). By
   condition~\ref{cond:pc-amo-2} \(\alpha\) cannot be extended to a
   model of \(\varphi\). For the remaining case assume that
   \(\alpha(x_i)=1\) for some \(i\in\{1, \ldots, n\}\) and
   \(\alpha(x_j)=0\) for all \(j\in\{1, \ldots, n\}\setminus\{i\}\).
Such $\alpha$ is a satisfying assignment of both the functions
$\amo_n$ and $\exone_n$.
By condition~\ref{cond:pc-amo-1} we have that \(\varphi\land
      x_i\) is satisfiable and condition~\ref{cond:pc-amo-2} implies
   that any satisfying assignment of \(\varphi\land x_i\) sets all
   other input variables to \(0\). This means that \(\alpha\) can be
   extended to a satisfying assignment of \(\varphi\).
\end{proof}

Consider the formula
\[
\phi'=
(x_1 \vee \ldots \vee x_{n-2}  \vee x_n \vee y) \wedge
(x_{n-1} \vee x_n \vee \neg y) \wedge
\phi(\vek{x})
\]
where $\phi(\vek{x})$ is the prime representation of $\amo_n$
and $y$ is an auxiliary variable. One can verify that this
formula is a \penc{}, is an encoding of $\exone_n$, however,
is not a PC encoding of $\exone_n$, since
\(\phi' \wedge \bigwedge_{j\in\{1, \ldots, n-1\}}\neg x_j\not\vdash_1 x_n\).
This implies that the converse of Lemma~\ref{lem:p-enc} is not true.

Although we believe that the size of the
smallest PC encoding of $\amo_n$ is $\pencS{n}$ and the
size of the smallest PC encoding of $\exone_n$ is $\pencS{n}+1$,
we can prove only the following bounds.

\begin{proposition}\label{prop:relations-for-sizes}
Let \(n\geq 2\), let \(\varphi_1(\vek{x}, \vek{y})\) be a smallest
PC encoding of \(\amo_n(\vek{x})\) and let \(\varphi_2(\vek{x},
   \vek{z})\) be a smallest PC encoding of
\(\exone_n(\vek{x})\). Then
$$
\begin{array}{c}
\pencS{n} \le |\varphi_1| \le \pencS{n}+1\\
\pencS{n} \le |\varphi_2| \le \pencS{n}+1
\end{array}
$$
\end{proposition}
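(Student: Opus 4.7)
I would split the proof into a lower-bound part and two upper-bound parts.

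\textbf{Lower bounds.} Both $|\varphi_1| \geq \pencS{n}$ and $|\varphi_2| \geq \pencS{n}$ follow at once from Lemma~\ref{lem:p-enc}: any PC encoding of $\amo_n$ or $\exone_n$ is a \penc{} on $n$ input variables, hence has at least $\pencS{n}$ clauses.

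\textbf{Upper bound $|\varphi_2| \leq \pencS{n}+1$.} Fix any \penc{} $\phi$ of minimum size $\pencS{n}$; by Lemma~\ref{lem:p-enc-is-enc}, $\phi$ encodes $\amo_n$ or $\exone_n$. Let $\phi^{*} = \phi \cup \{x_1 \vee \cdots \vee x_n\}$, of size at most $\pencS{n}+1$. The appended positive clause excludes exactly the all-zero input and leaves all other satisfying assignments of $\phi$ unchanged, so $\phi^{*}$ encodes $\exone_n$ in either subcase. Propagation completeness of $\phi^{*}$ then reduces to a routine case distinction on the input-literal assumptions: a positive assumption $x_k$ invokes (P2) of $\phi$ to derive each $\neg x_j$ with $j \ne k$; a set of exactly $n-1$ negative assumptions reduces the added clause to a single positive unit clause, which propagates the remaining input literal; and $n$ negative assumptions reduce it to the empty clause.

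\textbf{Upper bound $|\varphi_1| \leq \pencS{n}+1$: the easy subcase.} Again fix a minimum \penc{} $\phi$ and invoke Lemma~\ref{lem:p-enc-is-enc}. If $\phi$ encodes $\amo_n$, then $\phi$ itself is a PC encoding of $\amo_n$: the only nontrivial input-literal consequences of $\amo_n$ arise from positive assumptions, and (P2) provides exactly these, either by deriving every $\neg x_j$ with $j \ne k$ from a single positive assumption $x_k$, or by producing a contradiction from two positive assumptions $x_k, x_l$ (with $k \ne l$) via (P2) applied to either of them. This gives $|\varphi_1| \leq \pencS{n}$ in this case, well within the claimed bound.

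\textbf{Upper bound $|\varphi_1| \leq \pencS{n}+1$: the hard subcase.} The remaining case is that $\phi$ encodes $\exone_n$. Since the models of $\phi$ exclude the zero input and adding clauses can only further shrink the set of models, one cannot turn $\phi$ into an encoding of $\amo_n$ by appending clauses alone. The plan here is to modify $\phi$ by introducing one fresh auxiliary variable $z$, altering at most one existing clause and appending at most one further clause, for a total size of $\pencS{n}+1$. Concretely, one identifies a ``zero-blocking'' clause $C_0 \in \phi$ (a clause containing no $\neg x_i$ literal that is essential for the unsatisfiability of the restriction $\phi(\vek{0}, \vek{y})$), weakens it to $C_0 \vee z$, and appends a single compensating clause to restore any portion of (P2) that may have been broken by the weakening and to tie the value of $z$ to the zero input. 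Verifying that such a $C_0$ and a single compensating clause can always be chosen, and that the resulting formula is indeed an encoding of $\amo_n$ that satisfies (P1) and (P2), is the main technical obstacle of the proof.
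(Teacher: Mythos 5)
Your lower bounds and your treatment of $\exone_n$ (appending the clause $x_1\vee\cdots\vee x_n$ to a minimum-size \penc{}) coincide with the paper's proof, and your observation that a \penc{} which happens to encode $\amo_n$ is already a PC encoding of $\amo_n$ is correct. The genuine gap is in your ``hard subcase'' for $\varphi_1$, where the minimum-size \penc{} $\phi$ encodes $\exone_n$: your plan is only a plan, and its two key steps are not secured. First, a single ``zero-blocking'' clause $C_0$ need not exist: to make the all-zero input extendable after weakening $C_0$ to $C_0\vee z$ you need $\phi(\vek{0},\vek{y})\setminus\{C_0\}$ to be satisfiable, i.e.\ $C_0$ must lie in \emph{every} minimal unsatisfiable subset of $\phi(\vek{0},\vek{y})$, and nothing guarantees such a clause exists (there may be disjoint minimal unsatisfiable subsets). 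Second, even granting $C_0$, weakening it can destroy the unit-propagation chains witnessing (P2) for many pairs $(i,j)$ at once, and it is not clear that one compensating clause can repair all of them while also tying $z$ to the zero input. So as written the proof of $|\varphi_1|\le\pencS{n}+1$ is incomplete precisely where the work is.

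The paper sidesteps this entirely with a different one-clause construction that works uniformly, whether $\phi$ encodes $\amo_n$ or $\exone_n$: take a fresh auxiliary variable $z$, substitute it for the first input of $\phi$, and set
\[
\psi \;=\; (\neg x_1 \vee z)\;\wedge\;\phi(z,x_2,\ldots,x_n,\vek{y}),
\]
as in~\eqref{eq:strange-PC-enc-AMO}. Then $|\psi|=\pencS{n}+1$; the all-zero input is accepted by putting $z=1$, any single $x_i=1$ ($i\ge 2$) forces $z=0$ by (P2) of $\phi$ and leaves $\neg x_1\vee z$ satisfied, and two ones are still excluded. For propagation completeness, $\psi\wedge x_1\vdash_1 z$ and then (P2) of $\phi$ yields all $\neg x_j$, while $\psi\wedge x_j$ ($j\ge 2$) yields $\neg z$ by (P2) and then $\neg x_1$ through the added clause. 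Replacing your clause-surgery subcase by this substitution argument closes the gap; the rest of your proposal can stand as is.
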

\begin{proof}
The lower bounds follow from Lemma~\ref{lem:p-enc}.
Let $\phi(\vek{x}, \vek{y})$ be a \penc{} of size $\pencS{n}$.
One can verify that
\begin{equation}
      \label{eq:strange-PC-enc-AMO} (\neg x_1 \vee z) \wedge \phi(z,
      x_2, \ldots, x_n, \vek{y})
\end{equation}
where $z$ is a new auxiliary variable, is a PC encoding
of $\amo_n(\vek{x})$. This implies \(|\varphi_1|\leq \pencS{n}+1\).
Moreover, one can verify that
$$
(x_1 \vee \ldots \vee x_n) \wedge \phi(\vek{x}, \vek{y})
$$
is a PC encoding of \(\exone_n(\vek{x})\). This
implies \(|\varphi_2|\leq \pencS{n}+1\).
\end{proof}

\subsection{Simple Reductions of Encodings}
\label{sub:simple-reductions} 

In this section we present additional properties of encodings that
can be assumed without loss of generality, since every encoding can
be modified to satisfy them without increase of the size.

\begin{lemma} \label{lem:prime-encoding}
The prime CNF formula obtained from a given \penc{} $\phi(\vek{x}, \vek{y})$
by replacing every clause by a prime implicate contained in it,
is also a \penc{}.
\end{lemma}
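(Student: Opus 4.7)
The plan is to verify the two conditions \ref{cond:pc-amo-1} and \ref{cond:pc-amo-2} for the shrunken formula $\phi'$ obtained from $\phi$ by replacing every clause $C$ with a prime implicate $C' \subseteq C$ of $\phi$. The key observation is that $\phi$ and $\phi'$ are logically equivalent: on one hand $\phi \models \phi'$ because each $C' \in \phi'$ is by construction an implicate of $\phi$; on the other hand $\phi' \models \phi$ because $C' \subseteq C$ means the clause $C'$ entails $C$. Logical equivalence immediately yields condition \ref{cond:pc-amo-1}, since $\phi \wedge x_i$ and $\phi' \wedge x_i$ are satisfied by the same assignments.

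For condition \ref{cond:pc-amo-2} the main work lies in a monotonicity lemma for unit propagation under shrinking of clauses, which I would state and prove as follows. Suppose $\phi'$ is obtained from $\phi$ by replacing each clause by a subset of itself, and let $L$ be any set of literals. Then for every literal $h$, if $\phi \wedge \bigwedge L \vdash_1 h$ then either $\phi' \wedge \bigwedge L \vdash_1 h$ or $\phi' \wedge \bigwedge L \vdash_1 \bot$. The proof is a straightforward induction on the length of the unit propagation derivation in $\phi$: at each step a literal $\ell$ is forced by some clause $C \in \phi$ whose other literals are already falsified; by the induction hypothesis these falsifications are available in $\phi'$ as well (or $\bot$ was derived earlier). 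If $\ell \in C'$, the same step is available in $\phi'$; if $\ell \notin C'$, then $C' \subseteq C \setminus \{\ell\}$ is already entirely falsified, so $\phi'$ derives $\bot$.

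Applying this monotonicity lemma with $L = \{x_i\}$ and $h = \neg x_j$ (using \ref{cond:pc-amo-2} for $\phi$), we obtain that either $\phi' \wedge x_i \vdash_1 \neg x_j$ or $\phi' \wedge x_i \vdash_1 \bot$. The second possibility is ruled out by soundness of unit propagation together with the logical equivalence of $\phi$ and $\phi'$: since $\phi \wedge x_i$ is satisfiable by \ref{cond:pc-amo-1} for $\phi$, so is $\phi' \wedge x_i$, and hence $\phi' \wedge x_i \not\vdash_1 \bot$. Therefore $\phi' \wedge x_i \vdash_1 \neg x_j$, establishing \ref{cond:pc-amo-2} for $\phi'$.

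The only nontrivial step is the monotonicity lemma; the rest is bookkeeping. I do not expect any real obstacle, but care is needed when handling the case $\ell \notin C'$ in the inductive step, which is the reason the conclusion has to allow the alternative $\vdash_1 \bot$ in the general statement even though this alternative is excluded in our application.
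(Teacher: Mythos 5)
Your proof is correct and follows essentially the same route as the paper, which replaces clauses by prime implicates and observes that the represented function and conditions~\ref{cond:pc-amo-1} and~\ref{cond:pc-amo-2} are preserved; you simply spell out the verification the paper leaves implicit, via the monotonicity of unit propagation under clause shrinking (with the $\vdash_1\bot$ alternative correctly handled and then excluded by satisfiability). The only cosmetic difference is that the paper performs the replacement one clause at a time, while you treat all clauses at once, which changes nothing of substance.
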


\begin{proof}
Consider a clause $C$ in the original formula and a prime implicate $C'$
contained in it. Replacing $C$ by $C'$ does not change the function
represented by the formula and one can verify that
the conditions~\ref{cond:pc-amo-1} and~\ref{cond:pc-amo-2}
remain satisfied.
Repeating this for all clauses of $\phi(\vek{x}, \vek{y})$
proves the lemma.
\end{proof}

If the number of occurrences of an auxiliary variable \(y\) in an
encoding \(\phi\) is at most 4, then DP-elimination of \(y\) does not
increase the size of the formula (see Section~\ref{sub:main-result}
for definition of DP-elimination) and leads to an encoding of
the same function with a smaller number of auxiliary variables.
This allows us to make the following observation.

\begin{lemma} \label{lem:aux-var-occurrences}
Let \(\phi(\vek{x}, \vek{y})\) be an encoding of a function $f(\vek{x})$
of minimum size that, moreover, has the minimum number of auxiliary variables
among such encodings. Then any auxiliary variable
   \(y\in\vek{y}\) occurs in at least 5 clauses of \(\phi\).
\end{lemma}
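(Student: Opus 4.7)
The plan is to exploit DP-elimination of the auxiliary variable $y$ directly: this operation preserves the function $f(\vek{x})$ represented on the input variables (auxiliary variables are implicitly existentially quantified), and it removes one auxiliary variable, so the resulting formula $\phi'$ is still an encoding of $f$ with strictly fewer auxiliary variables than $\phi$. The whole point is then to show that if $y$ occurs in at most $4$ clauses of $\phi$, DP-elimination does not increase the size, which contradicts either the assumption that $\phi$ has minimum size, or the assumption that among minimum size encodings $\phi$ has the minimum number of auxiliary variables.

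First I would fix $y \in \vek{y}$ and let $p$ be the number of clauses of $\phi$ that contain the literal $y$ and $q$ be the number of clauses that contain $\neg y$, so $y$ appears in exactly $p+q$ clauses. DP-elimination removes those $p+q$ clauses and adds at most $pq$ resolvents (some may be tautologies or may duplicate clauses already present in $\phi$, both of which only decrease the final count). Thus the size changes by at most $pq - (p+q) = (p-1)(q-1) - 1$. I would then do a small case analysis under the hypothesis $p+q \le 4$: if $p=0$ or $q=0$ then all clauses containing $y$ can simply be deleted (the resulting formula still encodes $f$ on the input variables), and the size strictly decreases; if $p,q \ge 1$ and $p+q \le 4$ then the pairs $(p,q)$ to consider are $(1,1),(1,2),(2,1),(1,3),(3,1),(2,2)$, and in each case $(p-1)(q-1) \le 1$, so $pq-(p+q) \le 0$, hence $|\phi'| \le |\phi|$.

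Finally I would combine the two cases: either $|\phi'| < |\phi|$, contradicting the assumption that $\phi$ has minimum size among encodings of $f$; or $|\phi'| = |\phi|$, but then $\phi'$ is a minimum-size encoding of $f$ with one fewer auxiliary variable, contradicting the second minimality assumption. Either way we obtain a contradiction, so every auxiliary variable occurs in at least $5$ clauses.

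I do not expect a real obstacle here; the only thing to be careful about is making sure that the resolvents produced by DP-elimination together with the clauses of $\phi$ not mentioning $y$ do form an encoding of $f(\vek{x})$, but this is the standard correctness property of the Davis--Putnam elimination rule already cited in Section~\ref{sub:main-result}. It is also worth noting explicitly that tautological resolvents and duplicates can be dropped, so the bound $|\phi'| \le |\phi| - (p+q) + pq$ is genuinely an upper bound on the size, which is all the argument needs.
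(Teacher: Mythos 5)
Your proposal is correct and follows essentially the same route as the paper, which justifies the lemma by observing that DP-elimination of a variable with at most $4$ occurrences does not increase the size while reducing the number of auxiliary variables; you have merely made the counting $pq-(p+q)\le 0$ and the two-way minimality contradiction explicit. No gaps.
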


With a little effort one can show that DP-elimination also
preserves propagation completeness of an encoding.
In particular, Lemma~\ref{lem:aux-var-occurrences} holds
also for a PC encoding of minimum size, however,
this is not used in this paper.

\subsection{Substituting Variables in Unit Propagation}

One of the reduction steps we use later to simplify an encoding
is the substitution of a variable with a literal on a variable already
present in the formula. If \(\phi(\vek{z})\) is a formula and
\(g_1,g_2 \in\lit{\vek{z}}\), we denote by \(\subst{\phi}{g_1}{g_2}\)
the formula obtained from \(\phi\) using the substitution
$g_1 \leftarrow g_2$. More precisely, if the literal $g_1$ is positive, then
the variable $\var{g_1}$ is substituted by the literal $g_2$. If $g_1$
is negative, then the variable $\var{g_1}$ is substituted by the
literal $\neg g_2$. An important property of this operation is that
this kind of substitution preserves unit propagation. 

\begin{lemma}
   \label{lem:identification-propagation}
   Let \(\phi(\vek{z})\) be a formula, let \(g_1,g_2,h \in\lit{\vek{z}}\),
   such that \(\var{g_1} \not= \var{h}\) and assume,
   \(\subst{\phi}{g_1}{g_2}\) is satisfiable. Then
   \[\phi \vdash_1 h \Longrightarrow
      \subst{\phi}{g_1}{g_2} \vdash_1 h \text{.}\]
\end{lemma}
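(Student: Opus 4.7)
The plan is to induct on the length of a unit propagation derivation of $h$ from $\phi$. Fix such a derivation, witnessed by a sequence of literals $h_1, h_2, \ldots, h_k = h$, where each $h_i$ is either a unit clause of $\phi$ or is the unique literal surviving in some clause $C_i = \{h_i\} \cup \{\neg h_{j_1}, \ldots, \neg h_{j_m}\} \in \phi$ after falsification of the $\neg h_{j_s}$ using previously derived literals $h_{j_s}$ (with $j_s < i$). I will establish, by strong induction on $i$, the stronger claim that $\subst{\phi}{g_1}{g_2} \vdash_1 \subst{h_i}{g_1}{g_2}$. Taking $i = k$ and using the hypothesis $\var{h} \neq \var{g_1}$, which forces $\subst{h}{g_1}{g_2} = h$, then yields the conclusion.

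The inductive step splits on whether the substituted clause $C' = \subst{C_i}{g_1}{g_2}$ is tautological. If $h_i$ is already a unit clause of $\phi$, a unit clause cannot become tautological under substitution, so $\subst{h_i}{g_1}{g_2} \in \subst{\phi}{g_1}{g_2}$ and the claim is immediate. If $C'$ is not a tautology, then $C' \in \subst{\phi}{g_1}{g_2}$ and, viewed as a set, $C' = \{\subst{h_i}{g_1}{g_2}\} \cup \{\neg \subst{h_{j_s}}{g_1}{g_2} : s = 1, \ldots, m\}$; the inductive hypothesis supplies each unit clause $\subst{h_{j_s}}{g_1}{g_2}$, and iterated unit resolution against $C'$ delivers $\subst{h_i}{g_1}{g_2}$.

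The delicate case, which I expect to be the main obstacle, is when $C'$ is a tautology. This happens exactly when $C_i$ contains one literal from $\{g_1, g_2\}$ and one from $\{\neg g_1, \neg g_2\}$. If both offending literals lie in $\{\neg h_{j_1}, \ldots, \neg h_{j_m}\}$, then the inductive hypothesis yields both $g_2$ and $\neg g_2$ via unit propagation in $\subst{\phi}{g_1}{g_2}$, so $\subst{\phi}{g_1}{g_2} \vdash_1 \bot$, and by soundness of unit propagation $\subst{\phi}{g_1}{g_2}$ would be unsatisfiable, contradicting the hypothesis of the lemma. The only remaining configuration has $h_i$ itself as one offending literal and some $\neg h_{j_s}$ as the other; in that case $\subst{h_i}{g_1}{g_2}$ and $\subst{h_{j_s}}{g_1}{g_2}$ are forced to be the same literal (both $g_2$ or both $\neg g_2$), so the inductive hypothesis for $j_s$ already gives $\subst{\phi}{g_1}{g_2} \vdash_1 \subst{h_i}{g_1}{g_2}$, closing the induction.
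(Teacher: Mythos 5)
Your proof is correct, and it takes a somewhat different route than the paper. The paper first establishes a more general statement, Lemma~\ref{lem:transform-general}: an arbitrary resolution proof from $\phi$ is mapped by the substitution $t$ of~\eqref{eq:identification} to a resolution proof from $t(\phi)$ in which each derived object is either $\top$ or a subclause of the corresponding image $t(C_i)$; the satisfiability of $\subst{\phi}{g_1}{g_2}$ is then invoked only once, in the final corollary step, to exclude that the image of the final unit clause shrank to $\bot$. You instead induct directly on the unit-propagation derivation of literals with the exact invariant $\subst{\phi}{g_1}{g_2} \vdash_1 t(h_i)$, and you import the satisfiability hypothesis into the induction to dismiss the degenerate branch where $t(C_i)$ is tautological because two previously derived literals have complementary images. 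Your argument is more elementary and self-contained, but it applies only to unit propagation of unit clauses; the paper's lemma tracks arbitrary derived clauses via the subclause relation and thus covers full resolution, at the price of a slightly more delicate case analysis. Both close the proof with comparable effort.

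Two minor points. First, in your non-tautological case it can happen that some $\neg t(h_{j_s})$ coincides with the head literal $t(h_i)$ (for instance, $C_i$ contains both $g_1$ and $g_2$ while $\neg g_2$ was derived earlier); then resolving $t(C_i)$ mechanically with \emph{all} the units supplied by the induction hypothesis yields $\bot$ rather than $t(h_i)$. This does not harm the proof: either resolve only on the literals of $t(C_i)$ distinct from $t(h_i)$, which leaves exactly the unit $t(h_i)$, or note that this situation gives $\subst{\phi}{g_1}{g_2} \vdash_1 \bot$ and contradicts satisfiability exactly as in your tautological case, but it deserves a sentence. Second, you implicitly use the standard equivalence between the paper's definition of $\vdash_1$ (a series of unit resolutions) and the literal-by-literal formulation of unit propagation, namely that every clause derivable by unit resolutions is an input clause with some literals removed whose complements were previously derived as units; this is routine, but stating it would make the setup of your induction airtight.
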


Lemma~\ref{lem:identification-propagation} is a consequence of a more
general statement with an essentially the same proof which we are going to show first.
Let us consider a substitution \(t:\vek{z} \to \lit{\vek{z}}\)
of the variables in \(\vek{z}\) by literals on the same set of
the variables. The substitution extends to the literals so that for
every $x \in \vek{z}$, we have $t(\neg x) = \neg t(x)$.
Moreover, the substitution extends to the clauses and the formulas
in CNF as follows. If \(C=g_1\vee \cdots \vee g_k\) is
a clause with variables from \(\vek{z}\) then
\(t(C)\) is defined as \(t(g_1)\vee \cdots \vee t(g_k)\),
if there is no complementary pair of literals among
\(t(g_1), \ldots, t(g_k)\), and \(\top\) otherwise. If \(\phi\)
is a CNF formula, then \(t(\phi)=\bigwedge_{C\in\phi}t(C)\), where
\(t(\phi)=\top\) in case \(t(C)=\top\) for all \(C\in\phi\).
In particular,
\(\subst{\phi}{g_1}{g_2}=t(\phi)\) where \(t\) is
a map on the literals defined for every literal \(e\) as
\begin{equation}
   \label{eq:identification}
   t(e) =
   \begin{cases}
      e   & \text{if \(\var{e} \not= \var{g_1}\)}\\
      g_2 & \text{if \(e = g_1\)}\\
      \neg g_2 & \text{if \(e = \neg g_1\).}\\
   \end{cases}
\end{equation}
Applying a substitution to a formula preserves resolution proofs
as we show in the following lemma.

\begin{lemma} \label{lem:transform-general}
Let $\phi(\vek{z})$ be a formula on the variables $\vek{z}$
and let $C_1, \ldots, C_k$ be a resolution proof of $C_k$ from $\phi$.
If $t:\vek{z} \to \lit{\vek{z}}$ is a substitution as above,
then there is a sequence $D_1, \ldots, D_k$,
where each $D_i$ is a clause or $\top$,
such that the following implications are satisfied
\begin{align*}
C_i \in \phi & \Longrightarrow D_i = t(C_i)\\
C_i \not\in \phi & \Longrightarrow D_i \subseteq t(C_i) \not= \top
\mbox{\ or\ } D_i = t(C_i) = \top
\end{align*}
and the sequence of clauses contained in $D_1, \ldots, D_k$
is a resolution proof of the clauses contained in it from the clauses in $t(\phi)$.
If the original proof is a unit resolution proof,
so is the derived proof.
\end{lemma}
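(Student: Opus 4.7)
My plan is induction on $i$. When $C_i \in \phi$ the lemma forces $D_i := t(C_i)$, so the work lies in the resolution case: $C_i$ is the resolvent of earlier $C_j, C_l$ on a literal $\ell$, and I must synthesize $D_i$ from $D_j, D_l$. Write $m := t(\ell)$. If $t(C_i) = \top$, I set $D_i := \top$; otherwise, I aim either to reuse $D_j$ or $D_l$ as $D_i$, or to form the candidate resolvent $E := (D_j \setminus \{m\}) \cup (D_l \setminus \{\neg m\})$.

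The workhorse observation is that when $D_j$ is a clause, every literal $p \in D_j \setminus \{m\}$ equals $t(h)$ for some $h \in C_j \setminus \{\ell\} \subseteq C_i$ (because $h = \ell$ would give $p = m$), so $D_j \setminus \{m\} \subseteq t(C_i)$, and symmetrically $D_l \setminus \{\neg m\} \subseteq t(C_i)$. Using this: if $D_j$ is a clause with $D_j \subseteq t(C_i)$, set $D_i := D_j$; likewise for $D_l$; otherwise set $D_i := E$, at which point $E \subseteq t(C_i)$ together with $t(C_i) \neq \top$ forces $E$ to have no complementary pair, hence $D_j$ and $D_l$ share no complementary pair other than $(m, \neg m)$ and $E$ is their legitimate resolvent on $m$.

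The one subtle point is to rule out the bad configuration $D_j = \top = D_l$, in which I have no literals to manipulate. I would argue that $D_j = \top$ means $t(C_j) = \top$, so $C_j$ contains literals $h_1, h_2$ with $t(h_1) = \neg t(h_2)$; for this tautology not to survive in $t(C_i)$, one of $h_1, h_2$ must be $\ell$, and the other then lies in $C_j \setminus \{\ell\} \subseteq C_i$ with $t$-value $\neg m$, forcing $\neg m \in t(C_i)$. Symmetrically $D_l = \top$ forces $m \in t(C_i)$, so the two together contradict $t(C_i) \neq \top$. The mixed case (e.g.\ $D_j = \top$ and $D_l$ a clause) is absorbed into the earlier cases: the forced $\neg m \in t(C_i)$ combined with the workhorse observation gives $D_l \subseteq t(C_i)$, whether or not $\neg m \in D_l$.

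Given all this, the non-$\top$ entries of $D_1, \ldots, D_k$ form a resolution proof from $t(\phi)$ by inspection, and the unit-resolution addendum follows because a unit clause $\{\ell'\} \in \phi$ maps to the unit clause $\{t(\ell')\} \in t(\phi)$, so whichever of $D_j, D_l$ is unit remains unit (or $\bot$) after substitution, keeping the derived proof unit as well.
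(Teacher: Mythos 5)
Your proposal is correct and takes essentially the same route as the paper's proof: induction along the resolution proof, choosing each $D_i$ to be $\top$, a copy of one parent, or the resolvent of the two parent clauses, with the key invariant $D_i \subseteq t(C_i)$ when $t(C_i) \neq \top$. The only difference is organizational: you case on the derived clauses $D_j, D_l$ (with an explicit argument that a $\top$ parent forces $\neg m \in t(C_i)$, so the two-$\top$ configuration is impossible and a mixed one reduces to reusing the other parent), whereas the paper cases on whether $\var{t(g)}$ occurs in $t(C_i)$; the resulting constructions coincide.
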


\begin{proof}
For each $i=1,\ldots,k$,
we have either $C_i \in\phi$ or $C_i=\resolvent{C_r}{C_s}$,
where $r, s < i$.
In order to prove the claim,
let us construct $D_i$ by induction on $i=1,\ldots,k$.
Some of the clauses can repeat in the constructed sequence.
Assume, the sequence $D_1,\ldots,D_{i-1}$ is constructed and
is empty or satisfies the requirements formulated above.
If $C_i \in \phi$, choose $D_i = t(C_i)$.
If $C_i \not\in \phi$, then
$C_i=\resolvent{C_r}{C_s}$, where $r, s < i$
and $C_r = g \vee A$, $C_s = \neg g \vee B$, and $C_i = A \vee B$
for some literal $g$ and sets of literals $A$ and $B$.
If $t(C_i) = t(A \vee B) = \top$, then choose $D_i=\top$.
Otherwise, there is no conflict in $t(A) \cup t(B)$.

If, moreover, the variable $\var{t(g)}$ has no occurence
in $t(A) \cup t(B)$, then $t(C_r)$ and $t(C_s)$ are clauses
and are resolvable using the literals $t(g)$ and $t(\neg g)$.
This implies $D_r \subseteq t(C_r)$, $D_s \subseteq t(C_s)$,
and $t(C_i)=\resolvent{t(C_r)}{t(C_s)}$.
If $D_r$ and $D_s$ are resolvable, then
$\resolvent{D_r}{D_s} \subseteq t(C_i)$ and we can
choose $D_i=\resolvent{D_r}{D_s}$. Otherwise,
either $t(g) \not\in D_r$ and we have $D_r \subseteq t(C_i)$
or $t(\neg g) \not\in D_s$ and we have $D_s \subseteq t(C_i)$.
Hence, we can choose $D_i=D_r$ or $D_i=D_s$ so that
$D_i \subseteq t(C_i)$.

Assume, some of the literals $t(g)$ and $t(\neg g)$
has an occurence in $t(A) \cup t(B)$. Since $t(A) \cup t(B)$
is a clause, only one of the literals $t(g)$ and $t(\neg g)$
is contained in it. If $t(g) \in t(A) \cup t(B)$, then
$t(C_r) = t(g) \cup t(A) \subseteq t(A) \cup t(B) = t(C_i)$
and we can choose $D_i=D_r \subseteq t(C_r) \subseteq t(C_i)$.
Similarly, if $t(\neg g) \in t(A) \cup t(B)$, we can
choose $D_i=D_s \subseteq t(C_s) \subseteq t(C_i)$.

If $C$ is a unit clause, then $t(C)$ is a unit clause.
This implies the last statement of the lemma.
\end{proof}

Lemma~\ref{lem:identification-propagation} now easily follows from
Lemma~\ref{lem:transform-general}.

\begin{proof}[Proof of Lemma~\ref{lem:identification-propagation}]
If \(t\) is defined as in~\eqref{eq:identification}, then
\(\subst{\phi}{g_1}{g_2}=t(\phi)\) and $t(h)=h$.
Lemma~\ref{lem:transform-general} implies
$t(\phi) \vdash_1 \bot$ or
$t(\phi) \vdash_1 t(h)$.
The first is excluded because
$\subst{\phi}{g_1}{g_2} = t(\phi)$
is satisfiable. Hence, we have
$\phi[g_1\leftarrow g_2] \vdash_1 h$
as required.
\end{proof}

   \section{Reducing to Regular Form}
\label{sec:regular-form} 

This section is devoted to the proof of
Theorem~\ref{thm:structure-general-CNF-2}. We start with basic
properties of \pencs{}.

\begin{lemma}
   \label{lem:amo-pc}
   Let $n \ge 2$ and let \(\phi(\vek{x}, \vek{y})\) be a \penc{} with input
   variables \(\vek{x}=(x_1, \dots, x_n)\) and auxiliary
   variables \(\vek{y}=(y_1, \dots, y_\ell)\).
   For each distinct \(x_i, x_j \in \vek{x}\) it holds that
   \begin{enumerate}[label={(\alph*)}]
      \item   \(\phi\wedge x_i\not\vdash_1 x_j\),\label{enum:amo-pc:posToPos}
      \item   \(\phi\wedge \neg x_i\not\vdash_1 \neg x_j\),\label{enum:amo-pc:negToNeg}
      \item  \(\phi\) contains a binary clause containing the literal \(\neg x_i\).\label{enum:amo-pc:binClause}
   \end{enumerate}
\end{lemma}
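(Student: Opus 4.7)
The plan is to dispatch each of (a)--(c) separately: (a) and (b) follow from short semantic arguments using (P1)--(P2), while (c) is the main content. For (c) I work under the assumption that \(\phi\) is prime, which by Lemma~\ref{lem:prime-encoding} may be imposed without changing the number of clauses.

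For (a), I argue by contradiction: if \(\phi\wedge x_i\vdash_1 x_j\), then combined with (P2) which gives \(\phi\wedge x_i\vdash_1 \neg x_j\), one further unit resolution step yields \(\phi\wedge x_i\vdash_1 \bot\); by soundness of unit propagation \(\phi\wedge x_i\) is unsatisfiable, contradicting (P1). For (b), suppose \(\phi\wedge\neg x_i\vdash_1\neg x_j\); by (P1) pick a satisfying assignment \(\alpha\) of \(\phi\wedge x_j\), and by (P2) applied to the pair \((j,i)\) together with soundness, \(\alpha(x_i)=0\). So \(\alpha\) also satisfies \(\phi\wedge\neg x_i\) while \(\alpha(x_j)=1\), contradicting (again by soundness) the hypothesised derivation of \(\neg x_j\).

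For (c), set \(L_0=\upclos[\phi]{\emptyset}\). Using (P1) together with (P2) one verifies that \(L_0\) contains no literal on any input variable: if \(x_k\in L_0\), then for any \(j\neq k\) we would have \(\phi\wedge x_j\models x_k\) while (P2) gives \(\phi\wedge x_j\vdash_1\neg x_k\), contradicting (P1); and \(\neg x_k\in L_0\) contradicts (P1) directly. In particular \(\neg x_j\notin L_0\). Now trace a unit-propagation derivation of \(\neg x_j\) from \(\phi\wedge x_i\) guaranteed by (P2), and consider its first step producing a literal \(\ell\notin L_0\). The clause \(C\in\phi\) used in this step must contain \(\neg x_i\), because otherwise the literals falsified in \(C\) would all lie in \(\{\neg m:m\in L_0\}\), making \(\ell\) derivable from \(\phi\) alone and hence placing it in \(L_0\). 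Write \(C=\{\ell,\neg x_i\}\cup\{\neg m_1,\ldots,\neg m_r\}\) with each \(m_k\in L_0\). Since \(\phi\models m_k\) for every \(k\), we have \(\phi\models(\ell\vee\neg x_i)\); if \(r>0\) then \(\{\ell,\neg x_i\}\subsetneq C\) is a strictly smaller implicate, contradicting primality of \(\phi\). Hence \(r=0\) and \(C=\{\ell,\neg x_i\}\) is the desired binary clause. The main obstacle is precisely ruling out \(r>0\): without primality one cannot guarantee the initiating clause is genuinely binary rather than a longer clause shortened by auxiliary-literal consequences of \(\phi\), which is why the reduction to prime P-encodings via Lemma~\ref{lem:prime-encoding} is essential.
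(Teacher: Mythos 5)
Your treatment of \ref{enum:amo-pc:posToPos} is exactly the paper's argument (derive $\bot$ from $x_j$ and the literal $\neg x_j$ given by~\ref{cond:pc-amo-2}, contradicting~\ref{cond:pc-amo-1}), and your treatment of \ref{enum:amo-pc:negToNeg} is a semantic rephrasing of the paper's syntactic one: the paper appends the assumed derivation of $\neg x_j$ to the derivation $\phi\wedge x_j\vdash_1\neg x_i$ and gets $\phi\wedge x_j\vdash_1\bot$, while you exhibit a model of $\phi\wedge x_j\wedge\neg x_i$; both are correct and use the same ingredients.

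The divergence is in \ref{enum:amo-pc:binClause}. The paper proves it for an arbitrary \penc{} by the one-line claim that the propagation witnessing $\phi\wedge x_i\vdash_1\neg x_j$ must begin by resolving $x_i$ with a binary clause containing $\neg x_i$; you instead pass to a prime \penc{} via Lemma~\ref{lem:prime-encoding} and use primality to cut the initiating clause down to a binary one. Two things to be aware of. First, strictly speaking your reduction does not yield the lemma as stated: conclusion~\ref{enum:amo-pc:binClause} speaks about the clauses of the given $\phi$, and a binary clause of the primed formula only certifies a (possibly longer) clause of $\phi$ containing $\neg x_i$; so what you actually prove is the lemma for prime \pencs{}. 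Second, your caution is justified, because the unrestricted statement is in fact problematic: the paper's one-liner tacitly assumes that $x_i$ is the only unit clause available at the start of the propagation, which the definition of a \penc{} does not guarantee. For instance, $\phi=(\neg x_1\vee\neg x_2\vee a)\wedge(\neg a)$ with auxiliary variable $a$ satisfies~\ref{cond:pc-amo-1} and~\ref{cond:pc-amo-2} for $n=2$ (and the analogous formula works for any $n$), yet it contains no binary clause at all, so~\ref{enum:amo-pc:binClause} needs an extra hypothesis such as primality or the absence of unit clauses. Your argument proves exactly the version that the paper actually uses later (the binary-clause conclusion is invoked for prime \pencs{} in Proposition~\ref{lem:single-negative} and in the proof of Theorem~\ref{thm:structure-general-CNF-2}); for the one place where the lemma is applied to a possibly non-prime \penc{}, only the weaker fact that some clause of $\phi$ contains $\neg x_i$ is needed, and that does hold in general, since a derivation not using any clause with $\neg x_i$ never resolves on $x_i$ and would give $\phi\vdash_1\neg x_j$, hence $\phi\wedge x_j\vdash_1\bot$, contradicting~\ref{cond:pc-amo-1}. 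So: your (a) and (b) match the paper, and your (c) is a more careful argument that proves a correctly restricted statement rather than the literal one.
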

\begin{proof}
   Suppose that \(\phi\) satisfies the assumption.
The claims of the lemma can be proven as follows.
   \begin{enumerate}[label={(\alph*)}]
      \item If $\phi \wedge x_i \vdash_1 x_j$, then
         $\phi\wedge x_i \vdash_1\bot$, since
         $\phi\wedge x_i \vdash_1 \neg x_j$ by condition~\ref{cond:pc-amo-2}. This contradicts
         condition~\ref{cond:pc-amo-1}.
      \item If \(\phi\wedge \neg x_i\vdash_1 \neg x_j\),
         then $\phi\wedge x_j \vdash_1 \bot$, since
         $\phi\wedge x_j \vdash_1 \neg x_i$ by condition~\ref{cond:pc-amo-2}. This contradicts
         condition~\ref{cond:pc-amo-1}.
      \item Since $\phi\wedge x_i \vdash_1 \neg x_j$, there is a
         series of unit resolutions starting from \(x_i\), whose
         first step uses a binary clause containing \(\neg x_i\).
         \qedhere
   \end{enumerate}
\end{proof}

The following lemma shows that fixing any set of input variables to
zero in a \penc{} gives us a \penc{} on the remaining input variables.

\begin{lemma}
   \label{lem:amos-fix}
   Let \(\phi(\vek{x}, \vek{y})\) be a \penc{}
   and let $I\subset\{1, \ldots, n\}$ be a non-empty set of indices
and consider the partial assignment
   \(\partass=\{\neg x_j \ | \ j \not\in I\} \).
   Then \(\phi(\partass)\) is a \penc{} with the input
   variables \(x_i\), \(i\in I\).
\end{lemma}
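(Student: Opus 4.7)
The plan is to check the two defining conditions of a \penc{} (\ref{cond:pc-amo-1} and~\ref{cond:pc-amo-2}) for $\phi(\partass)$ with respect to the set of input variables $\{x_i \mid i \in I\}$. Note first that since $\partass$ only fixes variables $x_j$ with $j\notin I$, none of the $x_i$ for $i\in I$ are assigned by $\partass$, so they remain variables of $\phi(\partass)$.

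For condition~\ref{cond:pc-amo-1}, fix $i\in I$. By condition~\ref{cond:pc-amo-1} applied to $\phi$, the formula $\phi\wedge x_i$ has a satisfying assignment $\alpha$. By condition~\ref{cond:pc-amo-2} applied to $\phi$, for every $j\ne i$ we have $\phi\wedge x_i\vdash_1 \neg x_j$; by soundness of unit resolution, this forces $\alpha(x_j)=0$. In particular $\alpha(x_j)=0$ for all $j\notin I$, so $\alpha$ is consistent with $\partass$ and hence its restriction to the variables not in $\partass$ is a satisfying assignment of $\phi(\partass)\wedge x_i$.

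For condition~\ref{cond:pc-amo-2}, fix distinct $i,j\in I$. Condition~\ref{cond:pc-amo-2} applied to $\phi$ gives $\phi\wedge x_i\vdash_1 \neg x_j$, and adding the further unit assumptions in $\partass$ cannot destroy an existing derivation, so $\phi\wedge x_i\wedge\bigwedge_{k\notin I}\neg x_k\vdash_1 \neg x_j$. The remaining step is the observation that applying a partial assignment $\partass$ and performing unit propagation commute, in the following precise sense: for any literal $h$ with $\var{h}\notin \var{\partass}$,
\begin{equation*}
\phi\wedge\bigwedge_{g\in\partass} g\vdash_1 h \iff \phi(\partass)\vdash_1 h.
\end{equation*}
Intuitively, the literals of $\partass$ are the first literals used by unit resolution, and their effect on $\phi$ is exactly to remove clauses satisfied by $\partass$ and to drop literals falsified by $\partass$, which is what $\phi(\partass)$ already does. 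Applying this to $h=\neg x_j$ yields $\phi(\partass)\wedge x_i\vdash_1\neg x_j$ as required.

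The only non-obvious point is the commutation statement above; this is the main (small) obstacle. I would justify it by a short induction on the length of the unit resolution derivation: each clause of $\phi$ used in the original derivation either becomes trivially true under $\partass$ (in which case the derivation step can be skipped, as the literal it would have produced was already in $\partass$) or appears in $\phi(\partass)$ as the corresponding shortened clause, so the same sequence of resolutions, restricted to the non-trivial steps, goes through in $\phi(\partass)$. With this observation in hand the two conditions are immediate and the lemma follows.
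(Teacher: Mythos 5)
Your proof is correct and takes essentially the same route as the paper's: the paper likewise argues that unit propagation from $\phi \wedge x_i$ derives every literal of $\partass$ and no contradiction, so the corresponding propagation goes through in $\phi(\partass)\wedge x_i$, and satisfiability carries over to give~\ref{cond:pc-amo-1}. One small caution: the commutation statement is not an unconditional \emph{iff} (the left-to-right direction can fail when $\phi\wedge\bigwedge_{g\in\partass}g$ is unsatisfiable, since a clause satisfied by $\partass$ may then be used to derive a literal \emph{not} in $\partass$ after the negation of some literal of $\partass$ has been derived), but the instance you need holds because you have already shown that $\phi\wedge x_i\wedge\bigwedge_{g\in\partass}g$ is satisfiable, so by soundness no literal conflicting with $\partass$ is ever derived and your case analysis is exhaustive.
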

\begin{proof}
   If $i \in I$, then $\phi \wedge x_i$ derives all the
   literals $\neg x_k$ for $k \in \{1,\ldots,n\} \setminus \{i\}$
   including all the literals in $\partass$ and does not derive
   a contradiction. Hence, the propagation from
   $\phi(\partass) \wedge x_i$ cannot derive a contradiction
   and derives $\neg x_k$ for all $k \in I \setminus \{i\}$.
   It follows that $\phi(\partass)$ with the input variables $x_I$
   satisfies~\ref{cond:pc-amo-1} and~\ref{cond:pc-amo-2}.
\end{proof}

We now concentrate on clauses with negative literals on input
variables.

\begin{lemma}
   \label{lem:no-pos-and-neg-x}
   Let $\phi(\vek{x}, \vek{y})$ be a prime \penc{},
   $C \in \phi$, and $\neg x_i \in C$. Then
   one of the following is satisfied
   \begin{enumerate}[label={(\roman*)}]
      \item $C=\neg x_i \vee A$, where $\emptyset \not= A \subseteq \lit{\vek{y}}$,
      \item $C=\neg x_i \vee \neg x_j$ for some $j \not= i$.
   \end{enumerate}
\end{lemma}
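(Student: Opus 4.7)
My plan is a case analysis on the literals of $C$ other than $\neg x_i$, ruling out every possibility except cases (i) and (ii) by either exhibiting a strict sub-clause of $C$ that is still an implicate (which contradicts primality) or by contradicting conditions~\ref{cond:pc-amo-1} or~\ref{cond:pc-amo-2}. A key preliminary fact I plan to invoke is Lemma~\ref{lem:p-enc-is-enc}, which ensures that $\phi$ encodes either $\amo_n$ or $\exone_n$; in both cases every satisfying assignment of $\phi$ sets at most one input variable to $1$.

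First, suppose $C$ contains a positive input literal $x_j$ with $j \neq i$. I would show that $C' = C \setminus \{x_j\}$ is already an implicate of $\phi$, contradicting primality of $C$. Indeed, for any satisfying assignment $\alpha$ of $\phi$: if $\alpha(x_j) = 0$ then the literal of $C$ satisfied by $\alpha$ must lie in $C'$, and if $\alpha(x_j) = 1$ then the at-most-one property forces $\alpha(x_i) = 0$, so $\neg x_i \in C'$ is satisfied. Next, suppose $C$ contains $\neg x_j$ for some $j \neq i$ (having already excluded any positive input literal other than on $x_i$). By condition~\ref{cond:pc-amo-2} we have $\phi \wedge x_i \vdash_1 \neg x_j$, whence $\phi \models \neg x_i \vee \neg x_j$; primality of $C$ then forces $C = \{\neg x_i, \neg x_j\}$, which is exactly case (ii).

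The only remaining possibility is $C = \{\neg x_i\} \cup A$ with $A \subseteq \lit{\vek{y}}$. To reach case (i) I must rule out $A = \emptyset$: if $A$ were empty, then $\phi$ would contain the unit clause $\neg x_i$, so $\phi \vdash_1 \neg x_i$ and hence $\phi \wedge x_i \vdash_1 \bot$, contradicting the satisfiability of $\phi \wedge x_i$ required by condition~\ref{cond:pc-amo-1}. I do not foresee a real obstacle; the only subtle step is the first case, where one must appeal to the semantic description of the satisfying assignments of $\phi$ obtained via Lemma~\ref{lem:p-enc-is-enc}, rather than to the syntactic conditions (P1)--(P2) alone.
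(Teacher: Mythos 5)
Your proof is correct and follows essentially the same case analysis as the paper's: exclude a positive input literal $x_j$ via primality, reduce a negative input literal to the binary clause $\neg x_i \vee \neg x_j$ using~\ref{cond:pc-amo-2} and primality, and rule out $A=\emptyset$ by condition~\ref{cond:pc-amo-1}. The only cosmetic difference is that in the positive-literal case the paper obtains the implicate $C\setminus\{x_j\}$ syntactically, as the resolvent of $C$ with the implicate $\neg x_i\vee\neg x_j$ provided by~\ref{cond:pc-amo-2}, whereas you argue semantically through Lemma~\ref{lem:p-enc-is-enc}; both are valid.
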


\begin{proof}
   We have $C=\neg x_i \vee A$ for a non-empty set of literals $A$.
If $A$ contains a literal on an input variable, consider the following
two cases.
   \begin{itemize}
      \item If there is a literal \(\neg x_j\in A\) for some
         \(j\neq i\),
consider the clause $\neg x_i\vee \neg x_j$. This clause
is an implicate of $\phi(\vek{x}, \vek{y})$ due to
property~\ref{cond:pc-amo-2} and it is prime due to
property~\ref{cond:pc-amo-1}. Hence, necessarily
\(C=\neg x_i\vee \neg x_j\).
      \item If \(x_j\in A\) for some \(j\neq i\) then
         \(C \setminus \{x_j\} = \resolvent{C}{\neg x_i\vee\neg x_j}\) is an
         implicate as well which is in contradiction with primality of
         \(\phi\).
   \end{itemize}
Otherwise, $\emptyset \not= A \subseteq \lit{\vek{y}}$.
\end{proof}

The following proposition shows that \(|\QA{i}|\geq 2\) for every
input variable \(x_i\in\vek{x}\) in a minimum size \penc{}
\(\varphi(\vek{x}, \vek{y})\).

\begin{proposition}
   \label{lem:single-negative}
   Let \(n\geq 3\) and
   let \(\phi(\vek{x}, \vek{y})\) be a \penc{} with \(n\) input variables
   \(\vek{x}=(x_1, \ldots, x_n)\).
   Let \(x_i\in \vek{x}\) and suppose that  \(|\QA{i}|=1\).
   Then there is another \penc{} $\phi'$ with input variables
   \(\vek{x}\) and
   satisfying $|\phi| \ge |\phi'| + 1$.
   Moreover, if $\phi$ is a 2-CNF formula, then so is $\phi'$.
\end{proposition}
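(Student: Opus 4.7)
The plan is to first reduce to the case where $\phi$ is prime by invoking Lemma~\ref{lem:prime-encoding}. Since condition~\ref{cond:pc-amo-2} requires $\phi\wedge x_i\vdash_1\neg x_k$ for every $k\neq i$, at least one clause containing $\neg x_i$ must survive in the prime version, so $|\QA{i}|=1$ continues to hold (each clause of the original $\phi$ maps to at most one clause of its prime version). Let $C$ denote the unique clause in $\QA{i}$. Lemma~\ref{lem:no-pos-and-neg-x} then leaves two possibilities: either (i)~$C=\neg x_i\vee A$ with $\emptyset\neq A\subseteq\lit{\vek{y}}$, or (ii)~$C=\neg x_i\vee\neg x_j$ for some $j\neq i$.

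Case~(ii) will be ruled out using $n\geq 3$. Pick any $k\in\{1,\ldots,n\}\setminus\{i,j\}$ and apply~\ref{cond:pc-amo-2} to obtain $\phi\wedge x_k\vdash_1\neg x_i$. Since $C$ is the only clause of $\phi$ containing the literal $\neg x_i$ and is binary, the final unit-resolution step yielding the unit $\{\neg x_i\}$ must resolve $C$ with a previously derived unit $\{x_j\}$. Hence $\phi\wedge x_k\vdash_1 x_j$, contradicting Lemma~\ref{lem:amo-pc}\ref{enum:amo-pc:posToPos} because $k\neq j$.

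In case~(i) I next observe that $|A|=1$. When only the unit $x_i$ is added to $\phi$, clause $C$ reduces to $A$ and no other clause is touched by $x_i$, since clauses with $x_i$ positive are satisfied and $C$ is the only clause carrying $\neg x_i$. If $|A|\geq 2$, no new unit is produced, so $\phi\wedge x_i$ cannot derive any $\neg x_k$, contradicting~\ref{cond:pc-amo-2}. Write $C=\neg x_i\vee a$ for some $a\in\lit{\vek{y}}$ and set $\phi'=\subst{\phi}{a}{x_i}$, i.e., substitute $\var{a}$ by $x_i$ if $a$ is positive and by $\neg x_i$ if $a$ is negative. Under this substitution $C$ becomes a tautology and disappears, so $|\phi'|\leq|\phi|-1$; substitution does not lengthen clauses, so the 2-CNF property is preserved.

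The remaining task, and the main obstacle, is verifying that $\phi'$ is a \penc{}. Property~\ref{cond:pc-amo-2} transfers from $\phi$ to $\phi'$ by Lemma~\ref{lem:identification-propagation}, since $\var{a}\neq x_k$, provided $\phi'\wedge x_j$ is satisfiable for each input $x_j$. For~\ref{cond:pc-amo-1}, the key semantic observation is that the derivation of $\neg x_k$ from $\phi\wedge x_i$ passes through $a$ (the only literal $C$ can unit-propagate) and makes no further use of $x_i$ afterwards; therefore $\phi\wedge a\vdash_1\neg x_k$ for every $k\neq i$, which yields $\phi\wedge x_j\models\neg a$ whenever $j\neq i$. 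Any model $\beta$ of $\phi\wedge x_j\wedge\neg a$ then satisfies $\beta(x_i)=0=\beta(a)$, which is precisely the relation between $\var{a}$ and $x_i$ enforced by the substitution, so restricting $\beta$ away from $\var{a}$ yields a model of $\phi'\wedge x_j$. The case $j=i$ is analogous, starting from a model of $\phi\wedge x_i$, which already satisfies $a$ by propagation.
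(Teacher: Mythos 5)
Your argument is correct and essentially matches the paper's proof: after passing to a prime \penc{}, you identify the unique clause containing $\neg x_i$ as a binary clause $\neg x_i \vee a$ with $a\in\lit{\vek{y}}$, perform the substitution $a \leftarrow x_i$ so that this clause becomes $\top$ and the size drops by one, and verify~\ref{cond:pc-amo-1} semantically and~\ref{cond:pc-amo-2} via Lemma~\ref{lem:identification-propagation}. The only deviations are local: you exclude the case $C=\neg x_i\vee\neg x_j$ by propagating from $x_k$ (contradicting Lemma~\ref{lem:amo-pc}\ref{enum:amo-pc:posToPos}) and obtain binarity of $C$ and $\phi\wedge x_j\models\neg a$ by noting that any unit-resolution derivation from $\phi\wedge x_i$ must factor through $a$, whereas the paper cites Lemma~\ref{lem:amo-pc}\ref{enum:amo-pc:binClause} and propagates from $x_k$ to obtain $\neg e$ before $\neg x_i$ --- interchangeable variants of the same idea.
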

\begin{proof}
   Using Lemma~\ref{lem:prime-encoding}, we can assume that $\phi$ is a prime formula.
By Lemma~\ref{lem:amo-pc}, there is a binary clause \(C=\neg x_i\vee e\in \phi\)
with some \(e\in\lit{\vek{x}\cup\vek{y}}\). Let us assume for a contradiction
   that \(\var{e}=x_j\) with \(j\neq i\).
   By Lemma~\ref{lem:no-pos-and-neg-x}, \(C=\neg x_i\vee \neg x_j\).
   Let \(x_k\in\vek{x}\setminus\{x_i,x_j\}\).
   We have that \(\phi\wedge x_i\vdash_1\neg x_k\).
   Since $C$ is the only clause of $\phi$ containing \(\neg x_i\),
   unit resolution uses $x_i$ to derive $\neg x_j$ and does not use $x_i$
   in any of the later steps. Hence, we have
   \(\phi\wedge \neg x_j\vdash_1\neg x_k\), which is a contradiction
   with Lemma~\ref{lem:amo-pc}\ref{enum:amo-pc:negToNeg}.
   This implies \(e\in\lit{\vek{y}}\).

Consider the substitution $e \leftarrow x_i$ and
let us show that \(\phi'=\subst{\phi}{e}{x_i}\) satisfies the
   conditions~\ref{cond:pc-amo-1} and~\ref{cond:pc-amo-2}.
   \begin{amocond}
   \item
      Let us show that \(\phi'\wedge x_k\) is a
      satisfiable formula for each \(k\in\{1,\dots ,n\}\). If \(k=i\), we have that \(\phi\wedge x_i\)
      is satisfiable and that
      \(\phi\wedge x_i\vdash_1 e\) using the
      clause \(C\). Thus, the formula \(\phi\wedge x_i\wedge e\) is
satisfiable and both literals \(x_i\) and \(e\) get value $1$ in each of its
satisfying assignments. It follows that
\(\phi'\wedge x_i = (\phi\wedge x_i)[e \leftarrow x_i]\)
is satisfiable as well.

      If \(k\neq i\), we have \(\phi\wedge x_k\vdash_1\neg x_i\). Since
      \(C=\neg x_i\vee e\) is the only clause in
      \(\phi\) that contains \(\neg
         x_i\), it holds that \(\phi\wedge
         x_k\vdash_1\neg e\). Thus, both literals \(x_i\) and
      \(e\) get value $0$ in any
      satisfying assignment of \(\phi\wedge x_k\). It follows that
\(\phi'\wedge x_k = (\phi \wedge x_k)[e \leftarrow x_i]\)
is satisfiable as well.
   \item Follows from Lemma~\ref{lem:identification-propagation}
for the formula $\phi' \wedge x_k = (\phi \wedge x_k)[e \leftarrow x_i]$
and $h=\neg x_\ell$, where $k,\ell \in\{1,\dots ,n\}$, $k \not= \ell$.
   \end{amocond}
The substitution $e \leftarrow x_i$ changes \(C\) to $\top$
   which is omitted in \(\phi'\).
   Hence   $\phi'$ has size smaller than $\phi$. This completes the proof.
\end{proof}

Let us present an example of a formula which shows that
Proposition~\ref{lem:single-negative} does not hold for PC encodings of $\amo_n$
instead of \pencs{}.
The formula~\eqref{eq:strange-PC-enc-AMO}
with auxiliary variables $\vek{z}=(z, \vek{y})$
is a PC encoding of $\amo_n(\vek{x})$ with a single occurence of $\neg x_1$,
for which the construction in Proposition~\ref{lem:single-negative} provides
a formula $\phi'$ which is a PC encoding of $\exone_n(\vek{x})$.

\pencs{} that are not in regular form can be reduced to \pencs{}
with a smaller number of input variables by the following statements. We
start with \pencs{} which violate Condition~\ref{regular-cond1}
of Definition~\ref{def:regular-form}. Recall that
this condition requires that
\(|\QA{i}|=2\) for every input variable \(x_i\in\vek{x}\) of a \penc{}
in regular form.

\begin{proposition}
   \label{prop:number-neg-x-general-CNF}
   Let \(\phi(\vek{x}, \vek{y})\) be a \penc{} with \(n\geq 3\) input
   variables \(\vek{x}=(x_1, \ldots, x_n)\),
   such that \(\left|\QA[\varphi]{i}\right|\neq 2\) for some \(i\in\{1, \ldots,
         n\}\).
   Then, there is a formula $\phi'$, which satisfies one of the following
   \begin{enumerate} [label={(\alph*)}]
      \item\label{enum:number-neg-x-general-CNF:1}
         $\phi'$ is a \penc{} with \(n\) input variables and $|\phi| \ge |\phi'| + 1$,
      \item\label{enum:number-neg-x-general-CNF:2}
         $\phi'$ is a \penc{} with \(n-1\) input variables and $|\phi| \ge |\phi'| + 3$.
   \end{enumerate}
   Moreover, if $\phi$ is 2-CNF, then so is $\phi'$.
\end{proposition}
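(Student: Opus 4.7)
The plan is to split on the value of $|\QA{i}|$. By Lemma~\ref{lem:amo-pc}\ref{enum:amo-pc:binClause}, the formula $\phi$ must contain a binary clause with literal $\neg x_i$, so $|\QA{i}| \ge 1$. Therefore, the hypothesis $|\QA{i}| \ne 2$ means that either $|\QA{i}| = 1$ or $|\QA{i}| \ge 3$, and I treat these two subcases separately.

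For the case $|\QA{i}| = 1$, I simply invoke Proposition~\ref{lem:single-negative}, whose hypothesis is exactly this, and which produces a \penc{} $\phi'$ on the same $n$ input variables with $|\phi| \ge |\phi'| + 1$, preserving 2-CNF. This gives conclusion~\ref{enum:number-neg-x-general-CNF:1} with no further work.

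For the case $|\QA{i}| \ge 3$, I reduce the number of input variables by applying the partial assignment $\rho = \{\neg x_i\}$ and setting $\phi' = \phi(\rho)$. Taking $I = \{1, \ldots, n\} \setminus \{i\}$ (non-empty since $n \ge 3$), Lemma~\ref{lem:amos-fix} gives that $\phi'$ is a \penc{} with $n-1$ input variables. The key accounting step is that $\rho$ affects clauses of $\phi$ in only two ways: every clause in $\QA{i}$ contains $\neg x_i$ and hence is satisfied and discarded, while any clause containing $x_i$ has that literal removed and therefore only shrinks. Consequently $|\phi'| \le |\phi| - |\QA{i}| \le |\phi| - 3$, yielding conclusion~\ref{enum:number-neg-x-general-CNF:2}. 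The 2-CNF property of $\phi$ is preserved by $\phi'$ because shortening clauses cannot increase their widths.

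There is no deep obstacle here; the proof is essentially a dispatch to prior results. The only item needing a moment of care is the counting argument in the second case, namely verifying that exactly the $|\QA{i}|$ clauses containing $\neg x_i$ disappear under the partial assignment $\rho = \{\neg x_i\}$ and that no new clauses are introduced, so the drop in formula size is at least $|\QA{i}| \ge 3$.
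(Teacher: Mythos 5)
Your proposal is correct and follows essentially the same route as the paper: Lemma~\ref{lem:amo-pc} rules out $|\QA{i}|=0$, Proposition~\ref{lem:single-negative} handles $|\QA{i}|=1$, and setting $x_i=0$ (Lemma~\ref{lem:amos-fix}) with the count of at least $|\QA{i}|\ge 3$ removed clauses handles the remaining case, with the 2-CNF property preserved since clauses only shrink.
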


\begin{proof}
   Assume, $\left| \QA{i} \right| \not= 2$ for some $i\in\{1,\dots,n\}$.
   Lemma~\ref{lem:amo-pc} implies that \(|\QA{i}| \ge 1\).
   Assume, \(|\QA{i}|=1\).
   According to Proposition~\ref{lem:single-negative}, there is a \penc{}
   satisfying condition~\ref{enum:number-neg-x-general-CNF:1}
   of the conclusion.
   If $|\QA{i}| \ge 3$, then setting $x_i=0$
   yields a formula $\phi'$ of size at most $|\phi|-3$ and
   this formula is a \penc{} with \(n-1\) input variables by Lemma~\ref{lem:amos-fix}.
   Hence, condition~\ref{enum:number-neg-x-general-CNF:2}
   of the conclusion is satisfied.
In both cases, $\phi'$ is obtained from $\phi$ by a substitution that
does not increase the size of clauses, so also the last part of the statement
is satisfied.
\end{proof}

Now, we look at \pencs{} which satisfy
Condition~\ref{regular-cond1} but do not satisfy
Condition~\ref{regular-cond2} of Definition~\ref{def:regular-form}.
For this purpose, we use the following auxiliary lemma.

\begin{lemma}
   \label{lem:two-pure-input-clauses}
   Let $\phi(\vek{x}, \vek{y})$ be a \penc{} with \(n\geq 4\) input variables
   \(\vek{x}=(x_1, \ldots, x_n)\)
   and let \(i,j,k\in\{1,\dots,n\}\) be three different indices.
   Then \(\QA{i}\neq\{\neg x_i \vee \neg x_j,\, \neg x_i \vee \neg x_k\}\).
\end{lemma}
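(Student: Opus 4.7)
The plan is to proceed by contradiction. Suppose $\QA{i} = \{\neg x_i \vee \neg x_j,\; \neg x_i \vee \neg x_k\}$. Because $n \ge 4$, we may pick an index $\ell \in \{1,\ldots,n\}\setminus\{i,j,k\}$. By property~\ref{cond:pc-amo-2} of a \penc{} we have $\phi \wedge x_i \vdash_1 \neg x_\ell$, so the main task is to analyze this unit propagation derivation and extract from it a derivation that does not use $x_i$ at all.

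The key structural observation is that the only clauses in $\phi$ which contain $\neg x_i$ are the two members of $\QA{i}$, and both are binary. Hence in any unit resolution proof from $\phi \wedge x_i$, the unit $x_i$ can only be resolved against these two clauses, each time producing the unit literal $\neg x_j$ or $\neg x_k$; after those two resolutions $x_i$ is useless because no other clause contains $\neg x_i$, and one also sees that no resolvent appearing later in the derivation can contain $\neg x_i$ (every derived clause is a subclause of a clause of $\phi$, and all clauses of $\phi$ containing $\neg x_i$ already appear among the initial steps). Consequently, the derivation of $\neg x_\ell$ from $\phi\wedge x_i$ can be rewritten into a unit resolution derivation of $\neg x_\ell$ from $\phi \wedge \neg x_j \wedge \neg x_k$ (by replacing the two resolutions that consume $x_i$ by simply taking $\neg x_j$ and $\neg x_k$ as initial unit assumptions). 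Thus $\phi \wedge \neg x_j \wedge \neg x_k \vdash_1 \neg x_\ell$.

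To obtain the contradiction, I invoke properties~\ref{cond:pc-amo-1} and~\ref{cond:pc-amo-2} applied to $x_\ell$: the formula $\phi \wedge x_\ell$ is satisfiable, and every satisfying assignment of it falsifies every other input variable, so in particular there is a satisfying assignment $\alpha$ of $\phi$ with $\alpha(x_\ell)=1$ and $\alpha(x_j)=\alpha(x_k)=0$. This assignment satisfies $\phi \wedge \neg x_j \wedge \neg x_k$ but not $\neg x_\ell$, which by the soundness of unit propagation contradicts $\phi \wedge \neg x_j \wedge \neg x_k \vdash_1 \neg x_\ell$.

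The only delicate step is the careful justification of the ``rewriting'' of the derivation in the previous paragraph; everything else is a direct application of the conditions~\ref{cond:pc-amo-1} and~\ref{cond:pc-amo-2}. To make the rewriting precise I will note that a unit resolution proof is a sequence of clauses, each either in $\phi\wedge x_i$ or the resolvent of two earlier clauses at least one of which is a unit. An easy induction shows each clause in the sequence is a subclause of some clause of $\phi$ or equals $x_i$, so the only clauses in the sequence that can contain $\neg x_i$ are the two members of $\QA{i}$, and they are binary. Therefore every step that consumes $x_i$ produces exactly $\neg x_j$ or $\neg x_k$; excising the unit $x_i$ and inserting $\neg x_j$ and $\neg x_k$ as hypotheses gives a valid unit resolution derivation of $\neg x_\ell$ from $\phi \wedge \neg x_j \wedge \neg x_k$, completing the argument.
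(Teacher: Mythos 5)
Your proof is correct and follows essentially the same route as the paper: replace the uses of the unit $x_i$ (which can only hit the two binary clauses of $\QA{i}$) by the assumptions $\neg x_j$ and $\neg x_k$, conclude $\phi \wedge \neg x_j \wedge \neg x_k \vdash_1 \neg x_\ell$, and contradict conditions~\ref{cond:pc-amo-1} and~\ref{cond:pc-amo-2} applied to $\phi \wedge x_\ell$ (the paper phrases this as $x_j \vee x_k \vee \neg x_\ell$ being an implicate, you phrase it via a satisfying assignment -- the same thing). Your justification of the rewriting is in fact more explicit than the paper's, which simply asserts it; the only loose spot is the claim that no later resolvent contains $\neg x_i$, which is most cleanly settled by noting that deriving the unit $\neg x_i$ would require first deriving $x_j$ or $x_k$, impossible since $\phi \wedge x_i \vdash_1 \neg x_j, \neg x_k$ and $\phi \wedge x_i$ is satisfiable.
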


\begin{proof}
   Let $\ell \in \{1,\dots,n\} \setminus \{i,j,k\}$.
   We have $\phi \wedge x_i \vdash_1 \neg x_\ell$.
   Assume, \(\QA{i}\) consists of the two clauses
   \(\neg x_i \vee \neg x_j\) and \(\neg x_i \vee \neg x_k\).
   Then, we have \(\phi\wedge \neg x_j\wedge \neg x_k\vdash_1\neg
      x_\ell\). This implies that \(x_j\vee x_k\vee \neg x_\ell\) is an
   implicate of \(\phi\) which is in contradiction with the
   conditions~\ref{cond:pc-amo-1} and~\ref{cond:pc-amo-2}
used for the formula $\phi \wedge x_\ell$.
\end{proof}

Let \(\varphi(\vek{x}, \vek{y})\) be a \penc{} with \(n\) input
variables \(\vek{x}=(x_1, \ldots, x_n)\).
Recall that Condition~\ref{regular-cond2} of
Definition~\ref{def:regular-form} states that clauses in
\(\QA[\varphi]{i}\) do not contain other input variables than \(x_i\)
for every \(i=1, \ldots, n\), if $\phi(\vek{x}, \vek{y})$ is in regular form.

\begin{proposition}
   \label{prop:one-pure-input-clause}
   Let $\phi(\vek{x}, \vek{y})$ be a prime \penc{} with \(n\geq 4\)
   input variables \(\vek{x}=(x_1, \ldots, x_n)\),
   such that~\ref{regular-cond1} is satisfied and~\ref{regular-cond2}
   is not satisfied. Then there
   is a \penc{} $\phi'$ with \(n-1\) input variables, such that $|\phi| \ge |\phi'| + 3$.
   If $\phi$ is a 2-CNF formula, then so is $\phi'$.
\end{proposition}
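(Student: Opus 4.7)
The plan is to construct the required $\phi'$ in two steps: first fix $x_j = 0$ for a carefully chosen $j$, losing the two clauses of $\QA{j}$ guaranteed by (R1); then apply Proposition~\ref{lem:single-negative} to lose one more clause.

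Since (R2) fails, there is an index $i$ and a clause in $\QA{i}$ containing a literal on some other input variable; primality of $\phi$ and Lemma~\ref{lem:no-pos-and-neg-x} force this clause to have the form $C_1 = \neg x_i \vee \neg x_j$ for some $j \neq i$. By (R1), $\QA{i}$ has a second clause $C_2 \neq C_1$; combining Lemma~\ref{lem:no-pos-and-neg-x} with Lemma~\ref{lem:two-pure-input-clauses} (which rules out $\QA{i}$ consisting of two clauses on pure inputs) yields $C_2 = \neg x_i \vee A$ with $\emptyset \neq A \subseteq \lit{\vek{y}}$.

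Now set $\phi_1 := \phi(\{\neg x_j\})$. By Lemma~\ref{lem:amos-fix}, $\phi_1$ is a \penc{} with the $n-1 \ge 3$ input variables $\vek{x} \setminus \{x_j\}$; and by (R1) exactly the two clauses of $\QA{j}$ are satisfied and removed, so $|\phi_1| = |\phi| - 2$. The key claim is $|\QA[\phi_1]{i}| = 1$: the only clauses of $\phi$ containing $\neg x_i$ are $C_1$ and $C_2$; $C_1$ disappears because it contains $\neg x_j$; $C_2$ has no literal on $x_j$ and thus remains unchanged; and Lemma~\ref{lem:no-pos-and-neg-x} prevents any clause of $\phi$ from containing both $\neg x_i$ and $x_j$, so the substitution creates no new clause with $\neg x_i$. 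Hence $\QA[\phi_1]{i} = \{C_2\}$, and Proposition~\ref{lem:single-negative} applied to $\phi_1$ yields a \penc{} $\phi'$ with the same $n-1$ input variables and $|\phi'| \le |\phi_1| - 1 = |\phi| - 3$, as required.

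For the 2-CNF clause, setting an input variable to $0$ cannot enlarge any clause, so $\phi_1$ is 2-CNF whenever $\phi$ is, and Proposition~\ref{lem:single-negative} explicitly preserves the 2-CNF property. The one delicate point is establishing $|\QA[\phi_1]{i}|=1$, since any surviving extra clause with $\neg x_i$ would obstruct Proposition~\ref{lem:single-negative} and leave us one clause short; Lemma~\ref{lem:no-pos-and-neg-x} (which crucially relies on $\phi$ being prime) is precisely the tool that rules this out in a single stroke.
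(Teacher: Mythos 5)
Your proof is correct and takes essentially the same route as the paper's: both arguments fix one endpoint of the mixed clause $\neg x_i \vee \neg x_j$ to $0$, use~\ref{regular-cond1} together with Lemma~\ref{lem:amos-fix} to drop the two clauses containing that negative input literal, observe that the other variable's negative literal then occurs exactly once, and conclude with Proposition~\ref{lem:single-negative}; the paper fixes $x_i$ while you fix $x_j$, which is immaterial by the symmetry of the situation. The only nitpick is that your equality $|\phi_1| = |\phi| - 2$ should be the inequality $|\phi_1| \le |\phi| - 2$ (clauses shortened by removing a positive occurrence of $x_j$ might coincide with existing clauses), which only strengthens the bound.
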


\begin{proof}
   Let $i$ be an index, for which~\ref{regular-cond2} is not satisfied
   which means that some clause \(C\in \QA[\varphi]{i}\) contains another
   input variable \(x_j\), \(j\neq i\) in addition to the literal \(\neg
      x_i\).
   By Lemma~\ref{lem:no-pos-and-neg-x} we get that
   $C=\neg x_i \vee \neg x_j$.
   Without loss of generality, assume $i=1$, $j=2$, so $\phi$ contains
   clauses
   $\neg x_1 \vee \neg x_2$,\,
   $\neg x_1 \vee B_1$,\,
   $\neg x_2 \vee B_2$
   for some sets of literals $B_1,B_2$.
   By Lemma~\ref{lem:no-pos-and-neg-x} and
   Lemma~\ref{lem:two-pure-input-clauses}, both $B_1$ and $B_2$ are
   sets of auxiliary literals.
   By Lemma~\ref{lem:amos-fix} we have that $\psi=\phi(\{\neg x_1\})$
   is a \penc{} with \(n-1\) input variables
   \(x_2, \ldots, x_n\). Since \(\varphi\) satisfies~\ref{regular-cond1},
   we have that \(|Q_{\phi, i}|=2\) and
   thus
   \begin{equation}
      \label{eq:pure-1}
      |\phi|\geq|\psi|+2\text{.}
   \end{equation}
   Since $\phi$ satisfies~\ref{regular-cond1},
   the literal \(\neg x_2\) occurs only once in \(\psi\).
   Hence, Proposition~\ref{lem:single-negative} implies that there is a \penc{} $\phi'$
   with \(n-1\) input variables satisfying
   \(|\psi|\geq|\phi'|+1\). Together with~\eqref{eq:pure-1} we get
   \begin{equation*}
      |\phi|\geq|\psi|+2\geq|\phi'|+3
   \end{equation*}
   as required.
\end{proof}

We are now ready to give a proof of
Theorem~\ref{thm:structure-general-CNF-2}.
Since we have already analyzed encodings not satisfying some
of the conditions~\ref{regular-cond1} and~\ref{regular-cond2},
the main step of the proof is to deal with
encodings not satisfying Condition~\ref{regular-cond3}. Let us recall
that~\ref{regular-cond3} requires that for every \(i=1, \ldots, n\)
the set \(\QA[\varphi]{i}\) consists of two binary clauses. Let us
also recall that \(\upclos[\phi]{g_1, \dots, g_k}\) denotes the set of
literals that can be derived by unit resolution from
\(\phi\wedge g_1\wedge\dots\wedge g_k\).

\begin{proof}[Proof of Theorem~\ref{thm:structure-general-CNF-2}]
Let \(\varphi(\vek{x}, \vek{y})\) be a prime \penc{} with input variables
\(\vek{x}=(x_1, \ldots, x_n)\).
If $\phi$ is in regular form, we are done.
If some of the conditions~\ref{regular-cond1} and~\ref{regular-cond2}
is not satisfied, the conclusion follows from
propositions~\ref{prop:number-neg-x-general-CNF}
and~\ref{prop:one-pure-input-clause}. For the rest of the proof
assume that $\phi$ satisfies ~\ref{regular-cond1} and~\ref{regular-cond2}.
If $\phi$ is a 2-CNF, the condition~\ref{regular-cond3} is satisfied
and we are done.

If $\phi$ is not a 2-CNF, assume that $\phi$ does not satisfy~\ref{regular-cond3}
for some \(i\in\{1, \ldots, n\}\).
By Lemma~\ref{lem:amo-pc} we have that one of the clauses in
\(\QA{i}\) is a binary clause. By~\ref{regular-cond1},
we have \(|\QA{i}|=2\) and since \(\QA{i}\) does not
satisfy~\ref{regular-cond3}, the other clause consists of at least
three literals. Moreover, due to~\ref{regular-cond2}
the only input variable which appears in some clause in \(\QA{i}\) is \(x_i\).
Thus we can write
$$
\QA{i}=\{C_1=\neg x_i \vee y, C_2 = \neg x_i \vee z_1 \vee \ldots \vee z_\ell\}
$$
for some literals \(y, z_1, \ldots, z_\ell\) on auxiliary variables where \(\ell>1\).
   We claim that for every \(j\in\{1, \ldots, \ell\}\) we have
   \begin{equation}
      \label{eq:struct-gen-2a}
      \phi\wedge x_i\not\vdash_1\neg z_j
   \end{equation}
   and
   \begin{equation}
      \label{eq:struct-gen-2}
      \phi\wedge y\not\vdash_1\neg z_j\text{.}
   \end{equation}
Let us assume for a contradiction that there is \(j\in\{1, \ldots, \ell\}\)
satisfying the negation of~\eqref{eq:struct-gen-2a} or the negation of~\eqref{eq:struct-gen-2}.
If $\phi\wedge y \vdash_1\neg z_j$, then
$\phi\wedge x_i \vdash_1\neg z_j$, since $C_1 \in \phi$.
Hence, we can assume $\phi\wedge x_i \vdash_1\neg z_j$.
This implies that \(\neg x_i\vee \neg z_j\) is an
implicate of \(\phi\). However, the resolvent
\(\resolvent{\neg x_i\vee \neg z_j}{C_2}\) is a strict subclause of \(C_2\)
which is in contradiction with primality of \(C_2\).

   Consider any input variable \(x_j\), \(j\not= i\).
   Since \(\phi\) satisfies~\ref{cond:pc-amo-2} we have that \(\phi\wedge
      x_i\vdash_1 \neg x_j\). Clause \(C_1\) is the only one in
   \(\phi\) that becomes unit when resolved with \(x_i\).
   Moreover, we can observe that clause \(C_2\) is not used in the
   derivation \(\phi\land x_i\vdash_1\neg x_j\). This follows by~\eqref{eq:struct-gen-2a},
   because in order for \(C_2\) to be used in a unit
   resolution derivation, at least one of \(\neg z_1, \ldots,
      \neg z_\ell\) must be derived first. Thus necessarily
   \begin{equation}
      \label{eq:struct-gen-clos}
      \upclos[\phi]{x_i}=\upclos[\phi]{y}\cup\{x_i\}
   \end{equation}
   and in particular
   \begin{equation}
      \label{eq:struct-gen-1}
      \phi\wedge y\vdash_1\neg x_j\text{.}
   \end{equation}

   Let $\psi = (\phi \setminus \{C_2\}) \cup \{C_3\}$,
   where $C_3 = \neg y \vee z_1 \vee \ldots \vee z_\ell$.
   We prove below that $\psi$ is a \penc{} with input variables
   \(\vek{x}\).
   Since \(|\psi|=|\phi|\) and \(|\psi|\) contains only one
   occurrence of \(\neg x_i\),
   Proposition~\ref{lem:single-negative} implies that there is a \penc{} \(\phi'\)
   with \(n\) input variables satisfying
   \(|\varphi|\geq|\varphi'|+1\) thus
   satisfying condition~\ref{enum:structure-general-CNF-2:1}.
   According to Definition~\ref{def:p-enc} it remains to show that \(\psi\) satisfies
   conditions~\ref{cond:pc-amo-1} and~\ref{cond:pc-amo-2}.

   \begin{amocond}
   \item Let \(x_j\), \(j\in\{1, \ldots, n\}\) be an arbitrary input variable and let us show
      that \(\psi\wedge x_j\) is satisfiable.
      \begin{itemize}
         \item If $j=i$, we have $\phi \wedge x_i \models z_1\vee \cdots\vee z_\ell$,
            since $C_2$ is contained in $\phi$.
            Consequently, $\phi \wedge x_i \models C_3$.
         \item If $j \neq i$, we have $\phi \models \neg y \vee \neg x_j$ by~\eqref{eq:struct-gen-1}. Hence, $\phi \wedge x_j \models \neg y$ and
            $\phi \wedge x_j \models C_3$.
      \end{itemize}
      In both cases, since $\phi \wedge x_j$ is satisfiable, so is
      $\psi \wedge x_j$, and $\psi$ satisfies~\ref{cond:pc-amo-1}.

   \item Let \(j, k\in\{1, \ldots, n\}\) be two different indices of input
      variables and let us
      show that \(\psi\wedge x_j\vdash_1\neg x_k\). Let us look at
      derivation of \(\phi\wedge x_j\vdash_1\neg x_k\).
      \begin{itemize}
         \item If \(j=i\), then clause \(C_2\) is not
            used in the derivation \(\phi\wedge x_i\vdash_1 \neg x_k\).
            This follows by~\eqref{eq:struct-gen-2a},
            because in order for \(C_2\) to be used in a unit
            resolution derivation, at least one of \(\neg z_1, \ldots,
               \neg z_\ell\) must be derived first. It follows that
            \(\psi\wedge x_i\vdash_1\neg x_k\) as well.
         \item Assume \(j\neq i\). If \(C_2\) is
            not used in the derivation \(\phi\wedge
               x_j\vdash_1\neg x_k\), then also \(\psi\wedge
               x_j\vdash_1\neg x_k\) and we are done.
Assume, $C_2$ is used in the derivation $\phi\wedge x_j\vdash_1\neg x_k$.
If \(C_2\) is used to derive some of the literals $z_1, \ldots, z_\ell$,
then in order to do that we need \(\phi\wedge x_j\vdash_1 x_i\), which is not true.
Hence, we can assume that
\(C_2\) is used to derive \(\neg x_i\). Before that we
have \(\phi\wedge x_j\vdash_1 \neg z_r\) for all \(r\in\{1, \ldots, \ell\}\)
and these derivations are possible in \(\psi\) as well. Hence,
\(\psi\wedge x_j\vdash_1\neg z_r\) for all \(r\in\{1, \ldots, \ell\}\).
            Moreover, we obtain \(\psi\wedge x_j\vdash_1 \neg x_i\)
            because we can replace the step using \(C_2\) in the original unit
            resolution derivation with two steps using \(C_3\) to
            derive \(\neg y\) and then \(C_1\) to derive \(\neg x_1\).
            Together, we get that also in this case
            \(\psi\wedge x_j\vdash_1\neg x_k\).
      \end{itemize}
   \end{amocond}
   This concludes the proof of
   Theorem~\ref{thm:structure-general-CNF-2}.
\end{proof}

The following notation will be used in the subsequent sections.
Assume, \(\phi(\vek{x}, \vek{y})\) is a \penc{} with input variables
\(\vek{x}=(x_1, \ldots, x_n)\) in regular form.
For each \(i=1,\dots,n\), let
\begin{align}
   \label{eq:PA}
   \PA[\phi]{i}&=\left\{ e   \;|\; (\neg x_i\vee e)\in\QA{i}\right\}\text{,}\\
   \label{eq:PB}
   \PB{i}&=\var{\PA[\phi]{i}}\text{.}
\end{align}
Clearly, for all \(i=1, \ldots, n\), we have \(|\PA[\phi]{i}|=2\),
Moreover, since $\neg x_i$ is not an implicate of $\phi$, the
set $\PA[\phi]{i}$ does not contain complementary literals
and we have $|\PB{i}|=2$. The following is now an easy observation.

\begin{lemma} \label{lem:PA-different}
If \(i, j\in\{1, \ldots, n\}\) are two different indices of input
variables, then \(\PA{i}\neq\PA{j}\).
\end{lemma}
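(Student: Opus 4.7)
The plan is a direct proof by contradiction. Suppose $i\neq j$ but $\PA{i}=\PA{j}=\{e_1,e_2\}$. Combining~\ref{regular-cond1}, \ref{regular-cond2}, and~\ref{regular-cond3}, the full list of clauses of $\phi$ containing $\neg x_i$ is $\{\neg x_i\vee e_1,\;\neg x_i\vee e_2\}$, the full list of clauses containing $\neg x_j$ is $\{\neg x_j\vee e_1,\;\neg x_j\vee e_2\}$, and $e_1,e_2$ are literals on auxiliary variables.

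Starting unit propagation from $\phi\wedge x_i$, the two clauses of $\QA{i}$ immediately place $e_1$ and $e_2$ into $\upclos[\phi]{x_i}$. By~\ref{cond:pc-amo-1} the formula $\phi\wedge x_i$ is satisfiable, so its unit closure cannot contain $\bot$; in particular neither $\neg e_1$ nor $\neg e_2$ lies in $\upclos[\phi]{x_i}$.

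On the other hand,~\ref{cond:pc-amo-2} gives $\phi\wedge x_i\vdash_1\neg x_j$. Any unit-resolution step producing $\neg x_j$ must use a clause of $\phi$ that contains $\neg x_j$, and the only candidates are $\neg x_j\vee e_1$ and $\neg x_j\vee e_2$. For such a step to fire, the remaining literal $e_k$ must already have been falsified, i.e., $\neg e_k$ must belong to $\upclos[\phi]{x_i}$, contradicting the previous paragraph. Hence $\PA{i}\neq\PA{j}$.

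I expect no real obstacle here: the substantive point is simply that the regular form pins down completely the clauses in which $\neg x_i$ and $\neg x_j$ appear, and that~\ref{cond:pc-amo-1} rules out $\neg e_1$ and $\neg e_2$ from the unit-propagation closure of $\phi\wedge x_i$, which are the only literals that could trigger the clauses of $\QA{j}$ to yield $\neg x_j$.
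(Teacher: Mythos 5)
Your proof is correct and follows essentially the same route as the paper: the paper likewise observes that $\upclos[\phi]{x_i}$ contains both literals of $\PA{i}=\PA{j}$, so by satisfiability of $\phi\wedge x_i$ (condition~\ref{cond:pc-amo-1}) neither of their negations can be derived, and hence the two clauses of $\QA{j}$ can never fire to produce $\neg x_j$, contradicting~\ref{cond:pc-amo-2}. You merely spell out explicitly the step the paper leaves implicit, namely that any unit-resolution derivation of $\neg x_j$ must pass through one of the two binary clauses in $\QA{j}$.
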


\begin{proof}
Assuming \(\PA{i}=\PA{j}\), we get a contradiction with
conditions~\ref{cond:pc-amo-1} and~\ref{cond:pc-amo-2} as follows.
The formula \(\phi\wedge x_i\) is satisfiable and
derives both literals in $\PA{i}=\PA{j}$. Hence, it is
not possible to have \(\phi\wedge x_i\vdash_1 \neg x_j\).
\end{proof}

Note that it follows from Lemma~\ref{lem:PA-different} that
\(\varphi\) contains at least \(\sqrt{n/2}\) auxiliary variables in
\(\vek{y}\). This is because if the number of auxiliary variables is
\(\ell\), then we have at most \(\binom{2\ell}{2}\) sets
\(\PA{i}\). It follows that \(\binom{2\ell}{2}\geq n\) and
\(\ell\geq\sqrt{n/2}\) and this gives a hint on how \(\sqrt{n}\) gets into
the lower bound. In order to obtain a corresponding term
$\Omega(\sqrt{n})$ in the lower bound on the number of clauses,
we distinguish two types of clauses in a \penc{} in regular form as follows:
\begin{itemize}
   \item Clauses from \(\bigcup_{i=1}^n\QA{i}\) are of \emph{type \Qtype{}}.
   \item
      The remaining clauses in \(\phi\) are of \emph{type \Rtype{}}.
\end{itemize}
A \penc{} in regular form with $n$ input variables
contains $2n$ clauses of type \Qtype{}. This was used in
Corollary~\ref{cor:two-n} to prove a lower bound $2n$
on the size of every \penc{} with a sufficiently large number \(n\)
of input variables. In order to prove the lower bounds stated
in the introduction, we prove in the next two sections
lower bounds on the number of clauses of type \Rtype{} in cases of
general CNF \pencs{} and 2-CNF \pencs{}.

   \section{A Lower Bound For General P-Encodings}
\label{sec:gen-lb} 

This section is devoted to the proof of a lower bound on
\(\pencS{n}\), i.e.\ the proof of statement~\ref{enum:lower-bound:1} for
\(\pencS{n}\) and statement~\ref{enum:lower-bound:2} of
Theorem~\ref{thm:lower-bound}. The proof consists of a lower bound on
the size of a \penc{} in regular form and an inductive argument
based on Theorem~\ref{thm:structure-general-CNF-2}. We have observed
at the end of Section~\ref{sec:regular-form} that if
\(\varphi(\vek{x}, \vek{y})\) is a \penc{} with \(n\) input variables
in regular form, then the number of auxiliary variables is at least
\(\sqrt{n/2}\). We improve this bound and, moreover, we
show that there must be an input variable \(x_i\) such that unit
propagation starting from \(\varphi\land x_i\) derives
at least \(\sqrt{n-3/4}+1/2\) literals on auxiliary variables. This implies
that there must be almost as many \Rtype{} clauses in \(\varphi\)
implying the lower bound.

We start with the base cases for induction. By CNF complexity of
a boolean function, we mean the minimum size of a CNF formula
expressing the function. Clearly, this is also the minimum size of an
encoding of the function without auxiliary variables. One can easily
verify that the CNF complexity of $\amo_n$ is $\binom{n}{2}$ and the
CNF complexity of $\exone_n$ is $\binom{n}{2} + 1$.

\begin{lemma}
   \label{lem:base-cases}
Let \(\amoS{n}\) denote the minimum size of a PC encoding of \(\amo_n\)
and let \(\exoneS{n}\) denote the minimum size of a PC encoding of
\(\exone_n\). Then, we have
   \begin{alignat}{3}
      \amoS{2}&=1,\quad&\exoneS{2}&=2,\quad&\pencS{2}&=1\\
      \amoS{3}&=3,\quad&\exoneS{3}&=4,\quad&\pencS{3}&=3\text{.}
   \end{alignat}
\end{lemma}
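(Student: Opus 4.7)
The plan is to establish the upper bounds by explicit constructions and the lower bounds by a short case analysis built on Lemma~\ref{lem:amo-pc} and Lemma~\ref{lem:p-enc}. The eight equalities split naturally into upper bounds (exhibit small PC encodings) and lower bounds (rule out smaller \pencs{} and use the extra structural information carried by \(\exone\)).

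For the upper bounds I would exhibit, for each pair \((n,f)\), an explicit formula. The single clause \(\neg x_1 \vee \neg x_2\) is a prime 2-CNF, hence a PC encoding of \(\amo_2\) by the fact that prime 2-CNFs are propagation complete, and one checks directly that it satisfies \ref{cond:pc-amo-1} and \ref{cond:pc-amo-2}, so it is a \penc{}; adjoining \(x_1 \vee x_2\) yields a prime 2-CNF PC encoding of \(\exone_2\). For \(n=3\), the prime CNF representation \(\bigwedge_{1 \le i < j \le 3}(\neg x_i \vee \neg x_j)\) is again prime 2-CNF, and so provides a PC encoding of \(\amo_3\) that is also a \penc{}, giving \(\amoS{3} \le 3\) and \(\pencS{3} \le 3\). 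Conjoining the ternary clause \(x_1 \vee x_2 \vee x_3\) produces a PC encoding of \(\exone_3\), which I would verify by a finite case check on the partial assignments of \(x_1,x_2,x_3\) (for each number of assigned input literals, either no input literal is implied and no propagation occurs, or a unique implied literal is produced by a single unit step on the added ternary clause or on one of the binary clauses).

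For the lower bounds, Lemma~\ref{lem:p-enc} gives \(\amoS{n}, \exoneS{n} \ge \pencS{n}\), reducing the work to \(\pencS{2} \ge 1\), \(\pencS{3} \ge 3\), and the extra increments \(\exoneS{2} \ge 2\) and \(\exoneS{3} \ge 4\). The first bound is immediate because the empty formula cannot produce any literal by unit propagation and so cannot satisfy \ref{cond:pc-amo-2}. For \(\pencS{3} \ge 3\), I would assume \(|\phi| \le 2\) and apply Lemma~\ref{lem:amo-pc}\ref{enum:amo-pc:binClause} to obtain, for every \(i \in \{1,2,3\}\), a binary clause of \(\phi\) containing \(\neg x_i\); by pigeonhole at least two of these ``witness'' binary clauses coincide, forcing \(\phi\) to have the shape \(\{\neg x_i \vee \neg x_j,\,\neg x_k \vee \ell\}\) for some literal \(\ell\), and a short case analysis on whether \(\ell\) is a negative input literal, a positive input literal, or an auxiliary literal shows that in every subcase \ref{cond:pc-amo-2} fails for at least one ordered pair of input variables.

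The remaining two bounds use the additional constraint that \(\exone_n\) rejects the zero assignment. For \(\exoneS{2} \ge 2\): any single-clause PC encoding is forced by Lemma~\ref{lem:amo-pc}\ref{enum:amo-pc:binClause} to be \(\neg x_1 \vee \neg x_2\), which admits the all-zeros assignment and hence encodes \(\amo_2\), not \(\exone_2\). For \(\exoneS{3} \ge 4\): the same structural analysis used for \(\pencS{3} \ge 3\), pushed one case further by also examining the content of the third clause, shows that the only three-clause \penc{} on three input variables is (up to renaming of auxiliary variables and inputs) the prime CNF representation of \(\amo_3\), and since that formula is satisfied by the all-zeros assignment it cannot encode \(\exone_3\), so \(|\phi| \ge 4\). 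The main obstacle is organizing this last case analysis compactly: each attempted 3-clause alternative to the AMO prime representation either fails \ref{cond:pc-amo-2} along some propagation chain or derives complementary literals from a single input assumption, violating \ref{cond:pc-amo-1}, and one has to show that these two obstructions cover every combinatorial possibility for the third clause.
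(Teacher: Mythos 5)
Your upper bounds and the lower bounds $\pencS{2}\ge 1$, $\pencS{3}\ge 3$, $\exoneS{2}\ge 2$ are correct, and your route is genuinely different from the paper's: the paper avoids all case analysis by noting that the CNF complexity of $\amo_n$ is $\binom{n}{2}$ and of $\exone_n$ is $\binom{n}{2}+1$, and that by Lemma~\ref{lem:aux-var-occurrences} a minimum size encoding with at most $4$ clauses cannot contain any auxiliary variable (an auxiliary variable would need at least $5$ clauses to occur in), so for $n=2,3$ the minimum encoding size equals the CNF complexity; the values of $\pencS{n}$ then follow from Lemma~\ref{lem:p-enc-is-enc}. Your hands-on analysis of two-clause \pencs{} via Lemma~\ref{lem:amo-pc}\ref{enum:amo-pc:binClause} and pigeonhole is a valid, more elementary substitute for that part.

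However, the step for $\exoneS{3}\ge 4$ contains a genuine gap: it is \emph{not} true that the only three-clause \penc{} on three input variables is, up to renaming, the prime CNF representation of $\amo_3$, nor is it true that every alternative fails \ref{cond:pc-amo-1} or \ref{cond:pc-amo-2}. For example,
\[
\phi=(\neg x_1\vee\neg x_2)\wedge(\neg x_1\vee\neg x_3)\wedge(\neg x_2\vee\neg x_3\vee x_1)
\]
is a three-clause \penc{}: $\phi\wedge x_1$ derives $\neg x_2,\neg x_3$ directly, and $\phi\wedge x_2$ (resp.\ $\phi\wedge x_3$) derives $\neg x_1$ from a binary clause and then $\neg x_3$ (resp.\ $\neg x_2$) by unit resolution through the ternary clause, while each $\phi\wedge x_i$ remains satisfiable. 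So the dichotomy you propose (``fails \ref{cond:pc-amo-2} or violates \ref{cond:pc-amo-1}'') does not cover all cases; the correct trichotomy must include a third outcome, namely that the candidate formula is a \penc{} but is satisfied by the all-zero input assignment (suitably extended to the auxiliary variables) and hence encodes $\amo_3$ rather than $\exone_3$ — which is what happens in the example above. Your intended conclusion (no three-clause \penc{} encodes $\exone_3$) is true, but to reach it you must either redo the case analysis to show that every three-clause \penc{} admits the zero assignment (possibly after first passing to a prime \penc{} via Lemma~\ref{lem:prime-encoding}, which eliminates clauses mixing $\neg x_i$ with positive input literals), or simply use the paper's argument that an encoding of size at most $4$ has no auxiliary variables while $\exone_3$ has no CNF representation with fewer than $4$ clauses, since $x_1\vee x_2\vee x_3$ is a prime implicate not entailed by any set of negative binary implicates.
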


\begin{proof}
Representations of $\amo_n$ and $\exone_n$ containing all the
prime implicates achieve these bounds, are the smallest possible,
and are propagation complete.
This implies the required upper bounds on $\amoS{n}$, $\exoneS{n}$,
and $\pencS{n}$ for $n=2,3$.

Lemma~\ref{lem:aux-var-occurrences} implies that if a function
has a representation of size at most $4$, then the
size of the smallest encoding (not necessarily propagation complete)
is equal to the size of the smallest representation.
This implies the stated values of $\amoS{n}$ and $\exoneS{n}$.

As explained in Section~\ref{sec:relations-p-enc}, every \penc{} with \(n\)
input variables is either an encoding of $\amo_n$ or an encoding of $\exone_n$.
Hence, the lower bounds from the previous paragraph hold also for \pencs{}
and this implies the stated values of $\pencS{n}$.
\end{proof}

Let us prove additional properties of sets \(\PA{i}\)
introduced at the end of Section~\ref{sec:regular-form}.

\begin{lemma}
   \label{lem:star-leaves}
   Let \(\phi\) be a \penc{} with input variables \((x_1, \ldots, x_n)\) in regular form.
   Let $i,j,k$ be
   different indices with $\PA{i} = \{g, h_1\}$, $\PA{j} = \{g,
      h_2\}$, and $\PA{k} = \{g, h_3\}$ for
   $g,h_1,h_2,h_3\in\lit{\vek{y}}$.
   Then variables \(\var{h_1}\), \(\var{h_2}\), and \(\var{h_3}\)
   are pairwise different.
\end{lemma}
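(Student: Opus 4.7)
The plan is to argue by contradiction. Suppose two of the variables coincide; by symmetry of the roles of $i,j,k$, I may assume $\var{h_1}=\var{h_2}$. Lemma~\ref{lem:PA-different} gives $\PA{i}\neq\PA{j}$, and since both sets share the literal $g$, the literals $h_1$ and $h_2$ must be distinct. Together with $\var{h_1}=\var{h_2}$, this forces $h_2=\neg h_1$.

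Next, I would run unit propagation on $\phi\wedge x_k$ and read off two kinds of facts. From the regular form (R3) we know that $\QA{k}=\{\neg x_k\vee g,\,\neg x_k\vee h_3\}$, both binary, so immediately $\phi\wedge x_k\vdash_1 g$ and $\phi\wedge x_k\vdash_1 h_3$. From P1, $\phi\wedge x_k$ is satisfiable, so unit propagation cannot derive $\neg g$. From P2 applied to the pairs $(x_k,x_i)$ and $(x_k,x_j)$, it must however derive both $\neg x_i$ and $\neg x_j$.

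The key observation is that $\QA{i}$ consists of exactly the two binary clauses $\neg x_i\vee g$ and $\neg x_i\vee h_1$, so the final resolution step producing the unit $\neg x_i$ must use one of these together with a previously derived unit literal that negates the remaining literal. Since $\neg g$ is unavailable, this step must use $\neg x_i\vee h_1$, which means $\phi\wedge x_k\vdash_1 \neg h_1$. The same reasoning for $\neg x_j$, where $\QA{j}=\{\neg x_j\vee g,\,\neg x_j\vee\neg h_1\}$, forces $\phi\wedge x_k\vdash_1 h_1$. Together these give $\phi\wedge x_k\vdash_1\bot$, contradicting P1.

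The only subtle point, which is actually immediate once one writes out the regular form, is the justification that the derivation of $\neg x_i$ by unit resolution really has to conclude with a clause from $\QA{i}$ whose other literal has already been derived in negated form; this is just the definition of unit resolution combined with R1–R3 ensuring that $\QA{i}$ has the tightly controlled shape $\{\neg x_i\vee g,\neg x_i\vee h_1\}$ with no other input literals present. No serious obstacle is expected beyond cleanly stating this last-step analysis and appealing to the symmetry between the three indices for the remaining cases $\var{h_1}=\var{h_3}$ and $\var{h_2}=\var{h_3}$.
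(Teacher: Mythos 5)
Your proof is correct and follows essentially the same route as the paper: reduce to $h_2=\neg h_1$ via Lemma~\ref{lem:PA-different}, then observe that $\phi\wedge x_k\vdash_1 g$ forces the derivations of $\neg x_i$ and $\neg x_j$ to go through $\neg h_1$ and $h_1$ respectively, contradicting~\ref{cond:pc-amo-1}. The paper states the last-step analysis of the clauses in $\QA{i}$ and $\QA{j}$ only implicitly, so your explicit justification is simply a more detailed write-up of the same argument.
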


\begin{proof}
   Let us show by contradiction that \(\var{h_1}\neq\var{h_2}\).
   To this end, assume \(\var{h_1}=\var{h_2}\). Since \(\PA{i}\neq\PA{j}\)
by Lemma~\ref{lem:PA-different}, we have
   \(h_1 = \neg h_2\).
   By condition~\ref{cond:pc-amo-2}, \(\phi\wedge x_k\vdash_1\neg x_i\)
   and $\phi\wedge x_k\vdash_1\neg x_j$. Since \(\phi\wedge
      x_k\vdash_1 g\), necessarily
   \(\phi\wedge x_k\vdash_1\neg h_1\) and \(\phi\wedge
      x_k\vdash_1\neg h_2\). However, then \(\phi\wedge
      x_k\vdash_1\bot\) which is in contradiction with~\ref{cond:pc-amo-1}.

   The remaining cases, i.e. \(\var{h_1}\neq\var{h_3}\) and
   \(\var{h_2}\neq\var{h_3}\) are symmetrical.
\end{proof}

Using Lemma~\ref{lem:star-leaves} for all triples of indices
of input variables, we obtain the following.

\begin{corollary}
\label{cor:star-leaves-gen}
Assume, \(\phi(\vek{x}, \vek{y})\) is a \penc{} with input variables
\(\vek{x}=(x_1, \ldots, x_n)\) in regular form. Let
\[
   I_h=\{i\in\{1, \ldots, n\}\mid h\in\PA{i}\}\text{,}
\]
where $h \in \lit{\vek{y}}$, and define
\[
   L_h=\bigcup_{i\in I_h}\PA{i} \ .
\]
If $|I_h|\geq 3$, then $L_h$ contains literals on different
variables and \(|L_h|=|I_h|+1\).
\end{corollary}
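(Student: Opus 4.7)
The plan is to unpack the definitions and apply Lemma~\ref{lem:star-leaves} pairwise by exploiting the assumption $|I_h| \ge 3$. For each $i \in I_h$, the set $\PA{i}$ has exactly two elements by regularity, one of which is $h$, so we may write $\PA{i} = \{h, h'_i\}$ for a uniquely determined literal $h'_i \in \lit{\vek{y}}$. Since $\PA{i}$ does not contain complementary literals (as observed right after the definition of $\PA{i}$), we have $\var{h'_i} \neq \var{h}$ for every $i \in I_h$.

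Next, I would argue that the literals $h'_i$, for $i \in I_h$, are on pairwise different variables. Given any two distinct indices $i, j \in I_h$, since $|I_h| \ge 3$ we may pick some third index $k \in I_h \setminus \{i,j\}$. Applying Lemma~\ref{lem:star-leaves} to the triple $(i,j,k)$ with the common literal $g = h$ and the three ``other'' literals $h'_i, h'_j, h'_k$ yields that $\var{h'_i}, \var{h'_j}, \var{h'_k}$ are pairwise different; in particular $\var{h'_i} \neq \var{h'_j}$.

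Combining the two observations, the set $L_h = \{h\} \cup \{h'_i \mid i \in I_h\}$ consists of literals on pairwise distinct variables: the variable $\var{h}$ differs from each $\var{h'_i}$ by the first observation, and the variables $\var{h'_i}$ are pairwise distinct by the second. Hence $|L_h| = 1 + |I_h|$, as claimed. I do not expect any genuine obstacle here; the only thing to be careful about is noting that $|I_h| \ge 3$ is exactly what is needed to reduce a pairwise statement on $\{h'_i\}_{i \in I_h}$ to the triplewise Lemma~\ref{lem:star-leaves}, and to record separately that $h$ itself is on a variable distinct from each $\var{h'_i}$, which uses only regularity of $\phi$ and not the lemma.
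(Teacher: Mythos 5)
Your proof is correct and follows essentially the same route as the paper: the paper's (very terse) argument likewise removes $h$ from each $\PA{i}$, invokes Lemma~\ref{lem:star-leaves} (for triples, which is where $|I_h|\ge 3$ enters) to get that the remaining literals are on pairwise different variables, and uses the absence of complementary literals in $\PA{i}$ to see these variables differ from $\var{h}$. You have merely spelled out the details that the paper leaves implicit.
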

\begin{proof}
   This is a corollary of Lemma~\ref{lem:star-leaves}. If we
   remove literal \(h\) from each \(\PA{i}\), \(i\in I_h\), then
   the remaining literals are on pairwise different variables
   different from $\var{h}$.
\end{proof}

We are now ready to show the lower bound on the size of a \penc{} in
regular form.

\begin{lemma}
   \label{lem:regular-size}
   If $\phi(\vek{x},\vek{y})$ is a \penc{} with \(n\geq
      7\) input
   variables \(\vek{x}=(x_1, \ldots, x_n)\)
in regular form, then $|\phi| \ge 2n + \sqrt{n-3/4} - 3/2$.
\end{lemma}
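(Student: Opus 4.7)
The plan is to split the clauses of $\phi$ into the $2n$ binary clauses of type \Qtype{} built into the regular form, and the remaining clauses of type \Rtype{}, and to bound the number of \Rtype{} clauses from below by $\sqrt{n-3/4} - 3/2$. For each input variable $x_i$ put $A_i = \upclos[\phi]{x_i} \cap \lit{\vek{y}}$. I first argue that $|\phi| \ge 2n + |A_i| - 2$ for every $i$. The two clauses of $\QA{i}$ produce the literals in $\PA{i} \subseteq A_i$. Any clause of $\QA{k}$ with $k \neq i$ is of the form $\neg x_k \vee e$ with $e$ auxiliary, and would have to fire on $x_k$ to contribute the literal $e$ to $A_i$; but Lemma~\ref{lem:amo-pc}\ref{enum:amo-pc:posToPos} rules out $\phi \wedge x_i \vdash_1 x_k$ for $k \neq i$. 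Since each clause participates in at most one unit-resolution step, each of the remaining $|A_i| - 2$ literals in $A_i$ is therefore derived by a distinct \Rtype{} clause.

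The main step is the combinatorial inequality $\max_i |A_i| \ge \sqrt{n-3/4} + 1/2$. Let $G$ be the graph on auxiliary literals whose $n$ pairwise distinct edges (Lemma~\ref{lem:PA-different}) are the sets $\PA{j}$, and let $\Delta$ denote its maximum degree. Condition~\ref{cond:pc-amo-2} forces the set $\{\neg h : h \in A_i\}$ to meet every $\PA{j}$ with $j \neq i$, so it is a vertex cover of $G$ with the edge $\PA{i}$ removed, yielding the \emph{vertex-cover bound} $|A_i| \ge (n-1)/\Delta$ (each cover literal is incident to at most $\Delta$ edges). When $\Delta \ge 3$, I choose a literal $h^*$ of degree $\Delta$ and any $i^*$ with $h^* \in \PA{i^*}$, write $\PA{i^*} = \{h^*, w\}$, and invoke Corollary~\ref{cor:star-leaves-gen}: the $\Delta$ partners of $h^*$ across the edges incident to it lie on pairwise distinct variables. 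For every $j \neq i^*$ with $h^* \in \PA{j} = \{h^*, w_j\}$, the literal $\neg h^*$ is excluded from $A_{i^*}$, so unit propagation must derive $\neg w_j \in A_{i^*}$. The $\Delta - 1$ resulting literals $\neg w_j$ are pairwise distinct and distinct from $h^*$ and $w$, giving the \emph{star bound} $|A_{i^*}| \ge \Delta + 1$.

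The two bounds combine via the algebraic identity $(\sqrt{n-3/4} + 1/2)(\sqrt{n-3/4} - 1/2) = n - 1$. For $\Delta \ge 3$, either $\Delta + 1 \ge \sqrt{n-3/4} + 1/2$ directly, or $\Delta < \sqrt{n-3/4} - 1/2$ and the vertex-cover bound gives $(n-1)/\Delta > \sqrt{n-3/4} + 1/2$. For $\Delta \le 2$, only the vertex-cover bound is used: the inequality $(n-1)/2 \ge \sqrt{n-3/4} + 1/2$ is equivalent to $(n-1)(n-7) \ge 0$ and therefore holds exactly when $n \ge 7$, matching the hypothesis. Together with the estimate of the first paragraph this yields $|\phi| \ge 2n + \sqrt{n-3/4} - 3/2$. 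The main obstacle is the case analysis for $\max_i |A_i|$: one must verify carefully, using Corollary~\ref{cor:star-leaves-gen}, that $h^*, w, \neg w_1, \dots, \neg w_{\Delta-1}$ are pairwise distinct, and check that the star and the vertex-cover bounds balance exactly to the stated quantity.
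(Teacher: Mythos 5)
Your proposal is correct and follows essentially the same route as the paper: it counts the $2n$ clauses of type \Qtype{}, bounds the number of type-\Rtype{} clauses by $|\upclos[\phi]{x_i}\cap\lit{\vek{y}}|-2$, and combines the covering bound $|M_i|\cdot\Delta\ge n-1$ with the bound $|M_i|\ge\Delta+1$ (via Corollary~\ref{cor:star-leaves-gen}) exactly as in the paper, your graph-theoretic phrasing being just a restatement of the sets $I_h$ and $L_h$. The only cosmetic difference is that for $\Delta\le 2$ you verify $(n-1)/2\ge\sqrt{n-3/4}+1/2$ directly for $n\ge 7$ instead of folding it into the $\max\{s+1,(n-1)/s\}$ minimization.
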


\begin{proof}
Since the formula $\phi(\vek{x},\vek{y})$ contains $2n$ clauses of type \Qtype{},
it is sufficient to prove that it contains at least \(\sqrt{n-3/4}-3/2\)
clauses of type \Rtype{}.

   Let $L=\bigcup_{i=1}^n \PA[\phi]{i}$ be the set of
   auxiliary literals in clauses of type \Qtype{}.
   For each \(h\in L\), let \(I_h\) and \(L_h\) be defined as in
   Corollary~\ref{cor:star-leaves-gen}.
   Choose $g\in L$ that maximizes \(\left|I_g\right|\),
fix some $i \in I_g$ and denote
$M_i=\upclos[\phi]{x_i} \cap \lit{\vek{y}}$.
Each literal in $M_i \setminus \PA[\phi]{i}$ is derived by
unit propagation from $\phi \wedge x_i$
using a different clause of type \Rtype{}. Hence, the number of
the clauses of type \Rtype{} is at least $|M_i|-2$.

For each \(j\in\{1, \ldots, n\}\setminus\{i\}\), the literal $\neg x_j$
must be derived from some \(h\in M_i\) using a clause
of type \Qtype{} containing $\neg h$.
As $\left|I_{\neg h}\right|\le\left|I_g\right|$ for every \(\neg h \in L\),
each \(h\in M_i\) can derive $\neg x_j$ for at most $\left|I_g\right|$
   values of $j$. Thus,
   \begin{equation}
      \label{eq:lower-bound-UP-dir2}
      \left|M_i\right|\cdot\left|I_g\right| \ge n-1\text{.}
   \end{equation}

If $\left|I_g\right| \ge 3$,
then according to Corollary~\ref{cor:star-leaves-gen}, we get \(|L_g|=|I_g|+1\).
In order to derive all $\neg x_j$, \(j\in I_g\setminus\{i\}\) from \(\phi \wedge x_i\),
$M_i$ has to contain the negations of the literals in \(L_g\setminus\PA{i}\).
Since $M_i$ contains also $\PA{i}$ and these literals are on
variables not in $\var{L_g\setminus\PA{i}}$, we have
   \begin{equation}
      \label{eq:lower-bound-UP-dir1}
      \left|M_i\right| \ge \left|I_g\right|+1\text{.}
   \end{equation}

Combining the two bounds, we get
$$
\left|M_i\right| \ge \max\{|I_g|+1, (n-1)/|I_g|\}
$$
as follows. If $\left|I_g\right| \ge 3$, the
claims~\eqref{eq:lower-bound-UP-dir1} and~\eqref{eq:lower-bound-UP-dir2} apply.
If $\left|I_g\right| \le 2$, observe that by~\eqref{eq:lower-bound-UP-dir2},
$\left|M_i\right| \ge (n-1)/2 \ge 3 \ge |I_g|+1$, since $n \ge 7$.

For \(s \ge 1\), the smallest value of the function $s\mapsto\max\{s+1, (n-1)/s\}$ 
is $s_0+1$ for $s_0 \ge 1$, such that $s_0 + 1 = (n-1)/s_0$.
Hence, we have $|M_i| \ge \sqrt{n-3/4}+1/2$ and the number of clauses
of type \Rtype{} is at least $\sqrt{n-3/4}-3/2$.
\end{proof}

Let us conclude the section with the following theorem which
represents the statement~\ref{enum:lower-bound:1} for
\(\pencS{n}\) and statement~\ref{enum:lower-bound:2} of
Theorem~\ref{thm:lower-bound}.

\begin{theorem}\label{thm:general-lb-3}
   For \(n\ge 3\), the minimum size $\pencS{n}$ of a \penc{} with
   \(n\) input variables satisfies
   \begin{enumerate}
      \item If $n \le 8$, then $\pencS{n} = 3n-6$.
      \item
         If $n \ge 9$, then
         $\pencS{n} \ge 2n + \sqrt{n} - 2$.
   \end{enumerate}
\end{theorem}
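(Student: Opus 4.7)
The plan is to prove both statements simultaneously by strong induction on $n$, with base case $n=3$ supplied by Lemma~\ref{lem:base-cases}, which gives $\pencS{3}=3=3\cdot 3-6$. The upper bound $\pencS{n}\le 3n-6$ for $3\le n\le 8$ is immediate from Lemma~\ref{lem:simple-upper-bound}, since the sequential encoding is a 2-CNF \penc{}; only the matching lower bounds need to be proved.

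For the inductive step with $n\ge 4$, I would take a \penc{} $\phi$ of minimum size $\pencS{n}$ and use Lemma~\ref{lem:prime-encoding} to assume $\phi$ is prime. Applying Theorem~\ref{thm:structure-general-CNF-2} produces three cases: case~\ref{enum:structure-general-CNF-2:1} is ruled out by the minimality of $\phi$, so either case~\ref{enum:structure-general-CNF-2:2} or case~\ref{enum:structure-general-CNF-2:3} must hold. In case~\ref{enum:structure-general-CNF-2:2} one has $\pencS{n}\ge\pencS{n-1}+3$, and the induction hypothesis propagates: the identity $3(n-1)-6+3=3n-6$ handles the range $n\le 8$, while for $n\ge 10$ the inequality $\sqrt{n-1}+1\ge\sqrt{n}$ gives $2(n-1)+\sqrt{n-1}-2+3\ge 2n+\sqrt{n}-2$.

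In case~\ref{enum:structure-general-CNF-2:3}, $\phi$ is in regular form. For $n\ge 7$ I would invoke Lemma~\ref{lem:regular-size} to obtain $|\phi|\ge 2n+\sqrt{n-3/4}-3/2$. The elementary inequality $\sqrt{n-3/4}+\tfrac12\ge\sqrt{n}$, valid for all $n\ge 1$, converts this into $|\phi|\ge 2n+\sqrt{n}-2$, which is exactly the bound required for $n\ge 9$. For $n\in\{7,8\}$ the integrality of $|\phi|$ rounds the estimate up to $3n-6$ (for $n=7$ one gets exactly $15$, and for $n=8$ one has $\sqrt{7.25}-3/2>1/2$, forcing $|\phi|\ge 18$). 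For $n\in\{4,5,6\}$, where Lemma~\ref{lem:regular-size} does not apply, the crude count $|\phi|\ge 2n$ (the type-\Qtype{} clauses alone, by~\ref{regular-cond1} together with~\ref{regular-cond2}) already exceeds $3n-6$, since $2n\ge 3n-6$ precisely when $n\le 6$.

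The main delicate point is the interface between the two regimes: verifying by hand that the non-integer bound of Lemma~\ref{lem:regular-size} rounds up to $3n-6$ at $n=7,8$, and that the additive $+3$ from case~\ref{enum:structure-general-CNF-2:2} preserves the correct form of the induction hypothesis across the transition at $n=9$. Both reduce to short arithmetic checks, so there is no substantial new idea required beyond assembling Theorem~\ref{thm:structure-general-CNF-2}, Lemma~\ref{lem:regular-size}, and the base case into the induction.
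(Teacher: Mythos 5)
Your proposal is correct and follows essentially the same route as the paper's proof: minimality of $\phi$ plus Theorem~\ref{thm:structure-general-CNF-2} to handle encodings not in regular form by induction with increment $3$, Lemma~\ref{lem:regular-size} (respectively the crude count $2n$ of type-\Qtype{} clauses for $n\le 6$) for encodings in regular form, and the same rounding checks at $n=7,8$ and at the transition $n=9$, merely packaged as a single strong induction instead of the paper's two-stage argument. One small arithmetic correction: at $n=8$ you need $\sqrt{7.25}-3/2>1$ (which holds since $\sqrt{7.25}>5/2$), not merely $>1/2$, in order to force $|\phi|\ge 18$ rather than only $|\phi|\ge 17$.
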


\begin{proof}We treat the two claims separately:
   \begin{enumerate}
      \item It was shown in Lemma~\ref{lem:simple-upper-bound}
and Lemma~\ref{lem:p-enc} that
         $\pencS{n} \le 3n-6$. To show that
$\pencS{n} \ge 3n-6$ for $3 \le n \le 8$,
we proceed by induction on $n$. The basis
         $\pencS{3} = 3$ is given by Lemma~\ref{lem:base-cases}.
         Let \(\phi\) be a prime \penc{} with \(n \ge 4\) input
         variables of size $\pencS{n}$.
Theorem~\ref{thm:structure-general-CNF-2}
implies that either \(\left|\phi\right|\ge\pencS{n-1}+3\) or \(\phi\) is
in regular form. In the first case, the induction hypothesis implies
$$
|\phi| \ge \pencS{n-1} + 3 \ge 3n-6\text{.}
$$
If $\phi$ is in regular form and $n \le 6$, then the structure of
the regular form implies $|\phi| \ge 2n \ge 3n-6$.
If $\phi$ is in regular form and $7 \le n \le 8$,
then Lemma~\ref{lem:regular-size} implies
$$
|\phi| \ge \left\lceil 2n + \sqrt{n - 3/4} - 3/2 \right\rceil \ge 3n-6 \ .
$$
\item We proceed by induction on $n$ using the previous case as the basis.
   Let \(\phi\) be a prime \penc{} with \(n \ge 9\)
   input variables of size $\pencS{n}$.
It follows from Theorem~\ref{thm:structure-general-CNF-2} that
either \(\left|\phi\right|\ge\pencS{n-1}+3\) or \(\phi\) is in regular form.
In the first case, observe that:
\begin{itemize}
\item If \(n=9\), we obtain \(\pencS{n-1}+3 = 21 > 2n + \sqrt{n} - 2\) by the
      first claim of this theorem.
\item If \(n \ge 10\), the induction hypothesis and $\sqrt{n-1} + 1 > \sqrt{n}$ imply
$$
\pencS{n-1}+3 \ge 2(n-1) + \sqrt{n-1} + 1 \ge 2n + \sqrt{n} - 2
$$
as required.
\end{itemize}
If $\phi$ is in regular form, Lemma~\ref{lem:regular-size} implies
$$
\left|\phi\right| \ge 2n + \sqrt{n - 3/4} - 3/2 \ge 2n + \sqrt{n} - 2
$$
as required.
   \end{enumerate}

\end{proof}

Note that the upper bound \(3n-6\) in Theorem~\ref{thm:general-lb-3} for
\(n\leq 8\) is achieved by the sequential encoding of \(\amo_n\)
described in Section~\ref{ssec:sequential-enc}. It follows that
actually \(\pencS{n}=\amoS{n}=3n-6\) in this case where \(\amoS{n}\)
denotes the minimum size of a PC encoding of \(\amo_n\).

   \section{A Lower Bound For 2-CNF P-Encodings}
\label{sec:quad-lb} 

In this section we prove the remaining parts of
Theorem~\ref{thm:lower-bound}.
In particular, we prove a lower bound \(2n+2\sqrt{n}-3\)
on $\pencQS{n}$ and prove that $\pencQS{n}$ is equal to the
minimum size of a 2-CNF encoding of $\amo_n$, even if we
do not require propagation completeness.
Importance of the special case of 2-CNF encodings for $\amo_n$ comes from the fact
that the smallest known encodings are in 2-CNF as well as all the other
encodings suggested in the literature.

We have already argued at
the beginning of Section~\ref{sub:main-result} that a 2-CNF \penc{}
\(\varphi\)
with \(n\geq 3\) input variables cannot encode \(\exone_n\). It
follows by Lemma~\ref{lem:p-enc-is-enc} that \(\varphi\) is an
encoding of \(\amo_n\).
The correspondence holds also in the opposite direction
in the following sense.

\begin{lemma} \label{lem:minimum-2-CNF-AMO}
A minimum size 2-CNF encoding of $\amo_n$ does not have unit
implicates, is a prime \penc{}, and its size is equal to $\pencQS{n}$.
\end{lemma}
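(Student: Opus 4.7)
The plan is to establish the three claims of Lemma~\ref{lem:minimum-2-CNF-AMO} in turn, leveraging the cited result~\cite{BBCGK13} that every prime 2-CNF is propagation complete. Fix a minimum size 2-CNF encoding $\phi(\vek{x}, \vek{y})$ of $\amo_n$.

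First I would show $\phi$ has no unit implicates. Suppose for contradiction that $\ell$ is a unit implicate of $\phi$. Since $\amo_n$ admits satisfying assignments with both $x_i = 0$ and $x_i = 1$ for every input $x_i$, the literal $\ell$ cannot lie in $\lit{\vek{x}}$, so $\ell$ is on some auxiliary variable $y$. A brief case analysis shows that $y$ must occur in $\phi$ with the same sign as $\ell$: if $y$ does not occur in $\phi$ at all, any model of $\phi$ could be modified to falsify $\ell$; if $y$ occurs only with the opposite sign, flipping $y$ in any model of $\phi$ would leave a model, again contradicting that $\ell$ is an implicate. Substituting for $y$ the constant that fixes $\ell$ to true in $\phi$ then yields a 2-CNF encoding of the same function on variables $\vek{x}, \vek{y} \setminus \{\var{\ell}\}$ with strictly fewer clauses, contradicting minimality of $\phi$.

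Next I would establish primality and the \penc{} conditions. Primality follows at once: if some $C \in \phi$ strictly contained an implicate $C'$ of $\phi$, then $|C| \le 2$ together with $|C'| \ge 1$ (using that $\phi$ is satisfiable) would force $|C'| = 1$, making $C'$ a unit implicate --- impossible by the previous step. For the \penc{} conditions, condition~\ref{cond:pc-amo-1} holds because $\amo_n \wedge x_i$ is satisfiable and any satisfying $\vek{x}$-assignment lifts to a model of $\phi$; condition~\ref{cond:pc-amo-2} follows by noting that $\phi \wedge x_i \models \neg x_j$ semantically (since $\phi$ encodes $\amo_n$) and applying propagation completeness of the prime 2-CNF $\phi$ to convert this semantic implication into a unit propagation derivation (the alternative, $\phi \wedge x_i \vdash_1 \bot$, is ruled out by~\ref{cond:pc-amo-1}). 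Hence $\phi$ is a prime \penc{}.

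Finally I would identify $|\phi|$ with $\pencQS{n}$. Since $\phi$ is a 2-CNF \penc{}, $|\phi| \ge \pencQS{n}$. For the reverse inequality, a 2-CNF \penc{} with $n \ge 3$ input variables is itself a 2-CNF encoding of $\amo_n$ --- combining Lemma~\ref{lem:p-enc-is-enc} with the observation in Section~\ref{sub:main-result} that no 2-CNF can encode $\exone_n$ for $n \ge 3$ --- so the minimum-size 2-CNF \penc{} gives a 2-CNF encoding of $\amo_n$ of size $\pencQS{n}$, whence $\pencQS{n} \ge |\phi|$; the small cases $n \le 2$ are checked directly. The main obstacle in this plan is the unit-implicate elimination, where one must verify that $\var{\ell}$ actually occurs in $\phi$ with the sign of $\ell$ so that the substitution removes at least one clause; the remaining steps are fairly routine given the cited propagation completeness of prime 2-CNFs.
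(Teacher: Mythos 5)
Your proposal is correct and follows essentially the same route as the paper's proof: rule out unit implicates via minimality (setting the auxiliary variable fixed by the implicate and removing at least one clause), conclude primality and invoke the result of~\cite{BBCGK13} that a prime 2-CNF is propagation complete to get the \penc{} conditions, and then obtain $|\phi|=\pencQS{n}$ from the two-sided comparison using the fact that a 2-CNF \penc{} with $n\ge 3$ inputs must encode $\amo_n$. The only difference is that you spell out details the paper leaves implicit (e.g.\ that the literal $\ell$ itself occurs in $\phi$), which is fine.
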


\begin{proof}
Let $\phi(\vek{x}, \vek{y})$ be a minimum size 2-CNF encoding of $\amo_n$.
Since $\amo_n$ does not have unit implicates, $\phi$ does not have
a unit implicate on an input variable. Assume for a contradiction
that $\phi$ has a unit implicate $l$ on an auxiliary variable.
Setting this variable, so that $l$ is satisfied, leads to
a PC encoding of $\amo_n$. Moreover, since the literal $l$ has at least
one occurence in $\phi$, at least one clause can be removed and
this contradicts the assumption. Hence, $\phi$ does not have
unit implicates. It follows that \(\varphi\) is a
prime 2-CNF and by results of~\cite{BBCGK13} we get that \(\varphi\)
is propagation complete.
   It follows that it is a \penc{} and thus
   \(|\varphi|\geq\pencQS{n}\). On the other hand, we have observed above
   that a 2-CNF \penc{} with \(n\geq 3\) inputs is an encoding of
   \(\amo_n\) and thus \(\pencQS{n} \ge |\varphi|\).
Hence, we have $|\phi| = \pencQS{n}$ for all $n \ge 1$, since the
cases $n \le 2$ can easily be verified.
\end{proof}

It follows that refering to minimum size 2-CNF encodings of
\(\amo_n\) is the same as refering to minimum size 2-CNF \pencs{}.
In order to prove a lower bound on the size of a 2-CNF encoding of \(\amo_n\),
we use Theorem~\ref{thm:structure-general-CNF-2} similarly as in
Section~\ref{sec:gen-lb} to handle encodings, which are not in regular form.
However, the analysis of encodings in regular form is different
and implies a stronger lower bound.

One of the differences in case of 2-CNF formulas is that a unit
resolution derivation can be reversed in the following sense.
If one step of unit propagation can derive $h$ from $\neg g$, then we
can also derive $g$ from $\neg h$ in one step, since both these
steps are possible if and only if the clause $g \vee h$ is contained in the
formula. Due to this, it is useful to
represent a 2-CNF formula with an \emph{implication graph} introduced
in~\cite{APT79} as follows.
For a formula $\phi = \phi(\vek{z})$,
let $G_\phi=(V,E)$ be a directed graph with the set of vertices $V = \lit{\vek{z}}$
and with the set of directed edges $E$ containing \((\neg g, h)\) and \((\neg h, g)\)
for each clause \((g\vee h) \in \phi\). These two directed edges (arcs) represent the
implications \(\neg g\to h\) and \(\neg h\to g\), respectively.
This graph is \emph{skew-symmetric} (also called duality property
in~\cite{DP60} and mirror property in~\cite{CH11}), meaning that $(g,h) \in E$ if and only
if $(\neg h, \neg g) \in E$. We can exploit the properties of the
implication graph to show
stronger properties of the sets of literals \(\PA[\phi]{i}\) defined
in~\eqref{eq:PA} than in the case of general CNF encodings.
In particular, we show that the part of the analysis
that determines the constant of the term $\sqrt{n}$ in the lower bound
can be reduced to the case where the sets
\(\PB[\phi]{i}=\var{\PA{i}}\) are pairwise different
for \(i=1, \ldots, n\). Recall that in the general case we were only able
to show in Lemma~\ref{lem:PA-different} that the sets \(\PA[\phi]{i}\)
are pairwise different.

The following two basic properties of 2-CNF formulas are used implicitly throughout this section.
See, e.g., Theorem 5.6 in~\cite{CH11} for the omitted proof of
Lemma~\ref{lem:Chains-2CNF-general}.

\begin{lemma} \label{lem:Chains-2CNF-general}
   Let $\phi(\vek{z})$ be a 2-CNF formula. Then, for each $g \in \lit{\vek{z}}$, we have
   $\phi \vdash_1 g$
   if and only if there is a literal \(h\) which forms a unit clause
   in \(\phi\) and there is a path from \(h\) to \(g\) in \(G_\phi\).
\end{lemma}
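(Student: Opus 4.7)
The plan is to prove both directions of the equivalence by straightforward inductions, exploiting the fact that in a 2-CNF formula the only clauses ever produced by (unit) resolution are unit clauses: a unit clause resolved with a binary clause yields a unit clause, and two complementary unit clauses yield $\bot$, but no binary clause can shrink without one of its literals being already refuted by a unit clause. Hence every unit-resolution derivation of a literal $g$ from $\phi$ can, after removing redundant steps, be arranged as a chain in which each intermediate step resolves a previously available unit clause (a unit clause of $\phi$ or a literal already derived) against a binary clause of $\phi$.

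For the forward direction, I would induct on the length of a shortest unit-resolution derivation of $g$ from $\phi$. If the length is zero, then $g \in \phi$ as a unit clause and we take $h = g$ with the trivial path. Otherwise the final step resolves a unit $\{g'\}$ with a binary clause $(\neg g' \vee g) \in \phi$; by the definition of $G_\phi$ the latter clause contributes the arc $(g',g)$, and the induction hypothesis applied to the strictly shorter derivation of $g'$ yields a unit clause $h$ of $\phi$ together with a path $h \leadsto g'$ in $G_\phi$, which we extend by $(g',g)$.

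For the reverse direction, I would induct on the length of a directed path from a unit clause $h$ of $\phi$ to $g$ in $G_\phi$. A length-zero path gives $g = h$ directly. For a path of positive length, let $g'$ be the predecessor of $g$; the arc $(g',g)$ must arise from a binary clause $(\neg g' \vee g) \in \phi$. The induction hypothesis gives $\phi \vdash_1 g'$, and one further unit-resolution step using the binary clause $(\neg g' \vee g)$ derives $g$.

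The only non-trivial point is the correct bookkeeping of the correspondence between binary clauses and arcs: each binary clause $(a \vee b)$ produces two arcs $(\neg a, b)$ and $(\neg b, a)$ in $G_\phi$, so the arc $(g',g)$ used in either induction step unambiguously identifies the clause $(\neg g' \vee g)$ (equivalently $(\neg g' \vee g)$ read as $(\neg g \vee g')$-reversed under skew symmetry), and this clause is available in $\phi$ by construction. Beyond that, the argument is purely structural and should follow the lines above without additional technical obstacles.
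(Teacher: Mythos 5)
Your proof is correct. Note that the paper itself gives no proof of this lemma; it defers to a standard reference (Theorem 5.6 in the cited monograph), so your self-contained argument fills in exactly what the paper omits, and it is the standard argument: the key observation that in a 2-CNF formula every clause produced by unit resolution is a unit clause (or $\bot$), so every binary clause used in a derivation comes from $\phi$ itself, makes both inductions go through precisely as you describe, with the arc $(g',g)$ of $G_\phi$ corresponding to the clause $\neg g' \vee g \in \phi$ in each direction. The only cosmetic caveat is the degenerate role of unit clauses of $\phi$ in the graph (the paper's definition of $G_\phi$ only assigns arcs to binary clauses), but this does not affect your argument: a unit clause $h$ serves only as the starting vertex of the path, which is how you use it, and in the paper's applications the formulas in question contain no unit clauses anyway.
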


We are mainly interested in the properties of the minimum size 2-CNF
encodings of $\amo_n$ and these encodings do not contain unit clauses
by Lemma~\ref{lem:minimum-2-CNF-AMO}.

\begin{lemma}
   \label{lem:Chains-2cnf-sym}
Let \(\phi(\vek{z})\) be a 2-CNF formula not containing unit
clauses and let \(g, h\in\lit{\vek{z}}\).
Then the following conditions are equivalent:
   \begin{enumerate}[label={(\roman*)}]
      \item\label{enum:Chains-2cnf-sym:ij}\(\phi\wedge g\vdash_1\neg h\),
      \item\label{enum:Chains-2cnf-sym:ji}\(\phi\wedge h\vdash_1\neg g\).
   \end{enumerate}
\end{lemma}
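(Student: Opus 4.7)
The plan is to reduce each side of the equivalence to a statement about directed paths in the implication graph $G_\phi$ and then conclude by skew-symmetry. The first thing to observe is that the implication graph is defined purely in terms of binary clauses, so adding a unit clause to $\phi$ does not change it; in particular $G_{\phi \wedge g} = G_\phi$ and $G_{\phi \wedge h} = G_\phi$. This is what makes Lemma~\ref{lem:Chains-2CNF-general} directly applicable to the formulas $\phi \wedge g$ and $\phi \wedge h$.

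First, I apply Lemma~\ref{lem:Chains-2CNF-general} to the 2-CNF formula $\phi \wedge g$. Since $\phi$ contains no unit clauses by assumption, the only unit clause in $\phi \wedge g$ is $g$ itself, so the lemma yields that $\phi \wedge g \vdash_1 \neg h$ holds exactly when there is a directed path from $g$ to $\neg h$ in $G_\phi$. The same argument applied to $\phi \wedge h$ shows that $\phi \wedge h \vdash_1 \neg g$ holds exactly when there is a directed path from $h$ to $\neg g$ in $G_\phi$.

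Second, I invoke the skew-symmetry of $G_\phi$: the edge set satisfies $(u,v)\in E$ iff $(\neg v, \neg u)\in E$. A straightforward induction on path length extends this reversal to directed paths, so that a path from $u$ to $v$ in $G_\phi$ exists iff a path from $\neg v$ to $\neg u$ in $G_\phi$ exists. Taking $u = g$ and $v = \neg h$, so that $\neg v = h$ and $\neg u = \neg g$, the two path-existence statements from the previous paragraph coincide, which gives the equivalence of \ref{enum:Chains-2cnf-sym:ij} and \ref{enum:Chains-2cnf-sym:ji}.

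There is essentially no significant obstacle here. The lemma is a direct combination of the graph-theoretic characterization of unit propagation in a 2-CNF (Lemma~\ref{lem:Chains-2CNF-general}) with the skew-symmetry of the implication graph. The only point requiring a brief justification is the observation that adding a unit clause to $\phi$ leaves $G_\phi$ unchanged, so that the hypothesis of Lemma~\ref{lem:Chains-2CNF-general} about the absence of further unit clauses translates into the statement that unit propagation from $g$ (resp.\ from $h$) corresponds exactly to following directed edges out of $g$ (resp.\ out of $h$) in $G_\phi$.
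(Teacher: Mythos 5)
Your proof is correct and follows essentially the same route as the paper: apply Lemma~\ref{lem:Chains-2CNF-general} to $\phi \wedge g$ (whose only unit clause is $g$, since $\phi$ has none) to characterize \ref{enum:Chains-2cnf-sym:ij} as the existence of a path from $g$ to $\neg h$ in $G_\phi$, and then use skew-symmetry to reverse the path into one from $h$ to $\neg g$. Your explicit remarks that adding a unit clause leaves $G_\phi$ unchanged and that path reversal follows by induction on length are just slightly more detailed versions of steps the paper leaves implicit.
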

\begin{proof}
Assume~\ref{enum:Chains-2cnf-sym:ij}.
By Lemma~\ref{lem:Chains-2CNF-general}, there is a path in $G_\phi$
from a unit clause in the formula $\phi \wedge g$ to $\neg h$.
By the assumption, this path starts in $g$.
Since $G_\phi$ is skew-symmetric, it contains also
a path from $h$ to $\neg g$ and, hence, we have~\ref{enum:Chains-2cnf-sym:ji}.
By symmetry, also~\ref{enum:Chains-2cnf-sym:ji}
implies~\ref{enum:Chains-2cnf-sym:ij}.
\end{proof}

The following series of statements analyze properties of
minimum size 2-CNF encodings of \(\amo_n\) and finally
leads to Theorem~\ref{thm:Main}.

\begin{lemma}
   \label{lem:Chains}
   Let \(\phi(\vek{x}, \vek{y})\) be a minimum size 2-CNF encoding of
   \(\amo_n\) and let \(i,j\in\{1,\dots,n\}\), \(i\neq j\). Then there exist literals
   \(e_0, e_1, \dots, e_p\in \lit{\vek{x}\cup\vek{y}}\), \(p\geq 1\), such that:
   \begin{enumerate}[label={(\roman*)}]
      \item\label{enum:Chains:1}
         \(e_0=x_i\), \(e_p=\neg x_j\), and \(e_0, e_1, \dots, e_p\) form a path in \(G_\phi\),
      \item\label{enum:Chains:3}
         \(e_1, \dots, e_{p-1}\) are
         pairwise distinct literals from \(\lit{\vek{y}}\).
   \end{enumerate}
\end{lemma}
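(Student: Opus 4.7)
The plan is to construct the required sequence as a shortest directed path from $x_i$ to $\neg x_j$ in the implication graph $G_\phi$ and then verify both conditions. The existence of such a path follows from Lemma~\ref{lem:minimum-2-CNF-AMO} (which guarantees that $\phi$ is a prime \penc{} with no unit clauses), from condition~\ref{cond:pc-amo-2} (which yields $\phi \wedge x_i \vdash_1 \neg x_j$), and from Lemma~\ref{lem:Chains-2CNF-general} applied with the unit clause $x_i$. Since $x_i \neq \neg x_j$, any such path has length $p \ge 1$, and choosing it to be shortest ensures that $e_0, e_1, \ldots, e_p$ are pairwise distinct vertices; this immediately gives the distinctness part of item~\ref{enum:Chains:3}.

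What remains is to show that none of the interior literals $e_1, \ldots, e_{p-1}$ lies in $\lit{\vek{x}}$. I would prove this by contradiction, fixing an interior index $r$ with $e_r \in \lit{\vek{x}}$ and splitting on polarity. If $e_r = x_k$, then the prefix path and Lemma~\ref{lem:Chains-2CNF-general} give $\phi \wedge x_i \vdash_1 x_k$; vertex distinctness forces $k \neq i$, so condition~\ref{cond:pc-amo-2} also gives $\phi \wedge x_i \vdash_1 \neg x_k$, contradicting~\ref{cond:pc-amo-1}. If $e_r = \neg x_k$, then vertex distinctness forces $k \neq j$, and the suffix $e_r, \ldots, e_p$ together with the skew-symmetry of $G_\phi$ yields a directed path from $x_j$ to $x_k$; Lemma~\ref{lem:Chains-2CNF-general} then gives $\phi \wedge x_j \vdash_1 x_k$, which combined with $\phi \wedge x_j \vdash_1 \neg x_k$ from~\ref{cond:pc-amo-2} again contradicts~\ref{cond:pc-amo-1}.

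The step that is substantive is this second subcase: a forward unit-propagation derivation from $x_i$ reaching a literal $\neg x_k$ is consistent with the \penc{} axioms on its own, and the contradiction only materializes after using the skew-symmetry of the 2-CNF implication graph to reverse the remaining suffix, turning it into a forward derivation starting from $x_j$. This is exactly the structural feature of 2-CNF formulas that this section leverages beyond the general-CNF arguments of Section~\ref{sec:gen-lb}. The corner cases $k \in \{i,j\}$ are either ruled out directly by vertex distinctness or handled uniformly by the same contradictions, so no additional case splitting will be needed.
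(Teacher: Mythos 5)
Your proposal is correct and follows essentially the same route as the paper: take a shortest path from \(x_i\) to \(\neg x_j\) in \(G_\phi\) via Lemma~\ref{lem:Chains-2CNF-general} (using that a minimum size 2-CNF encoding has no unit clauses), and rule out interior literals on input variables by contradiction with conditions~\ref{cond:pc-amo-1} and~\ref{cond:pc-amo-2}. The only cosmetic difference is in the negative subcase \(e_r=\neg x_k\): the paper reads the suffix directly as \(\phi\wedge\neg x_k\vdash_1\neg x_j\) and invokes Lemma~\ref{lem:amo-pc}\ref{enum:amo-pc:negToNeg}, while you reverse the suffix by skew-symmetry (equivalently, Lemma~\ref{lem:Chains-2cnf-sym}) to get \(\phi\wedge x_j\vdash_1 x_k\) --- these are interchangeable.
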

\begin{proof}
By assumption, $\phi \not\vdash_1 \neg x_j$ and \(\phi\wedge x_i\vdash_1\neg x_j\).
Hence, by Lemma~\ref{lem:Chains-2CNF-general},
there is a path from $x_i$ to $\neg x_j$ in $G_\phi$.
   Let  \(x_i=e_0, e_1, \dots, e_p=\neg x_j\) be a shortest such path.
If $p=1$, the proof is finished.
   Suppose that \(p\geq 2\)
   and let us show that the path meets~\ref{enum:Chains:3}. Since the sequence is the shortest
   possible and \(\phi\wedge x_i\not\vdash_1\bot\), the literals \(e_0, \dots, e_p\) are on pairwise
   different variables. Assume for a contradiction that there is a literal \(e_q\)
   on an input variable \(x_k\), \(k\not\in\{i, j\}\). If
   \(e_q=x_k\), then \(\phi\wedge x_i\vdash_1
      x_k\) and if \(e_q=\neg x_k\), then \(\phi\wedge
      \neg x_k\vdash_1 \neg x_j\).
Both these cases contradict Lemma~\ref{lem:amo-pc}.
\end{proof}

The following proposition shows that we can restrict our consideration
to 2-CNF encodings of \(\amo_n\) with no positive occurrences of
input variables. This is a difference from the case of
general PC encodings of $\amo_n$. In Section~\ref{sec:further}, we present an example
of an irredundant prime PC encoding of $\amo_n$ of size $\Theta(n^2)$ containing positive
occurrences of input variables. We believe that positive occurrences
of input variables cannot occur in a minimum size prime PC encoding of $\amo_n$, however,
we do not have a proof of this conjecture.

\begin{lemma} \label{lem:PositiveInpVars}
A minimum size 2-CNF encoding of \(\amo_n\) contains no positive
occurrence of an input variable.
\end{lemma}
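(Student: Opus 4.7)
The plan is to argue by contradiction: suppose $\phi(\vek{x},\vek{y})$ is a minimum size 2-CNF encoding of $\amo_n$ containing a clause $C_0$ in which some input variable $x_i$ occurs positively, and then construct a strictly smaller 2-CNF encoding $\phi'$ of the same function. Since $\phi$ is a 2-CNF, I can write $C_0 = x_i \vee e$ for a single literal $e$. The first step is to narrow down what $e$ can be. By Lemma~\ref{lem:minimum-2-CNF-AMO}, $\phi$ is a prime \penc{}, so $C_0$ is a prime implicate of $\amo_n$. This rules out $e = x_j$ (since $x_i \vee x_j$ is not an implicate of $\amo_n$, the all-zero assignment satisfies $\amo_n$) and also $e = \neg x_j$ (since then $\phi \wedge x_j \vdash_1 x_i$ from $C_0$ together with $\phi \wedge x_j \vdash_1 \neg x_i$ from~\ref{cond:pc-amo-2} would contradict~\ref{cond:pc-amo-1}). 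Therefore $e \in \lit{\vek{y}}$ is an auxiliary literal.

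Next, I would form $\phi' := \subst{\phi}{e}{\neg x_i}$ via the substitution machinery of Section~2. This substitution turns $C_0$ into a tautology and keeps $\phi'$ as a 2-CNF on $\vek{x}$ and $\vek{y}\setminus\{\var{e}\}$, so $|\phi'| \le |\phi|-1$. The main task is to verify that $\phi'$ is still an encoding of $\amo_n$. The ``only if'' direction is immediate: any satisfying assignment $(\alpha,\beta')$ of $\phi'$ lifts to a satisfying assignment of $\phi$ (by setting $\var{e}$ so that $e$ evaluates to $\neg\alpha(x_i)$), so $\amo_n(\alpha)=1$.

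The nontrivial direction is producing, for every $\alpha$ with $\amo_n(\alpha)=1$, a satisfying assignment $(\alpha,\beta')$ of $\phi'$. When $\alpha(x_i)=0$, any satisfying extension $\beta$ of $\alpha$ in $\phi$ already forces $e$ to be true via $C_0$, so the restriction of $\beta$ works. The delicate case, and the main obstacle, is $\alpha(x_i)=1$: here I need a model of $\phi$ in which $x_i$ is true but $e$ is false, i.e.\ I need that $\phi \wedge x_i \wedge \neg e$ is satisfiable. Assume it were not. Since $\phi$ is a prime 2-CNF, it is propagation complete in the full sense (by \cite{BBCGK13}), and therefore $\phi \wedge x_i \vdash_1 e$. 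By Lemma~\ref{lem:Chains-2cnf-sym} this symmetrizes to $\phi \wedge \neg e \vdash_1 \neg x_i$, while $C_0$ gives $\phi \wedge \neg e \vdash_1 x_i$ directly. Combining these two unit derivations yields $\phi \vdash_1 e$, i.e.\ $e$ is a unit implicate of $\phi$, in contradiction with Lemma~\ref{lem:minimum-2-CNF-AMO}.

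Therefore $\phi \wedge x_i \wedge \neg e$ is satisfiable, which supplies the required $\beta'$ and completes the verification that $\phi'$ is an encoding of $\amo_n$. Since $|\phi'| < |\phi|$, this contradicts minimality of $\phi$, and so no positive occurrence of an input variable can exist.
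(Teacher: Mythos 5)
Your proof is correct, but it takes a genuinely different route from the paper. The paper's argument is global and purely syntactic: it deletes \emph{all} clauses containing a positive literal on an input variable at once, notes that~\ref{cond:pc-amo-1} survives clause removal, and uses Lemma~\ref{lem:Chains} to observe that every derivation $\phi \wedge x_i \vdash_1 \neg x_j$ can be witnessed by a chain whose clauses contain only $\neg x_i$, $\neg x_j$ and auxiliary literals, so~\ref{cond:pc-amo-2} survives as well; minimality is then contradicted immediately. You instead work locally: you isolate one offending clause $C_0 = x_i \vee e$, show $e$ must be an auxiliary literal, and eliminate $C_0$ by the substitution $e \leftarrow \neg x_i$, verifying semantically that $\subst{\phi}{e}{\neg x_i}$ still encodes $\amo_n$. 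The crux of your version is the satisfiability of $\phi \wedge x_i \wedge \neg e$, which you get from the propagation completeness of prime 2-CNF formulas, Lemma~\ref{lem:Chains-2cnf-sym}, and the absence of unit implicates guaranteed by Lemma~\ref{lem:minimum-2-CNF-AMO}. What each approach buys: the paper's proof is shorter, removes all positive occurrences in one step, and stays entirely inside unit-propagation arguments; yours is more constructive (it reveals that minimality forces $e$ to be equivalent to $\neg x_i$ in every relevant model), but it leans on the full strength of Lemma~\ref{lem:minimum-2-CNF-AMO} and requires the model-lifting verification of the encoding property, which the paper avoids.

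Two small inaccuracies are worth fixing, though neither is a gap. First, the conclusion ``$\phi \vdash_1 e$'' is literally false, since $\phi$ has no unit clauses and hence nothing is unit-derivable from $\phi$ alone; what your two derivations actually give is $\phi \wedge \neg e \vdash_1 \bot$, hence $\phi \models e$, i.e.\ $e$ is a unit \emph{implicate}, which is exactly what contradicts Lemma~\ref{lem:minimum-2-CNF-AMO}. Second, writing $C_0 = x_i \vee e$ tacitly excludes the possibility that $C_0$ is the unit clause $x_i$; this is justified, but should be said explicitly, again by the no-unit-implicates part of Lemma~\ref{lem:minimum-2-CNF-AMO} (and in the same vein, $C_0$ with only input variables is an implicate of $\amo_n$, not necessarily a \emph{prime} one; your all-zero-assignment argument only needs implicate-hood, so this is cosmetic).
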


\begin{proof}
Assume, $\phi$ is a 2-CNF encoding of $\amo_n$ of minimum size
and contains a clause containing a positive literal on an input variable.
Let $\phi'$ be the formula obtained by removing all such clauses
from $\phi$. Since $\phi'$ is a subset of $\phi$, it satisfies~\ref{cond:pc-amo-1}.
Moreover, by Lemma~\ref{lem:Chains}, for every $i,j \in \{1,\ldots,n\}$,
there is a series of resolutions witnessing $\phi \wedge x_i \vdash_1 \neg x_j$
that does not derive any literal on an input variable except of $\neg x_j$.
The clauses used in this series of resolutions contain only the literals
$\neg x_i$, $\neg x_j$, and literals on auxiliary variables.
Hence, this series of resolutions can be used also in $\phi'$ and
this implies that $\phi'$ satisfies~\ref{cond:pc-amo-2}. This is
a contradiction with minimality of $\phi$.
\end{proof}

Using Lemma~\ref{lem:minimum-2-CNF-AMO} and
Theorem~\ref{thm:structure-general-CNF-2}, we obtain that
a minimum size 2-CNF encoding \(\phi\) of \(\amo_n\) either
satisfies \(|\phi|\geq\pencQS{n-1}+3\) or it is in regular form
introduced in Definition~\ref{def:regular-form}.
The purpose of several following propositions
concluded by Proposition~\ref{lem:quad-regular}
is to show that in case of
2-CNF encodings we can reduce the analysis to an even more restricted form.
Recall the notation $\PA[\phi]{i}$ from~\eqref{eq:PA}.
In addition to the conditions of the regular form
this restricted regular form requires that the following two conditions
are satisfied.
\begin{itemize}
   \item The sets \(\PB{1}, \ldots, \PB{n}\) are pairwise distinct.
   \item For any three different indices
      \(r, s, t\in\{1, \ldots, n\}\) we have
      \(\left|\PB{r}\cup\PB{s}\cup\PB{t}\right|> 3\).
\end{itemize}
Lemma~\ref{lem:Equal-PB} and Proposition~\ref{lem:Equal-PB-Cont} show how to deal
with encodings not satisfying the first of these conditions.

\begin{lemma}
   \label{lem:Equal-PB}
Let \(\phi(\vek{x},\vek{y})\) be a minimum size 2-CNF encoding
of \(\amo_n\) in regular form, $n \ge 3$. Let \(r, s\in\{1, \ldots, n\}\) be
   different and let us suppose \(\PB{r}=\PB{s}\) and
   \(\PA{r}=\{g,h\}\) for \(g,h\in\lit{\vek{y}}\). Then
   \(\PA{s}=\{\neg g,\neg h\}\).
\end{lemma}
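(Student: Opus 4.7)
The plan is to argue by contradiction: assuming $\PA{s}\ne\{\neg g,\neg h\}$, I would produce a strictly smaller \penc{} of $\amo_n$ than $\phi$, contradicting the fact $|\phi|=\pencQS{n}$ guaranteed by Lemma~\ref{lem:minimum-2-CNF-AMO}. Since $\PB{r}=\PB{s}=\{\var{g},\var{h}\}$ and the two literals of $\PA{s}$ lie on distinct variables, $\PA{s}$ picks one of $\{g,\neg g\}$ and one of $\{h,\neg h\}$. Lemma~\ref{lem:PA-different} excludes $\PA{s}=\PA{r}=\{g,h\}$ and by assumption $\PA{s}\ne\{\neg g,\neg h\}$, so the remaining options are $\{g,\neg h\}$ and $\{\neg g,h\}$; these are exchanged by swapping the names of $g$ and $h$, and I would therefore assume $\PA{s}=\{g,\neg h\}$.

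The next step derives two structural facts. First, applying Lemma~\ref{lem:star-leaves}, I would show $I_g=\{r,s\}$: a third $t\in I_g$ with $\PA{t}=\{g,e\}$ would require the literals $h,\neg h,e$ to be on pairwise distinct variables, a contradiction since $\var{h}=\var{\neg h}$. Second, I would analyse the implication graph $G_\phi$. In the regular form the only in-edges of $\neg x_r$ come from $\neg g,\neg h$ and those of $\neg x_s$ come from $\neg g,h$. For every $k\notin\{r,s\}$, condition~\ref{cond:pc-amo-2} yields propagation paths $x_k\to\cdots\to\neg x_r$ and $x_k\to\cdots\to\neg x_s$; if neither terminated via $\neg g$, together they would propagate both $h$ and $\neg h$ from $\phi\wedge x_k$, contradicting its satisfiability (condition~\ref{cond:pc-amo-1}). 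Hence $\phi\wedge x_k\vdash_1\neg g$ for every $k\ne r,s$, and by Lemma~\ref{lem:Chains-2cnf-sym} equivalently $\phi\wedge g\vdash_1\neg x_k$.

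Finally, I would use these observations to exhibit a strictly smaller \penc{}. The natural candidate is $\phi'=\phi\setminus\{\neg x_s\vee g\}$. Condition~\ref{cond:pc-amo-1} is inherited since $\phi'\subseteq\phi$. The encoding property reduces to ruling out a bad model of $\phi'$ with $\alpha(x_s)=\alpha(x_j)=1$ for some $j\ne r,s$; in any such model, the clause $\neg x_s\vee\neg h$ would force $h=0$, the clauses in the other $\QA{k}$ would behave as in $\phi$, and the derived propagation $\phi\wedge x_j\vdash_1\neg g$ (whose unit-resolution derivation uses no arc coming from the removed clause, because $x_s$ has no incoming arcs and the path does not end at $\neg x_s$) would force $g=0$. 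I would then argue that these forced values together with the remaining type-$R$ clauses of $\phi'$ cannot be extended to a satisfying assignment, using Lemma~\ref{lem:Chains-2cnf-sym} to reverse propagations and close the argument. The verification of condition~\ref{cond:pc-amo-2} in $\phi'$ analogously reroutes each previously affected propagation $x_i\to\neg x_j$ through $h$ or $\neg h$ using the chains from the second step.

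The main obstacle is the last step: the deleted clause carried the arcs $x_s\to g$ and $\neg g\to\neg x_s$ in $G_\phi$, so both the forward propagation $\phi\wedge x_s\vdash_1\neg x_k$ and the backward propagation $\phi\wedge x_k\vdash_1\neg x_s$ for $k\ne r,s$ must be justified without these arcs. If $\phi\setminus\{\neg x_s\vee g\}$ is not by itself a valid \penc{} (for instance when $\phi\wedge x_k\not\vdash_1 h$ for some $k$), the argument should proceed either with the symmetric candidate $\phi\setminus\{\neg x_r\vee g\}$ (whose analysis uses the dual observation about $\neg h$) or via a more global reduction such as Davis--Putnam elimination of the auxiliary variable $\var{g}$, whose feasibility is controlled by the structural constraint $I_g=\{r,s\}$ together with Lemma~\ref{lem:aux-var-occurrences}.
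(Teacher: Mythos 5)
Your reduction to the case \(\PA{s}=\{g,\neg h\}\) and both of your structural observations are sound, and the second one is in fact the heart of the paper's own argument (with the names of \(g\) and \(h\) exchanged): for every \(k\notin\{r,s\}\) the closure \(\upclos[\phi]{x_k}\) must contain the negation of the shared literal, i.e. \(\phi\wedge x_k\vdash_1\neg g\), equivalently \(\phi\wedge g\vdash_1\neg x_k\). The genuine gap is the last step, which you name but do not close, and your candidate formula is the wrong one: \(\neg x_s\vee g\) is the clause with the \emph{shared} literal, and after deleting it the exclusion of a pair \(\{x_s,x_k\}\) with \(k\notin\{r,s\}\) must go through \(h\), which is not guaranteed to be derivable from \(x_k\) (the closure \(\upclos[\phi]{x_k}\) is only guaranteed to contain \(\neg g\); it may contain neither \(h\) nor \(\neg h\)). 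Your soundness check only forces \(g=0\) and \(h=0\) in a hypothetical model of \(\phi\setminus\{\neg x_s\vee g\}\) with \(x_s=x_k=1\), and this yields no contradiction: since \(\phi\) has no positive occurrence of an input variable (Lemma~\ref{lem:PositiveInpVars}), flipping \(x_s\) to \(0\) in such an assignment gives a model of \(\phi\), so the encoding property of \(\phi\) by itself does not exclude the bad model. The fallbacks do not repair this: the symmetric deletion \(\phi\setminus\{\neg x_r\vee g\}\) fails for the same reason whenever \(\neg h\notin\upclos[\phi]{x_k}\) for some \(k\), and DP-elimination of \(\var{g}\) is not size-controlled, because by Lemma~\ref{lem:aux-var-occurrences} the variable \(\var{g}\) occurs in at least five clauses, so elimination may increase the number of clauses.

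The paper's construction is the opposite of yours: it keeps the two clauses with the shared literal and replaces the two clauses carrying the differing literals, \(\neg x_r\vee h\) and \(\neg x_s\vee\neg h\), by the single clause \(\neg x_r\vee\neg x_s\), a net saving of one clause. Verifying~\ref{cond:pc-amo-1} and~\ref{cond:pc-amo-2} for the new formula is then straightforward: the added clause is an implicate of \(\amo_n\); for \(i,j\notin\{r,s\}\) the chains of Lemma~\ref{lem:Chains} consist of auxiliary literals in their interior and hence avoid the removed clauses; for \(i\in\{r,s\}\) and \(j\notin\{r,s\}\), the literal \(g\) is still propagated from \(x_i\) by the kept clause, and your structural fact together with Lemma~\ref{lem:Chains-2cnf-sym} gives a path from \(g\) to \(\neg x_j\) that avoids the removed clauses; the pair \(\{r,s\}\) is handled by the added clause. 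This contradicts minimality of \(\phi\). So your analysis isolates the right invariant but stops one construction short of a proof, and the construction you do propose would not go through as stated.
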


\begin{proof}
Assume without loss of generality \(r=1\) and \(s=2\).
Hence, we have $\PA{1}=\{g,h\}$.
By Lemma~\ref{lem:PA-different}, $\PA{2} \not= \{g, h\}$.
In order to prove the lemma, it is sufficient to exclude
the cases $\PA{2}=\{\neg g, h\}$ and $\PA{2}=\{g, \neg h\}$.
Since these cases are symmetrical, we consider only the
first of them.

Assume \(\PA{2}=\{\neg g,h\}\) and let
         \begin{equation}
            \label{eq:PB-mod-1}
            \phi'=(\phi\cup \{\neg x_1 \vee \neg x_2\} )\setminus\{\neg x_1 \vee g,\, \neg x_2 \vee \neg g\}\text{.}
         \end{equation}
         We show that \(\phi'\) satisfies the
         conditions~\ref{cond:pc-amo-1} and~\ref{cond:pc-amo-2},
         thus it encodes \(\amo_n\),
         which contradicts minimality of \(\phi\).
\begin{amocond}
\item For every $i$, the formula $\phi \wedge x_i$ is satisfiable.
Since the added clause \(\neg x_1 \vee \neg x_2\) is an
implicate of \(\amo_n\) and removing clauses preserves any
satisfying assignments, the formula $\phi' \wedge x_i$
is satisfiable as well.
\item Let us prove $\phi'\wedge x_i \vdash_1 \neg x_j$ for every
\(i,j\in \{1,\dots,n\}\), \(i\neq j\).
Due to Lemma~\ref{lem:Chains-2cnf-sym}, we can assume \(i<j\).
If \(i,j\not\in\{1,2\}\), then
\(\phi'\wedge x_i \vdash_1 \neg x_j\) using the chain \(e_0,\dots,e_p\)
provided by Lemma~\ref{lem:Chains} applied to
\(\phi\). This chain is not affected by~\eqref{eq:PB-mod-1} and
can thus be used in \(\phi'\) as well. Let \(i\in\{1,2\}\).
The case \(i=1,\, j=2\) is trivial, so let \(j>2\).
The set $\upclos[\phi]{x_j}$ contains a negation of a literal
from each of the sets $\PA{1}$ and $\PA{2}$. Since it cannot contain
both $g$ and $\neg g$, it contains $\neg h$ and by Lemma~\ref{lem:Chains},
there is a path in $G_\phi$ from $x_j$ to $\neg h$ that does not contain a literal
on an input variable except of the starting node.
By Lemma~\ref{lem:Chains-2cnf-sym}, this implies that
\(\phi\wedge h \vdash_1 \neg x_j\)
by a path that does not use the clauses omitted in $\phi'$,
so we have \(\phi'\wedge x_i \vdash_1 \neg x_j\).
\end{amocond}
\end{proof}

The size of an encoding not satisfying the first condition
of restricted regular form can be estimated as follows.

\begin{proposition}
   \label{lem:Equal-PB-Cont}
   Let \(n\geq 5\) and
   let \(\phi(\vek{x},\vek{y})\) be a minimum size 2-CNF
   encoding of \(\amo_n\) with the minimum number of auxiliary variables
and, moreover, assume that $\phi$ is in regular form.
If there are two different indices \(r, s\in\{1, \ldots, n\}\),
such that \(\PA{r}=\{g,h\}\) and
   \(\PA{s}=\{\neg g,\neg h\}\) for \(g,h\in \lit{\vek{y}}\), then
   \(\left|\phi\right|\ge\pencQS{n-2}+7\).
\end{proposition}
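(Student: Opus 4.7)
The plan is to fix $x_1=x_2=0$ to produce a \penc{} on $n-2$ inputs, and then to exploit the forced equivalence of $g$ with $\neg h$ in the reduced formula via a variable substitution. Without loss of generality assume $r=1$ and $s=2$, so $\phi$ contains the four clauses $\neg x_1\vee g$, $\neg x_1\vee h$, $\neg x_2\vee\neg g$, $\neg x_2\vee\neg h$. Set $\partass=\{\neg x_1,\neg x_2\}$ and $\phi'=\phi(\partass)$. By Lemma~\ref{lem:amos-fix}, $\phi'$ is a \penc{} on $x_3,\ldots,x_n$; it is 2-CNF, and because $\phi$ is in regular form (so by~\ref{regular-cond2} the four clauses above are the only clauses of $\phi$ containing $\neg x_1$ or $\neg x_2$), we have $|\phi'|=|\phi|-4$. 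Moreover $\QA[\phi']{i}=\QA[\phi]{i}$ for every $i\geq 3$, and $\phi'$ is itself in regular form.

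Next I analyse the behaviour of $\var{g}$ and $\var{h}$ in $\phi'$. Fix $i\geq 3$: the derivation $\phi\wedge x_i\vdash_1\neg x_1$ must terminate by resolving with a clause of $\QA[\phi]{1}$, so $\phi'\wedge x_i$ derives $\neg g$ or $\neg h$; symmetrically, it derives $g$ or $h$. Satisfiability of $\phi'\wedge x_i$ forces these derived literals to be consistent, yielding one of $\{g,\neg h\}\subseteq\upclos[\phi']{x_i}$ (call these indices $A$) or $\{\neg g,h\}\subseteq\upclos[\phi']{x_i}$ ($B$). In particular, $g$ equals $\neg h$ in every satisfying assignment of $\phi'\wedge x_i$ for $i\geq 3$; moreover, restricting a satisfying extension for some $i\in A$ to the auxiliary variables already yields a satisfying assignment of the R-clauses of $\phi'$ with $g=\neg h$, which also witnesses that the all-zero input admits a satisfying extension of $\phi'$ with $g=\neg h$. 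Let $\phi''=\subst{\phi'}{g}{\neg h}$. Satisfiability of $\phi''\wedge x_i$ then follows, and Lemma~\ref{lem:identification-propagation} transfers $\phi'\wedge x_i\vdash_1\neg x_j$ to $\phi''\wedge x_i\vdash_1\neg x_j$, so $\phi''$ is a 2-CNF \penc{} on $n-2$ inputs, giving $\pencQS{n-2}\leq|\phi''|$.

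The main obstacle is to establish $|\phi''|\leq|\phi'|-3$, since then $|\phi|=|\phi'|+4\geq|\phi''|+7\geq\pencQS{n-2}+7$. The key input is Lemma~\ref{lem:aux-var-occurrences}: because $\phi$ is of minimum size with the minimum number of auxiliary variables, each of $\var{g}$ and $\var{h}$ appears in at least $5$ clauses of $\phi$, two of which lie in $\QA[\phi]{1}\cup\QA[\phi]{2}$; hence each has at least $3$ occurrences in $\phi'$. I would split on the polarity pattern of $\var{g}$ in $\phi'$. If $\var{g}$ appears with a single polarity in $\phi'$, then all of its clauses can be satisfied by a constant assignment to $g$ and thus removed without affecting the encoded function, immediately saving $\geq 3$ clauses. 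Otherwise, both $\var{g}$ and $\var{h}$ have both polarities, and the substitution $g\leftarrow\neg h$ yields three kinds of saving to combine: tautologies from any $g\vee h$ or $\neg g\vee\neg h$ present in $\phi'$; duplicates whenever the image of a $\var{g}$-clause coincides with an existing $\var{h}$-clause; and, whenever $\PA[\phi']{i}\in\{\{g,\neg h\},\{\neg g,h\}\}$ for some $i\geq 3$, a collapsed set $|\QA[\phi'']{i}|=1$ that admits a further unit reduction by Proposition~\ref{lem:single-negative}. A careful accounting across the sub-cases, exploiting the at-least-$3$ occurrences of both $\var{g}$ and $\var{h}$ in $\phi'$, combines these savings to the required $\geq 3$ and completes the proof.
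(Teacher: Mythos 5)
Your reductions are sound as far as they go: by regular form and Lemma~\ref{lem:PositiveInpVars} the restriction by $\{\neg x_1,\neg x_2\}$ removes exactly the four clauses of $\QA{1}\cup\QA{2}$, every $i\ge 3$ indeed has $\{g,\neg h\}$ or $\{\neg g,h\}$ contained in $\upclos[\phi']{x_i}$, and Lemma~\ref{lem:identification-propagation} does transfer~\ref{cond:pc-amo-2} to $\phi''=\subst{\phi'}{g}{\neg h}$, so $\phi''$ is a 2-CNF \penc{} with $n-2$ inputs. The gap is the decisive quantitative step: you never prove $|\phi''|\le|\phi'|-3$, you only list possible sources of savings and assert that ``a careful accounting'' yields $\ge 3$. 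That accounting is the entire content of the $+7$ bound, and it does not follow from the facts you have collected. Worse, your first source is provably empty: $\phi\wedge x_1$ is satisfiable and forces $g=h=1$, and $\phi\wedge x_2$ is satisfiable and forces $g=h=0$, so neither $g\vee h$ nor $\neg g\vee\neg h$ is an implicate of $\phi$, hence no clause of $\phi'$ becomes a tautology under $g\leftarrow\neg h$. The remaining two sources (a $\var{g}$-clause whose image collides with an existing clause, or some $\PA{i}$, $i\ge 3$, equal to $\{g,\neg h\}$ or $\{\neg g,h\}$) are not forced by ``at least three occurrences of each of $\var{g}$ and $\var{h}$ in $\phi'$, in both polarities'': nothing you have shown prevents the surviving $\var{g}$-clauses from being, say, $g\vee a$, $\neg g\vee b$, $g\vee c$ with $a,b,c$ literals on pairwise distinct auxiliary variables other than $\var{h}$ and with none of $\neg h\vee a$, $h\vee b$, $\neg h\vee c$ present; then the substitution saves nothing and you only get $|\phi|\ge\pencQS{n-2}+4$. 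Excluding such configurations requires invoking the minimality of $\phi$ again in an essential way, which your outline does not indicate how to do.

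For comparison, the paper also has to use minimality twice: first to show that both index classes (those deriving $\{\neg g,h\}$ and those deriving $\{g,\neg h\}$) are nonempty, by showing that otherwise one of the clauses $\neg x_1\vee g$ or $\neg x_1\vee h$ could be deleted; then it deletes \emph{all} clauses containing $\var{g}$ (at least five by Lemma~\ref{lem:aux-var-occurrences}) together with $\neg x_1\vee h$ and $\neg x_2\vee\neg h$, which makes the count of $7$ immediate, and the work goes into verifying via Lemma~\ref{lem:Chains} and Lemma~\ref{lem:Chains-2cnf-sym} (using satisfiability of all four sign combinations of $g,h$) that the unit-resolution chains between $x_3,\ldots,x_n$ avoid $\var{g}$, so~\ref{cond:pc-amo-2} survives the deletion. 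Your substitution buys the propagation part for free but gives up exactly the clause count that the deletion gives for free, and that count is where your argument stops. A minor additional point: in your single-polarity case, ``without affecting the encoded function'' only addresses the encoding property; you also need that removing all clauses on a pure auxiliary literal preserves~\ref{cond:pc-amo-2}, which is true (such clauses can only ever be used to derive that pure literal, which is never resolved further) but should be argued.
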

\begin{proof}
Without loss of generality, assume \(r=1\) and \(s=2\).
Hence, $\PA{1}=\{g,h\}$ and $\PA{2}=\{\neg g,\neg h\}$.
Denote
   \begin{align*}
      A&=\{j\in\{3,\dots,n\}\mid \text{\(\phi\wedge x_j\vdash_1 \neg
               g\) and \(\phi\wedge x_j\vdash_1 h\)}\}\text{,} \\
      B&=\{j\in\{3,\dots,n\}\mid \text{\(\phi\wedge x_j\vdash_1 g\)
            and \(\phi\wedge x_j\vdash_1 \neg h\)}\}\text{.}
   \end{align*}
For a proof of \(A\cup B = \{3,\dots,n\}\), assume $j \ge 3$.
The set $\upclos[\phi]{x_j}$ contains a negation of a literal
from each of the sets \(\PA{1}=\{g,h\}\) and \(\PA{2}=\{\neg g,\neg h\}\),
however, it cannot contain complementary literals.
It follows that either $\upclos[\phi]{x_j} = \{\neg g,h\}$
or $\upclos[\phi]{x_j} = \{g,\neg h\}$, so $j \in A\cup B$.

Let us prove that \(A\neq\emptyset\) and \(B\neq\emptyset\)
by contradiction.
If \(A=\emptyset\), we prove that the formula
\(\phi' = \phi\setminus \{\neg x_1 \vee g\}\) encodes \(\amo_n(\vek{x})\)
by verifying the conditions~\ref{cond:pc-amo-1} and~\ref{cond:pc-amo-2}
in contradiction to minimality of \(\phi\):
   \begin{amocond}
   \item For every $i\in\{1,\ldots,n\}$, $\phi' \wedge x_i$ is satisfiable, since
   $\phi \wedge x_i$ is satisfiable and $\phi'$ is a subset of $\phi$.
   \item  Let us verify \(\phi\wedge x_i\vdash_1 \neg x_j\)
      for each \(i,j\in\{1,2\} \cup B\), \(i\neq j\).
      Due to Lemma~\ref{lem:Chains-2cnf-sym}, we can assume \(i<j\).
      \begin{itemize}
         \item If \(i=1\), then \(j\in B\cup\{2\}\). Thus,  \(\phi\wedge x_1\vdash_1 h\)
            and  \(\phi\wedge x_j\vdash_1 \neg h\). By
            Lemma~\ref{lem:Chains-2cnf-sym} we get
$\phi\wedge h \vdash_1\neg x_j$ and, hence,
\(\phi\wedge x_1\vdash_1\neg x_j\).
         \item If \(i\ge 2\), then Lemma~\ref{lem:Chains} guarantees that \(\phi\wedge x_i\vdash_1 \neg x_j\) is witnessed by a series of unit resolutions not using the clause \(\neg x_1 \vee g\).
      \end{itemize}
   \end{amocond}
   Similarly, if  \(B=\emptyset\), then
   \(\phi\setminus \{\neg x_1 \vee h\}\) encodes \(\amo_n(\vek{x})\)
in contradiction to minimality of $\phi$.
Altogether, \(A\neq\emptyset\) and \(B\neq\emptyset\).

For the last step of the proof, consider the formula
\[\phi'= \phi\setminus(\gamma \cup \{(\neg x_1 \vee h),(\neg x_2 \vee \neg h) \})\text{, }\]
where \(\gamma=\{C\in\phi\mid g\in C \text{ or } \neg g \in C\}\).
By Lemma~\ref{lem:aux-var-occurrences} we get that \(|\gamma|\geq 5\)
and \(\left|\phi\right|\ge|\phi'|+7\).
Let us show that \(\phi'\) encodes \(\amo_{n-2}(x_3,\dots,x_n)\)
by verifying the conditions~\ref{cond:pc-amo-1} and~\ref{cond:pc-amo-2}.
\begin{amocond}
\item For every $i\in\{3,\ldots,n\}$, $\phi' \wedge x_i$ is satisfiable, since
$\phi \wedge x_i$ is satisfiable and $\phi'$ is a subset of $\phi$.
\item Let us verify \(\phi'\wedge x_i\vdash_1 \neg x_j\)
for each \(i,j\in\{3,\dots,n\}\).
Each of the sets $\{g, h\}$, $\{g, \neg h\}$, $\{\neg g, h\}$, $\{\neg g, \neg h\}$
is a subset of $\upclos[\phi]{x_k}$ for some $k\in\{1, \ldots, n\}$.
Hence, each of the formulas
$$
\phi\wedge \neg g \wedge h,\;
\phi\wedge g \wedge \neg h,\;
\phi\wedge g \wedge h,\;
\phi\wedge \neg g \wedge \neg h
$$
is satisfiable. Distinguish the following cases:
      \begin{itemize}
         \item If \(i\in A\) and \(j \in B\), let
            \(e_0,\dots,e_p\in\lit{\vek{x}\cup \vek{y}}\) be a chain of literals
            derived in a series of unit resolutions witnessing
            \(\phi\wedge x_i\vdash_1 h \). If \(e_q\in\{g,\neg g\}\) for some
            \(q\in\{1,\dots,p-1\}\), then \(\phi\wedge e_q\vdash_1 h \),
            which contradicts \(\phi\wedge e_q\wedge \neg h\) being satisfiable.
            Thus, the chain is present in \(\phi'\) as well and we have
            \(\phi'\wedge x_i \vdash_1 h\).
On the other hand, we have $\phi \wedge x_j \vdash_1 \neg h$ and
by a similar argument, we obtain \(\phi'\wedge x_j \vdash_1 \neg h\).
This implies $\phi'\wedge h \vdash_1 \neg x_j$ and thus
            \(\phi'\wedge x_i \vdash_1 \neg x_j\).

         \item The case of \(i\in B\), \(j \in A\) follows from the previous one
by Lemma~\ref{lem:Chains-2cnf-sym}.
         \item If \(i,j\in A\), let
            \(e_0,\dots,e_p\in\lit{\vek{x}\cup \vek{y}}\) be a chain of literals
            derived in a series of unit resolutions witnessing
            \(\phi\wedge x_i\vdash_1 \neg x_j\) according to Lemma~\ref{lem:Chains}
that does not use the clauses $\neg x_1 \vee h$ and $\neg x_2 \vee \neg h$.
Assume, for a contradiction, \(e_q\in\{g,\neg g\}\) for some \(q\in\{1,\dots,p-1\}\).
Then, we have either $\phi \wedge x_i \vdash_1 g$ or $\phi \wedge x_j \vdash_1 g$
by Lemma~\ref{lem:Chains-2cnf-sym}.
This is not possible, since $i,j \in A$ implies
$\phi \wedge x_i \vdash_1 \neg g$ and $\phi \wedge x_j \vdash_1 \neg g$.
            Thus, the chain is present in \(\phi'\) as well and we have
            \(\phi'\wedge x_i \vdash_1 \neg x_j\).
         \item The case of \(i,j\in B\) is analogous to case \(i, j\in A\).
In this case, we use that $i,j \in B$ implies
$\phi \wedge x_i \vdash_1 g$ and $\phi \wedge x_j \vdash_1 g$.
      \end{itemize}
   \end{amocond}
Hence, \(\phi'\) encodes \(\amo_{n-2}\) and
\(\left|\phi\right|\ge|\phi'|+7 \ge \pencQS{n-2}+7\) as required.
\end{proof}

The following proposition will be used in the proof of Theorem~\ref{thm:Main}
to handle the encodings not satisfying the second condition of the restricted
regular form. These are encodings, for which
there is a triangle in the graph whose vertices
are auxiliary variables and the edges are the sets $\PB{i}$ for
$i\in\{1,\ldots,n\}$.

\begin{proposition}
   \label{lem:Triangles}
   Let \(n\ge 4\) and
   let \(\phi(\vek{x},\vek{y})\) be a minimum size 2-CNF
   encoding of \(\amo_n\) with the minimum number of auxiliary variables
and, moreover, assume that $\phi$ is in regular form.
If there are different indices \(r, s, t\in\{1, \ldots, n\}\), such that
   \begin{enumerate}
      \item \(\PB{r}\), \(\PB{s}\), and \(\PB{t}\) are pairwise distinct,
      \item \(\left|\PB{r}\cup\PB{s}\cup\PB{t}\right|= 3\),
   \end{enumerate}
then \(\left|\phi\right|\ge\pencQS{n-2}+6\).
\end{proposition}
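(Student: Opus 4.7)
The plan is to follow the template of Proposition~\ref{lem:Equal-PB-Cont}: eliminate the auxiliary variable shared by \(\PB{r}\) and \(\PB{s}\) together with the input variables \(x_r\) and \(x_s\), obtaining a 2-CNF encoding of \(\amo_{n-2}\) on the remaining \(n-2\) inputs (which still include \(x_t\)) while removing at least \(6\) clauses of \(\phi\).

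First I would fix notation. By the hypothesis and relabeling we may assume \(\PB{r}=\{y_1,y_2\}\), \(\PB{s}=\{y_2,y_3\}\), \(\PB{t}=\{y_1,y_3\}\), and denote the six corresponding Q-clauses by \(\neg x_i\vee g_{i,k}\) with \(\var{g_{i,k}}=y_k\). Next I would analyze the propagations \(\phi\wedge x_i\vdash_1\neg x_j\) for \(i,j\in\{r,s,t\}\) using Lemma~\ref{lem:Chains} to extract witnessing paths in \(G_\phi\) and Lemma~\ref{lem:Chains-2cnf-sym} for their reversal. Together with the regular form and Lemma~\ref{lem:PositiveInpVars}, this analysis should pin down the literals on the three shared auxiliary variables so that, up to relabeling, the six Q-clauses on their own implement all propagations among \(\{x_r,x_s,x_t\}\); concretely, this amounts to identities like \(g_{r,2}=\neg g_{s,2}\), \(g_{r,1}=\neg g_{t,1}\), \(g_{s,3}=\neg g_{t,3}\). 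For each \(j\notin\{r,s,t\}\), the required propagations \(\phi\wedge x_j\vdash_1\neg x_i\) for \(i\in\{r,s,t\}\) force \(\upclos[\phi]{x_j}\cap\bigcup_{k=1}^{3}\lit{y_k}\) to contain one of two antipodal consistent triples of literals, partitioning \(\{1,\ldots,n\}\setminus\{r,s,t\}\) into two sets \(A\) and \(B\) analogous to those used in the proof of Proposition~\ref{lem:Equal-PB-Cont}.

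Then I would set \(\gamma=\{C\in\phi\mid y_2\in\var{C}\}\) and define
\[
\phi'=\phi\setminus\bigl(\gamma\cup\{\neg x_r\vee g_{r,1},\,\neg x_s\vee g_{s,3}\}\bigr).
\]
By Lemma~\ref{lem:aux-var-occurrences} we have \(|\gamma|\ge 5\); the two additional Q-clauses mention only \(y_1\) and \(y_3\) and hence do not belong to \(\gamma\), so \(|\phi'|\le|\phi|-7\le|\phi|-6\). It remains to verify that \(\phi'\) is a 2-CNF encoding of \(\amo_{n-2}\) on the input variables \(\{x_j\mid j\ne r,s\}\) by checking conditions~\ref{cond:pc-amo-1} and~\ref{cond:pc-amo-2}; Lemma~\ref{lem:minimum-2-CNF-AMO} will then give \(\pencQS{n-2}\le|\phi'|\le|\phi|-6\), as required. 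Condition~\ref{cond:pc-amo-1} is immediate because \(\phi'\subseteq\phi\).

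The main obstacle will be checking condition~\ref{cond:pc-amo-2} in \(\phi'\), namely \(\phi'\wedge x_j\vdash_1\neg x_k\) for every pair \(j,k\notin\{r,s\}\) with \(j\ne k\). The path in \(G_\phi\) provided by Lemma~\ref{lem:Chains} may use arcs coming from clauses in \(\gamma\) or from the two removed Q-clauses, and one has to reroute it using the alignment of literals established in the second step, the two remaining Q-clauses of \(x_t\), and the partition \(A\cup B\). The case analysis should parallel the one in the final paragraph of the proof of Proposition~\ref{lem:Equal-PB-Cont}, distinguishing in particular the subcases where one of \(j,k\) equals \(t\) (so the Q-clauses of \(x_t\) on \(y_1\) and \(y_3\) are used directly) from those where \(j,k\in A\cup B\) (where the original derivation is redirected through the antipodal triples on \(y_1\) and \(y_3\)).
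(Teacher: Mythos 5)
Your proposal breaks at the step where you claim that the propagations among $x_r,x_s,x_t$ ``pin down'' the six Q-clauses so that each shared auxiliary variable carries complementary literals in the two sets $\PA{\cdot}$ containing it (your identities $g_{r,2}=\neg g_{s,2}$, $g_{r,1}=\neg g_{t,1}$, $g_{s,3}=\neg g_{t,3}$). Nothing forces the path witnessing, say, $\phi\wedge x_s\vdash_1\neg x_t$ to use only Q-clauses: it may run through type-R clauses and other auxiliary variables, so a shared variable can occur with the \emph{same} literal in both sets. Writing $L=\PA{r}\cup\PA{s}\cup\PA{t}$, one can rule out $|L|=3$ and $|L|=4$ (the paper does this via a ternary prime implicate contradicting 2-CNF, resp.\ a unit-propagation conflict), but the configuration $|L|=5$, e.g.\ $\PA{r}=\{a,b\}$, $\PA{s}=\{\neg a,c\}$, $\PA{t}=\{\neg b,c\}$ with the literal $c$ repeated, cannot be excluded and is not covered by your construction: after deleting all clauses on the middle variable together with two Q-clauses, you have no argument that $\phi'\wedge x_t\vdash_1\neg x_j$ for $j\notin\{r,s,t\}$, since those derivations may pass through the deleted variable and the antipodal-triple structure you invoke for rerouting simply does not exist here. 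The paper handles $|L|=5$ by an entirely different reduction: it removes only the six Q-clauses of $x_r,x_s,x_t$ and renames the repeated literal $c$ into a new input variable $x_{n+1}$, obtaining an encoding of $\amo_{n-2}$ with exactly $6$ clauses removed. This case is also why the statement claims only $\pencQS{n-2}+6$; the $+7$ your scheme would yield is a symptom of the missing case.

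Even in the fully antipodal case ($|L|=6$), where your deletion scheme does match the paper's in shape, two points are glossed over. First, the choice of which shared variable to eliminate is not free: the rerouting for pairs $j,k\notin\{r,s\}$ needs non-derivability facts among the three distinguished literals (of the form $\phi\wedge a\not\vdash_1 c$ and $\phi\wedge b\not\vdash_1 a$), and one must first argue, possibly after permuting the roles of $r,s,t$, that the eliminated variable satisfies them; fixing it as the variable shared by $\PB{r}$ and $\PB{s}$ need not work. Second, the statement that for every $k$ outside the triangle $\upclos[\phi]{x_k}$ contains either all three distinguished literals or all three negations is the correct analogue of your ``antipodal consistent triples,'' but it must be proved from regularity and conditions~\ref{cond:pc-amo-1}--\ref{cond:pc-amo-2}, not carried over by analogy with Proposition~\ref{lem:Equal-PB-Cont}, where the two relevant sets $\PA{\cdot}$ live on the same pair of variables. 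So your plan recovers one branch of the argument, but as written it contains a genuine gap at the sign-pattern claim and omits the $|L|=5$ branch altogether.
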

\begin{proof}
   \newcommand\auxA{g_\mathrm{A}}
   \newcommand\auxB{g_\mathrm{B}}
   \newcommand\auxC{g_\mathrm{C}}
   Without loss of generality, let us assume that \(r=1\), \(s=2\), and
   \(t=3\) and let \(L=\PA{1}\cup\PA{2}\cup\PA{3}\).
Since $\phi$ is in regular form, we have
\(\left|\PA{1}\right|=\left|\PA{2}\right|=\left|\PA{3}\right|=2\).
Let us distinguish four cases according to the size of \(L\).
Note that there are $|L|-3$ variables in
$\PB{1}\cup\PB{2}\cup\PB{3}$ that occur with both signs in $L$.

   \begin{enumerate}
      \item If \(|L|=3\), then
         \[\PA{1}=\{\auxA,\auxB\},
            \PA{2}=\{\auxA,\auxC\},
            \PA{3}=\{\auxB,\auxC\}\] for some
         \(\auxA,\auxB,\auxC\in\lit{\vek{y}}\). The clause
         \(D=\neg\auxA\vee\neg\auxB\vee\neg\auxC\) is an implicate
         of \(\phi\) because \(\phi\land\neg
            D=\phi\land\auxA\land\auxB\land\auxC\) is unsatisfiable.
         Indeed, any satisfying assignment of
         \(\phi \wedge \auxA \wedge \auxB \wedge \auxC \)
would remain satisfying even if any two of the variables \(x_1,x_2,x_3\)
are changed to 1 which would be in contradiction with the fact that
\(\phi\) encodes \(\amo_n\). On the other hand, since \(\phi\) satisfies~\ref{cond:pc-amo-1},
any two of the literals $\auxA,\auxB,\auxC\in\lit{\vek{y}}$
can be satisfied in a satisfying assignment of $\phi$.
Hence, \(D\) is a prime implicate, which
contradicts \(\phi\) being a 2-CNF formula.

      \item If \(|L|=4\), we can assume due to symmetry that \[\PA{1}=\{\auxA,\auxB\},
            \PA{2}=\{\neg\auxA,\auxC\},
            \PA{3}=\{\auxB,\auxC\}\] for some \(\auxA,\auxB,\auxC\in\lit{\vek{y}}\).
The set $\upclos[\phi]{x_3}$ contains the literals $\auxB$ and $\auxC$
and has a non-empty intersection with each of the sets
$\{\neg \auxA, \neg \auxB\}$, $\{\auxA, \neg \auxC\}$.
One can verify that this implies that for some $e \in \{\auxA, \auxB, \auxC\}$, the
set $\upclos[\phi]{x_3}$ contains both $e$ and $\neg e$. This is
a contradiction with~\ref{cond:pc-amo-1}.

      \item If \(|L|=5\), we can assume due to symmetry that \[\PA{1}=\{\auxA,\auxB\},
            \PA{2}=\{\neg\auxA,\auxC\},
            \PA{3}=\{\neg\auxB,\auxC\}\] for some \(\auxA,\auxB,\auxC\in\lit{\vek{y}}\).
         Consider the formula \(\phi'\) obtained from \(\phi\) by omitting all clauses
         containing variables \(x_1\), \(x_2\), and \(x_3\).
Let \(x_{n+1}\) be a new input variable and consider the formula
         \[\psi=\subst{\phi'}{\auxC}{x_{n+1}}\]
Since \(x_{n+1}\) has no occurence in $\phi'$, this substitution is
the same as renaming $\var{\auxC}$ to $x_{n+1}$ or to $\neg x_{n+1}$,
so that the literal $\auxC$ becomes equal to $x_{n+1}$.
Formula $\psi$ has \(n-2\) input variables \(x_4,\ldots,x_n,x_{n+1}\)
and \(\left|\phi\right|\ge\left|\psi\right|+6\). Let us
check the conditions~\ref{cond:pc-amo-1} and~\ref{cond:pc-amo-2}
         to show that \(\psi\) is an
         encoding of \(\amo_{n-2}\):
         \begin{amocond}
         \item If \(i\in\{4,\ldots,n\}\), a satisfying assignment of \(\phi\wedge x_i\) is
a satisfying assignment of \(\phi'\wedge x_i\). Moreover,
any satisfying assignment of \(\phi\wedge x_2\) satisfies \(\phi\wedge \auxC\)
and, hence, also $\phi'\wedge \auxC$.
Renaming $\auxC$ to $x_{n+1}$ as described above in
$\phi'\wedge x_i$ and $\phi'\wedge \auxC$ yields
the formulas $\psi\wedge x_i$ and $\psi\wedge x_{n+1}$,
so these formulas are satisfiable as well.

         \item Let us check \(\psi\wedge x_i\vdash_1 \neg x_j\)
            for each \(i,j\in\{4,\dots,n+1\}\), \(i<j\). If \(j\le n\), observe that
the series of unit resolutions witnessing $\phi\wedge x_i\vdash_1 \neg x_j$
according to Lemma~\ref{lem:Chains}
does not use the omitted clauses and, hence, is present also in $\phi'$ and $\psi$.
If \(j=n+1\), note that $\upclos[\phi]{x_i}$ contains a negation
of a literal from each of the sets $\PA{1}$, $\PA{2}$, $\PA{3}$
and does not contain complementary literals. This
implies $\neg \auxC \in \upclos[\phi]{x_i}$ and,
hence, $\neg x_{n+1} \in \upclos[\psi]{x_i}$.
         \end{amocond}

      \item If \(|L|=6\), we can assume due to symmetry that \[\PA{1}=\{\auxA,\neg\auxB\},
            \PA{2}=\{\neg\auxA,\auxC\},
            \PA{3}=\{\auxB,\neg\auxC\}\] for some \(\auxA,\auxB,\auxC\in\lit{\vek{y}}\).
Note that the collection of the sets $\{\PA{1}, \PA{2}, \PA{3}\}$ is invariant
under a cyclic shift of the list $(\auxA, \auxB, \auxC)$.
Clearly, $\phi \wedge \auxA \not\vdash_1 \auxB$, since
otherwise $\phi \wedge x_1 \vdash_1 \bot$
in contradiction to~\ref{cond:pc-amo-1}. By symmetry,
all of the following statements are satisfied
\begin{equation} \label{eq:cyclic-clearly-non-derives}
\phi \wedge \auxA \not\vdash_1 \auxB,
\ \phi \wedge \auxB \not\vdash_1 \auxC,
\ \phi \wedge \auxC \not\vdash_1 \auxA\text{.}
\end{equation}
Assume for a contradiction that any two of the statements
$$
\phi\wedge \auxA \vdash_1 \auxC,
\ \phi\wedge \auxC \vdash_1 \auxB,
\ \phi\wedge \auxB \vdash_1 \auxA
$$
are satisfied. For each pair of these statements, we get a contradiction
with~\eqref{eq:cyclic-clearly-non-derives}.
Hence, at most one of these statements is satisfied. It follows that
we can assume without loss of generality
\begin{equation} \label{eq:cyclic-two-non-derives}
\phi\wedge \auxA \not\vdash_1 \auxC,
\ \phi\wedge \auxB \not\vdash_1 \auxA
\end{equation}
The following claim will be used later.

\smallskip
\noindent\textbf{Claim.}
If $k \ge 4$, then the set $\upclos[\phi]{x_k}$ contains either
all of the literals $\auxA, \auxB, \auxC$ or all of the literals
$\neg \auxA, \neg \auxB, \neg \auxC$.
\begin{proof}
Since $\phi \wedge x_k \vdash_1 \neg x_1$, we have
$\auxB \in \upclos[\phi]{x_k}$ or $\neg\auxA \in \upclos[\phi]{x_k}$.
In the first case, the set $\upclos[\phi]{x_k}$ contains
$\auxC$ to derive $\neg x_3$ and contains $\auxA$ to derive $\neg x_2$.
In the second case, the set $\upclos[\phi]{x_k}$ contains
$\neg\auxC$ to derive $\neg x_2$ and contains $\neg\auxB$ to derive $\neg x_3$.
\end{proof}

By Lemma~\ref{lem:aux-var-occurrences} we get that the variable
\(\var{\auxA}\) occurs in at least 5 clauses.
Thus, the formula \(\psi\)
         obtained from \(\phi\) by omitting all clauses containing
         some of the literals \(\neg x_1\), \(\neg x_2\), \(\auxA\),
         and \(\neg \auxA\) satisfies
         \(\left|\phi\right|\ge\left|\psi\right|+7\). It remains to
         check the conditions~\ref{cond:pc-amo-1} and~\ref{cond:pc-amo-2}
         to show that
         \(\psi(x_3,\ldots,x_n)\) is an encoding of \(\amo_{n-2}\):
         \begin{amocond}
         \item As \(\psi\) is a subset of \(\phi\), each satisfying
assignment of \(\phi \wedge x_i\) for $i\in\{3,\ldots,n\}$
can be restricted to a satisfying assignment of \(\psi \wedge x_i\).
         \item Let us check \(\psi\wedge x_i\vdash_1 \neg x_j\)
for each \(i,j\in\{3,\dots,n\}\), $i \not= j$.
By Lemma~\ref{lem:Chains-2cnf-sym}, we can assume $i < j$.
Let us consider separately the cases $i=3$ and $i \ge 4$.
\noindent\textbf{Case} $i \ge 4$.\\
Fix a series of unit
resolutions witnessing $\phi \wedge x_i\vdash_1 \neg x_j$
according to Lemma~\ref{lem:Chains}.
This series does not use any clause containing the literals $\neg x_1$, $\neg x_2$.
If this series does not derive a literal
on the variable $\var{\auxA}$, then $\psi \wedge x_i\vdash_1 \neg x_j$
and we are done.
Assume, the series of resolutions derives a literal \(h\in\{\auxA,\neg \auxA\}\).
In order to prove $\psi \wedge x_i\vdash_1 \neg x_j$, we split this
series into the two parts at the occurrence of the literal $h$ and replace
each of these parts by another series of resolutions in such a way that
they can be combined via a literal on the variable $\var{\auxC}$ and do
not derive a literal on the variable $\var{\auxA}$. Using this,
we obtain a series witnessing $\psi \wedge x_i\vdash_1 \neg x_j$.

By Lemma~\ref{lem:Chains-2cnf-sym}, we have \(\phi\wedge x_i \vdash_1 h\)
and \(\phi\wedge x_j \vdash_1 \neg h\).
Let us prove that for each $k \in \{4,\ldots,n\}$, we have
\begin{itemize}
\item \(\phi\wedge x_k \vdash_1 \auxA\) implies \(\psi\wedge x_k \vdash_1 \auxC\)
\item \(\phi\wedge x_k \vdash_1 \neg\auxA\) implies \(\psi\wedge x_k \vdash_1 \neg\auxC\)
\end{itemize}

If \(\phi\wedge x_k \vdash_1 \auxA\), the set $\upclos[\phi]{x_k}$
contains $\auxC$ by the claim above. Consider a series of resolutions
witnessing $\phi\wedge x_k \vdash_1 \auxC$.
As $\auxA \in \upclos[\phi]{x_k}$, this series does not derive \(\neg\auxA\).
By~\eqref{eq:cyclic-two-non-derives}, this series
does not derive \(\auxA\). Together, we have
$\psi\wedge x_k \vdash_1 \auxC$.

If \(\phi\wedge x_k \vdash_1 \neg\auxA\), the set $\upclos[\phi]{x_k}$
contains $\neg \auxC$ by the claim above.
Consider a series of resolutions witnessing $\phi \wedge x_k \vdash_1 \neg \auxC$.
As $\neg\auxA \in \upclos[\phi]{x_k}$, this series does not derive \(\auxA\).
By~\eqref{eq:cyclic-clearly-non-derives} and Lemma~\ref{lem:Chains-2cnf-sym},
this series does not derive \(\neg\auxA\). Together, we have
$\psi\wedge x_k \vdash_1 \neg\auxC$.

If \(h=\auxA\), then \(\phi\wedge x_i\vdash_1\auxA\) and
\(\phi\wedge x_j\vdash_1\neg \auxA\). By the above implications,
we have \(\psi\wedge x_i\vdash_1\auxC\) and \(\psi\wedge x_j\vdash_1\neg \auxC\).
Using this and Lemma~\ref{lem:Chains-2cnf-sym}, we obtain \(\psi\wedge x_i\vdash_1\neg x_j\).
If \(h=\neg\auxA\), we obtain $\psi\wedge x_i\vdash_1\neg x_j$ by
a similar argument using a series of resolutions deriving $\neg \auxC$
as an intermediate step.

\noindent\textbf{Case} $i = 3$.\\
We have either $\phi \wedge x_j \vdash_1 \neg\auxB$ or
$\phi \wedge x_j \vdash_1 \auxC$.
Assume $\phi \wedge x_j \vdash_1 \neg\auxB$
and consider a series of unit resolutions witnessing this.
By the claim above,
this series does not derive the literal $\auxA$.
By~\eqref{eq:cyclic-two-non-derives}
and Lemma~\ref{lem:Chains-2cnf-sym}, it does not
derive $\neg\auxA$. Hence, $\psi \wedge x_j \vdash_1 \neg\auxB$
and we have $\psi \wedge x_3 \vdash_1 \neg x_j$.

Assume $\phi \wedge x_j \vdash_1 \auxC$ and consider a series
of unit resolutions witnessing this. By the claim above,
this series does not derive the literal $\neg\auxA$.
By~\eqref{eq:cyclic-two-non-derives}, it does not
derive $\auxA$. Hence, $\psi \wedge x_j \vdash_1 \auxC$
and we have $\psi \wedge x_3 \vdash_1 \neg x_j$.
\qedhere
         \end{amocond}
   \end{enumerate}

\end{proof}

It remains to estimate the size of an encoding of $\amo_n$ that
satisfies both the additional conditions of the restricted
regular form.

\begin{proposition} \label{lem:quad-regular}
   Let \(n\geq 4\) and let \(\phi(\vek{x}, \vek{y})\) be a minimum
   size 2-CNF encoding of \(\amo_n\) with the minimum number of
   auxiliary variables
and, moreover, assume that $\phi$ is in regular form.
If $\phi$ satisfies the following two conditions
   \begin{itemize}
      \item The sets \(\PB{1}, \ldots, \PB{n}\) are pairwise distinct.
      \item For any three different indices
   \(r, s, t\in\{1, \ldots, n\}\) we have
   \(\left|\PB{r}\cup\PB{s}\cup\PB{t}\right|> 3\),
   \end{itemize}
   then \(|\phi|\geq 2n+2\sqrt{n}-3\).
\end{proposition}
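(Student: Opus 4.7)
The plan is to exploit conditions $1$ and $2$ to view the family $\{\PB{1},\ldots,\PB{n}\}$ as the edge set of a simple triangle-free graph $H$ on the auxiliary variables $\vek{y}$: condition $1$ says these edges are distinct, so $|E(H)|=n$, and condition $2$ is precisely triangle-freeness. Mantel's theorem then yields $\ell \ge 2\sqrt{n}$. Since $|\phi|=2n+r$ where $r$ denotes the number of $R$-clauses, the proposition will follow from $r \ge 2\sqrt{n}-3$, which we will deduce from $r \ge \ell-3$.

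Let $G''$ be the graph on $\vek{y}$ whose edges are the $R$-clauses; each $R$-clause is binary on two distinct auxiliary variables by Lemma~\ref{lem:PositiveInpVars} and Lemma~\ref{lem:no-pos-and-neg-x}. If $G''$ has $p$ connected components, a spanning-forest argument gives $r \ge \ell-p$, and so it suffices to establish $p \le 3$. Propagation correctness, read off the 2-CNF implication graph, demands a directed path from $x_i$ to $\neg x_j$ for every $i\neq j$. Any such path enters auxiliary-literal space via a $Q$-clause at some literal of $\PA{i}$, traverses aux literals only along $R$-clauses (hence stays inside the $G''$-components containing the two endpoints of $\PB{i}$), and exits to $\neg x_j$ via a $Q$-clause. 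Consequently some endpoint of $\PB{j}$ must lie in the union of the two $G''$-components touched by $\PB{i}$.

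Projecting $H$-edges onto the quotient graph $\tilde{H}$ whose vertices are the components of $G''$, this becomes the statement that any two edges (or loops) of $\tilde{H}$ share a vertex. By the classical characterization of such intersecting families of at-most-$2$-element sets, $\tilde{H}$ is either a sunflower with one common component or a triangle on three components. In the triangle case, the minimality of the auxiliary-variable count together with Lemma~\ref{lem:aux-var-occurrences} (which forces every aux variable to occur in some $H$-edge) gives $p\le 3$ immediately, and we are done.

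The hard part will be the sunflower case, where the central component $C_1$ might carry all $H$-edges and be surrounded a priori by arbitrarily many leaf components, inflating $p$. To dispose of it, the plan is to combine three ingredients: (i) Mantel applied to the subgraph of $H$ induced on $C_1$, bounding its internal edges by $|C_1|^2/4$; (ii) the observation that any aux variable isolated in $G''$ must have $H$-degree at least $5$ into $C_1$, by Lemma~\ref{lem:aux-var-occurrences}; and (iii) a sign-aware propagation step: whenever two $H$-edges share a leaf variable $y$ with matching sign in the associated sets $\PA{i}$ and $\PA{j}$, the derivation of $\neg x_j$ from $x_i$ cannot use the $y$-side and must proceed through $C_1$, forcing an $\amo$-style substructure on specific literals inside $C_1$ and hence additional $R$-clauses. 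Combining these into a single counting inequality should either limit the number of leaf components to at most two or produce enough extra $R$-clauses inside $C_1$ to preserve $r \ge \ell - 3$, yielding $|\phi| \ge 2n + 2\sqrt{n} - 3$ and completing the proof.
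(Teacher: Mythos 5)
Your global frame matches the paper's: both arguments count the $R$-clauses as edges of a graph on the auxiliary variables, reduce the bound to showing that this graph has at most three connected components, use the 2-CNF path structure of unit propagation (Lemma~\ref{lem:Chains}) to get the pairwise-intersection property of the component pairs $\PC{i}$, and finish with Mantel's theorem applied to the triangle-free graph with edges \(\PB{1},\dots,\PB{n}\). The divergence --- and the gap --- is in how the bound of three components is established. Already in your triangle case you lean on a false claim: Lemma~\ref{lem:aux-var-occurrences} guarantees at least five clause occurrences of each auxiliary variable, not that each auxiliary variable occurs in some $\PB{i}$; a variable, indeed a whole component of your graph $G''$, could occur only in $R$-clauses. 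The fact you need --- that no $G''$-component is disjoint from all the sets $\PB{i}$ --- is true, but it requires a separate deletion argument (all clauses on the variables of an untouched component can be removed without violating~\ref{cond:pc-amo-1} or~\ref{cond:pc-amo-2}, contradicting minimality of size; an untouched isolated vertex occurs in no clause at all), which you do not supply.

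The more serious gap is the sunflower case, which you explicitly leave as a plan (``should either limit the number of leaf components \dots{} or produce enough extra $R$-clauses''). None of your three ingredients is shown to yield $r\ge\ell-3$, and it is doubtful a pure counting argument can: a priori the central component could be small and surrounded by many isolated leaf variables, each carrying five $Q$-clause edges into the center, and no bound on the number of such leaves follows from Mantel plus occurrence counts. The paper disposes of this case with an idea absent from your proposal: it exploits the hypothesis that $\phi$ has the minimum number of auxiliary variables. If there are at least three components and one of them lies in every $\PC{i}$, pick variables $u$ and $v$ in two distinct non-central components and identify them, forming $\subst{\phi}{u}{v}$. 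Propagation completeness of a minimum-size 2-CNF encoding (Lemma~\ref{lem:minimum-2-CNF-AMO}) guarantees, for each $i$, a satisfying assignment of $\phi\wedge x_i$ on which $u$ and $v$ agree, so~\ref{cond:pc-amo-1} survives, and Lemma~\ref{lem:identification-propagation} preserves~\ref{cond:pc-amo-2}; the result is an encoding of $\amo_n$ with fewer auxiliary variables, a contradiction. Without this (or an equivalent) argument, the case you yourself call the hard part remains unproved, so the proposal does not establish the proposition.
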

\begin{proof}
   Consider the undirected graph \(G=(\vek{y},E)\), whose vertices are
   auxiliary variables and different variables $u,v \in \vek{y}$ are
   connected by an edge $(u,v) \in E$ if and only if
   \((g\vee h)\in\phi\) for some \(g\in\lit{u}\) and \(h\in \lit{v}\).
   Let \(\mathcal{K}\) denote the set of the connected components of \(G\).
   Note that the elements of \(\mathcal{K}\) are sets of variables,
   which form a partition of \(\vek{y}\). For \(i \in \{1,\dots, n\}\), let
   \[
      \PC{i}=\{ K\in\mathcal{K}\;|\; \PB{i}\cap
         K\neq\emptyset\}\text{.}
   \]
   Since $\phi$ is in regular form, $|\PC{i}| \le 2$.
   Fix $i \not= j$ and let $e_0, e_1, \dots, e_p\in \lit{\vek{x}\cup\vek{y}}$
   be a sequence of literals derived in a series of resolutions
   witnessing $\phi \wedge x_i \vdash_1 \neg x_j$ according
   to Lemma~\ref{lem:Chains}. Hence,
   $e_1 \in \PA{i}$ and $\neg e_{p-1} \in \PA{j}$. Then,
   the variables $\var{e_1}, \ldots, \var{e_{p-1}}$ form a path in \(G\)
   between a vertex in one of the components in $\PC{i}$ and
   a vertex in one of the components in $\PC{j}$. Since all the
   vertices of a path belong to the same connected component,
   we have \(\PC{i}\cap \PC{j}\neq\emptyset\).

   Let us prove that \(|\mathcal{K}|\le 3\) by contradiction.
   Assume \(|\mathcal{K}|\ge 4\) and distinguish two cases:
   \begin{enumerate}
      \item There exists \(K_1 \in \mathcal{K}\), such that
         \(K_1\in \PC{i}\) for all \(i \in \{1,\dots, n\}\).
         Choose different components \(K_2,K_3\in\mathcal{K}\) different from \(K_1\),
         choose variables \(u\in K_2, v\in K_3\),
         and consider the formula \(\phi'=\subst{\phi}{u}{v}\).
         We show that \(\phi'\) satisfies the
         conditions~\ref{cond:pc-amo-1} and~\ref{cond:pc-amo-2},
         thus it encodes \(\amo_n\) in contradiction to the assumption
         that \(\phi\) has the minimum number of auxiliary variables.
         \begin{amocond}
         \item For every \(i \in \{1,\dots,n\}\), we prove that there
            is a satisfying assignment \(\alpha\) of \(\phi\wedge x_i\),
            such that \(\alpha(u)=\alpha(v)\). This implies that
            \(\phi'\wedge x_i\) is satisfiable as well.

            Clearly, \(K_2\not\in \PC{i}\) or \(K_3\not\in \PC{i}\),
            because \(|\PC{i}| \le 2\) and \(K_1\in \PC{i}\). Due to the symmetry
            between $K_2$ and $K_3$ and the variables $u$ and $v$,
            we can assume \(K_2\not\in \PC{i}\). As \(\phi\wedge x_i\) is satisfiable,
            there is a literal \(e\in \lit{v}\) such that \(\phi\wedge x_i \wedge e\)
            is satisfiable.
            Lemma~\ref{lem:PositiveInpVars} and the assumption that $\phi$
            is in regular form imply that the set $\upclos[\phi]{x_i, e}$
            contains only the literal $x_i$, literals $\neg x_j$ for $j \not=i$,
            and literals on some of the auxiliary variables
            in the components contained in $\PC{i}$ and in $K_3$. In particular, we have
            \(\upclos[\phi]{x_i, e}\cap \lit{K_2} = \emptyset \).
            By Lemma~\ref{lem:minimum-2-CNF-AMO}, $\phi$ is propagation complete.
            Hence, none of the literals $u$ and $\neg u$ is an implicate
            of \(\phi\wedge x_i \wedge e\).
            Consequently, there exists a satisfying assignment \(\alpha\) of the formula
            \(\phi\wedge x_i\wedge e\) such that \(\alpha(u)=\alpha(v)\) as required.
         \item Since $\phi'\wedge x_i$ is satisfiable for every \(i\in\{1,
                  \ldots, n\}\), Lemma~\ref{lem:identification-propagation}
            for the formula $\phi \wedge x_i$ and the substitution $[u \leftarrow v]$
            implies $\phi' \wedge x_i \vdash_1 \neg x_j$ for all $j \not=i$.
         \end{amocond}

      \item Assume that for each \(K \in \mathcal{K}\)
         there exists \(i \in \{1,\dots,n\}\) such that \(K\not\in \PC{i}\).
         It follows that \(\left|\PC{i}\right|=2\) for each \(i \in \{1,\dots, n\}\).
         Indeed, if \(\PC{j}=\{K\}\) for some \(j\) and \(K\in\mathcal{K}\),
         then $K \in \PC{i}$ for all $i$, since $\PC{i}\cap \PC{j}\neq\emptyset$.

         By the assumptions, there are $i,j$, such that $\PC{i} \not= \PC{j}$.
         Since $\PC{i}$ and $\PC{j}$ have a non-empty intersection, there
         are $K_1,K_2,K_3 \in \mathcal{K}$, such that
         \begin{align*}
            \PC{i} & = \{K_1, K_2\}\text{,}\\
            \PC{j} & = \{K_1, K_3\}\text{.}
         \end{align*}
         By the assumptions, there is $k$, such that $K_1 \not\in \PC{k}$.
         Since $\PC{k}$ has a non-empty intersection with both $\PC{i}$
         and $\PC{j}$, we have
         \begin{align*}
            \PC{k} & = \{K_2, K_3\}\text{.}
         \end{align*}
         Since $|\mathcal{K}|\ge 4$, there is
         $K_4 \in \mathcal{K} \setminus \{K_1, K_2, K_3\}$
         and, moreover, there is $l$, such that $\PC{l} = \{K_4, K_5\}$
         for some $K_5 \in \mathcal{K}$.
         Since $\PC{l}$ has a non-empty intersection with
         each of the sets $\PC{i}$, $\PC{j}$, $\PC{k}$, each of these
         sets contains $K_5$. This is a contradiction, since these
         sets have no common element.
   \end{enumerate}

   Let \(\psi \subset \phi\) denote the set of clauses of \(\phi\) that contain
   only variables from \(\vek{y}\). As \(G\) has at most three connected components
   and the number of edges of $G$ is $|\psi|$,
   we have \(\left|\psi\right|\ge \left|\vek{y}\right|-3\).

   Consider an undirected graph $G'$ with vertices \(\vek{y}\), whose edges are
   the sets $\PB{1},\dots,\PB{n}$. By assumption, \(G'\) contains $n$ edges
and does not contain a triangle.
Since $G'$ contains $|\vek{y}|$ vertices,
Mantel's theorem (a special case of Tur\' an's theorem) implies
\(n \le \frac{1}{4}|\vek{y}|^2\).
   Thus, \(\left|\vek{y}\right|\ge 2\sqrt{n}\) and  \(\left|\psi\right|  \ge 2\sqrt{n}-3\).
   Finally, we obtain $|\phi|=2n+|\psi| \ge 2n + 2\sqrt{n}-3$ as
   required.
\end{proof}

The following theorem presents the second main result of this paper.
Namely, it shows part~\ref{enum:lower-bound:1} for \(\pencQS{n}\)
and part~\ref{enum:lower-bound:3} of Theorem~\ref{thm:lower-bound}.

\begin{theorem} \label{thm:Main}
For \(n\ge 3\), the minimum size $\pencQS{n}$ of a 2-CNF encoding of $\amo_n$ satisfies
\begin{enumerate}
   \item If $n \le 10$, then $\pencQS{n} = 3n-6$.
   \item
      If $n \ge 9$, then
      $\pencQS{n} \ge 2n + 2\sqrt{n} - 3$.
\end{enumerate}
\end{theorem}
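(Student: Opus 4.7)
The upper bound in part (1) follows immediately from Lemma~\ref{lem:simple-upper-bound} combined with Lemma~\ref{lem:minimum-2-CNF-AMO}, so only the two lower bounds require work. My plan is to prove them jointly by strong induction on $n$, using Theorem~\ref{thm:structure-general-CNF-2} to peel off encodings that are not in regular form and the case analysis of Propositions~\ref{lem:Equal-PB-Cont}, \ref{lem:Triangles}, and~\ref{lem:quad-regular} to handle those that are.

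I would pick a minimum-size 2-CNF encoding $\phi(\vek{x},\vek{y})$ of $\amo_n$ which, among such encodings, has the fewest auxiliary variables. By Lemma~\ref{lem:minimum-2-CNF-AMO} this $\phi$ is a prime 2-CNF \penc{} with $|\phi|=\pencQS{n}$, so Theorem~\ref{thm:structure-general-CNF-2} applies and yields one of three cases. Case~\ref{enum:structure-general-CNF-2:1} would produce a strictly smaller 2-CNF \penc{} on the same $n$ input variables, which (as noted at the beginning of Section~\ref{sec:quad-lb}, every 2-CNF \penc{} with $n\ge 3$ input variables encodes $\amo_n$) is itself a 2-CNF encoding of $\amo_n$ of size below $\pencQS{n}$, contradicting minimality. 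Case~\ref{enum:structure-general-CNF-2:2} delivers $|\phi|\ge\pencQS{n-1}+3$ and is immediately closed by the induction hypothesis. Case~\ref{enum:structure-general-CNF-2:3} is the regular form, and I would split it further: if $\PB{r}=\PB{s}$ for some $r\neq s$, Lemma~\ref{lem:Equal-PB} forces $\PA{r},\PA{s}$ to be a complementary pair $\{g,h\},\{\neg g,\neg h\}$, triggering Proposition~\ref{lem:Equal-PB-Cont} for $|\phi|\ge\pencQS{n-2}+7$; if instead some three $\PB{r},\PB{s},\PB{t}$ form a triangle in the auxiliary variables, Proposition~\ref{lem:Triangles} gives $|\phi|\ge\pencQS{n-2}+6$; otherwise Proposition~\ref{lem:quad-regular} applies directly and gives $|\phi|\ge 2n+2\sqrt{n}-3$.

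For part (1), the induction starts at the base $\pencQS{3}=3$ from Lemma~\ref{lem:base-cases}, and for each $4\le n\le 10$ I would check that each of the four branches above clears the target $3n-6$: the reduction case gives $\pencQS{n-1}+3\ge 3n-6$ via induction, the two regular reductions to $n-2$ each give at least $\pencQS{n-2}+6\ge 3n-6$, and Proposition~\ref{lem:quad-regular}'s bound is verified to dominate $3n-6$ for all $n\le 10$. Part (2) then takes $n=9,10$ as its base (part (1) provides values exceeding $2n+2\sqrt{n}-3$ there) and runs the same four-branch induction for $n\ge 11$. Here the regular non-reducing branch already gives the desired bound verbatim, while the other three branches reduce to the algebraic inequalities $\sqrt{n}-\sqrt{n-1}\le 1/2$, $\sqrt{n}-\sqrt{n-2}\le 1$, and $\sqrt{n}-\sqrt{n-2}\le 3/2$, all of which follow from the identity $\sqrt{n}-\sqrt{n-k}=k/(\sqrt{n}+\sqrt{n-k})$ and hold with room to spare for $n\ge 11$.

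The tightest point of the proof, and the step I expect to require the most care, will be the regular non-reducing branch at $n=10$ in part (1): Proposition~\ref{lem:quad-regular} only yields $2n+2\sqrt{n}-3\approx 23.32$, which falls short of the target $3n-6=24$ until one invokes integrality of $|\phi|$ to round up. Everything else is a routine bookkeeping exercise over the four branches and two ranges of $n$.
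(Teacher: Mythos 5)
Your proposal is correct and, for the substance of the theorem (the bound for $n\ge 9$ and the values at $n=9,10$), it is essentially the paper's own proof: take a minimum-size 2-CNF encoding of $\amo_n$ with the fewest auxiliary variables, note via Lemma~\ref{lem:minimum-2-CNF-AMO} that it is a prime \penc{}, exclude case~\ref{enum:structure-general-CNF-2:1} of Theorem~\ref{thm:structure-general-CNF-2} by minimality, close case~\ref{enum:structure-general-CNF-2:2} by induction, and split the regular case via Lemma~\ref{lem:Equal-PB} into Propositions~\ref{lem:Equal-PB-Cont}, \ref{lem:Triangles} and~\ref{lem:quad-regular}, with the same algebraic inequalities for $n\ge 11$ and the same integrality (ceiling) argument at $n=10$ that you correctly single out as the tight spot. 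The one place you diverge is the range $3\le n\le 8$ of part (1): the paper does not re-run the branch analysis there, it simply quotes $\pencQS{n}\ge\pencS{n}=3n-6$ from the general-CNF result (Theorem~\ref{thm:general-lb-3}) together with Lemmas~\ref{lem:minimum-2-CNF-AMO} and~\ref{lem:simple-upper-bound}, reserving the three-branch case analysis for $n=9,10$ only. Your plan of pushing the four-branch induction down to the base $\pencQS{3}=3$ does clear every numerical target, but it has a small wrinkle to patch: Proposition~\ref{lem:Equal-PB-Cont} is stated only for $n\ge 5$, so at $n=4$ the branch with \(\PB{r}=\PB{s}\) is not literally covered, and the $n-2$ reductions at $n=4,5$ additionally need the (easy) values $\pencQS{2}$ and $\pencQS{3}$. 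This is harmless, since for $n\le 6$ the regular form alone already gives $|\phi|\ge 2n\ge 3n-6$ without invoking any of the three propositions; alternatively, citing Theorem~\ref{thm:general-lb-3} for $n\le 8$, as the paper does, avoids these small-$n$ checks entirely.
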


\begin{proof}
If $3 \le n \le 8$, the conclusion follows from
Lemma~\ref{lem:minimum-2-CNF-AMO}, Theorem~\ref{thm:general-lb-3}
and Lemma~\ref{lem:simple-upper-bound}.
For the rest of the proof, let $\phi$ be a minimum size 2-CNF
encoding of $\amo_n$ that, morever, has the minimum number
of auxiliary variables among such encodings.
In particular, we have $|\phi|=\pencQS{n}$.
We first analyze the cases \(n=9\) and \(n=10\).
The upper bound $\pencQS{n} \le 3n-6$ follows from Lemma~\ref{lem:simple-upper-bound}.
The lower bound $\pencQS{n} \ge 3n-6$ for $n=9$ and $n=10$ can be proven as follows.

\begin{itemize}
   \item Assume \(n=9\). If \(\phi\) is not in regular
      form, we have by Theorem~\ref{thm:structure-general-CNF-2}
      \begin{equation*}
         |\phi| \geq 3+\pencQS{n-1} = 3+\pencQS{8} = 21\text{.}
      \end{equation*}
      If \(\phi\) is in regular form and either assumptions of
      Proposition~\ref{lem:Equal-PB-Cont} or assumptions of
      Proposition~\ref{lem:Triangles} are satisfied, then we have
      \begin{equation*}
         |\phi| \geq 6+\pencQS{n-2} = 6+\pencQS{7} = 21\text{.}
      \end{equation*}
It remains to consider \(\phi\) in regular form satisfying the assumptions
of Proposition~\ref{lem:quad-regular}, for which we have
      \begin{equation*}
         |\phi| \geq 2n+2\sqrt{n}-3=21\text{.}
      \end{equation*}
Altogether, $\pencQS{n}=|\phi| \ge 21 = 3n-6$.
   \item Assume \(n=10\). If \(\phi\) is not in regular
      form, we have by Theorem~\ref{thm:structure-general-CNF-2}
      \begin{equation*}
         |\phi| \ge 3 + \pencQS{n-1} = 3+\pencQS{9}=24\text{.}
      \end{equation*}
      If \(\phi\) is in regular form and either assumptions of
      Proposition~\ref{lem:Equal-PB-Cont} or assumptions of
      Proposition~\ref{lem:Triangles} are satisfied, then we have
      \begin{equation*}
         |\phi| \geq 6+\pencQS{n-2} = 6+\pencQS{8} = 24\text{.}
      \end{equation*}
      It remains to consider \(\phi\) in regular form satisfying
the assumptions of Proposition~\ref{lem:quad-regular}.
Since $|\phi|$ is an integer, we have
      \begin{equation*}
         |\phi| \geq \left\lceil 2n + 2 \sqrt{n}-3 \right\rceil =
 \left\lceil 17 + 2\cdot \sqrt{10} \right\rceil=24\text{.}
      \end{equation*}
Altogether, $\pencQS{n}=|\phi| \ge 24 = 3n-6$.
\end{itemize}

We prove $\pencQS{n} \ge 2n + 2\sqrt{n} - 3$ for $n \ge 9$ by
induction. Since $3n-6 \ge 2n + 2\sqrt{n} - 3$ for $n=9$ and $n=10$,
the lower bound $\pencQS{n} \ge 2n + 2\sqrt{n} - 3$ is already
proven for these values of $n$.
Assume \(n>10\) for the rest of the proof and consider
three cases concerning the structure of $\phi$.

If $\phi$ is not in regular form, then
Theorem~\ref{thm:structure-general-CNF-2} and the
induction hypothesis imply
\begin{align*}
   \left|\phi\right| & \ge 3+\pencQS{n-1}\\
   & \ge  3+2(n-1)+2\sqrt{n-1}-3\\
   & =  2n+1+2\sqrt{n-1}-3\\
   & \ge  2n+2\sqrt{n}-3
\end{align*}
because \(1+2\sqrt{n-1}\ge 2\sqrt{n}\) holds whenever
\(n\ge 2\).

If \(\phi\) is in regular form and either assumptions of
Proposition~\ref{lem:Equal-PB-Cont} or assumptions of
Proposition~\ref{lem:Triangles} are satisfied, then using the induction
hypothesis, we have
\begin{align*}
   \begin{split}
      \left|\phi\right| & \ge 6+\pencQS{n-2}\\
      & \ge  6+2(n-2)+2\sqrt{n-2}-3\\
      & =  2n+2+2\sqrt{n-2}-3\\
      & \ge  2n+2\sqrt{n}-3
   \end{split}
\end{align*}
because \(2+2\sqrt{n-2}\ge 2\sqrt{n}\) holds whenever $n \ge 3$.

It remains to consider \(\phi\) in regular form satisfying the assumptions of
Proposition~\ref{lem:quad-regular}. This lemma implies
directly the required lower bound.
\end{proof}

   \section{Further Research}
\label{sec:further}

Since 2-CNF formulas are closed under resolution,
a function can have a 2-CNF encoding only if it is 
expressible by a 2-CNF formula. The function \(\amo_n\) can be
represented by an anti-monotone 2-CNF formula. It is quite natural to ask if
there is a minimum size PC encoding of \(\amo_n\) that is a 2-CNF
formula, or at least a CNF formula
without positive occurrences of the input variables. More generally,
we can pose the following questions.
Note that by Lemma~\ref{lem:PositiveInpVars},
a positive answer to Question~\ref{question:1} implies a negative
answer to Question~\ref{question:2}.

\begin{question}\label{question:1}
Assume, $f(\vek{x})$ is a boolean function expressible by
a monotone or anti-monotone 2-CNF formula.
Is there a PC encoding of the function $f$ of minimum size, which is,
moreover, a 2-CNF formula?
\end{question}

\begin{question}\label{question:2}
   Is there a PC encoding of $\amo_n$ of minimum size,
   which contains a positive occurence of an input variable?
\end{question}

We expect a negative answer to Question~\ref{question:2}. However,
for every sufficiently large $n$, there is an irredundant prime PC encoding
of $\amo_n$ of size $\Theta(n^2)$ that contains positive occurences
of the input variables. Let us briefly present an example of such a formula.
Let $A$, $B$, $C$, $D$ be non-empty sets that form a partition
of $\{1,\ldots,n\}$ and consider auxiliary variables $y_1,\ldots,y_5$.
Let $E$ be the set of the edges of the complete graph on the vertices $\{1,\ldots,n\}$
except of the edges contained in the bipartite graph between $A$ and $B$
and the bipartite graph between $C$ and $D$. The formula
\[
\begin{array}{l}
\bigwedge_{\{i,j\} \in E} (\neg x_i \vee \neg x_j) \ \wedge\\
(y_1 \vee y_2 \vee y_3 \vee y_4 \vee y_5) \ \wedge\\
\left(\neg y_1 \vee \bigvee_{i \in A} x_i\right) \wedge \bigwedge_{i \in B} (\neg y_1 \vee \neg x_i) \ \wedge\\
\left(\neg y_2 \vee \bigvee_{i \in B} x_i\right) \wedge \bigwedge_{i \in A} (\neg y_2 \vee \neg x_i) \ \wedge\\
\left(\neg y_3 \vee \bigvee_{i \in C} x_i\right) \wedge \bigwedge_{i \in D} (\neg y_3 \vee \neg x_i) \ \wedge\\
\left(\neg y_4 \vee \bigvee_{i \in D} x_i\right) \wedge \bigwedge_{i \in C} (\neg y_4 \vee \neg x_i) \ \wedge\\
\bigwedge_{i=1}^n (\neg y_5 \vee \neg x_i)
\end{array}
\]
is a prime PC encoding of $\amo_n$ containing a positive occurence of
each input variable. Moreover, removing any of the clauses
with a positive occurence of an input variable leads to a formula
that is not a PC encoding of $\amo_n$. For example, if we remove the
clause $\neg y_1 \vee \bigvee_{i \in A} x_i$, then one can
obtain a satisfying assignment inconsistent with the function
$\amo_n$ as follows. Choose $i \in C$,
$j \in D$, set $x_i=x_j=1$,
$x_k=0$ for all $k\in \{1,\ldots,n\} \setminus \{i,j\}$,
$y_1=1$, and $y_2 = \cdots = y_5 = 0$.

It is plausible to assume that for every $n \ge 3$, there is a minimum size PC encoding
of $\exone_n$, which has the form
$$
(x_1\lor\cdots\lor x_n) \wedge \phi(\vek{x}, \vek{y})\text{,}
$$
where $\phi(\vek{x}, \vek{y})$ is a minimum size PC encoding of $\amo_n$.
This suggests the following conjecture, which is a strengthening
of Proposition~\ref{prop:relations-for-sizes}.

\begin{conjecture}\label{conjecture:1}
   Let \(\phi\) be a propagation complete encoding of \(\amo_n\)
of minimum size
   and let \(\psi\) be a propagation complete encoding of \(\exone_n\)
of minimum size
   for \(n\geq 2\). Then \(|\psi|=|\phi|+1\).
\end{conjecture}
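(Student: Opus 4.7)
The plan is to establish both halves of the equality $|\psi| = |\phi| + 1$. The upper bound $|\psi| \le |\phi| + 1$ is immediate from Proposition~\ref{prop:relations-for-sizes}: given a minimum PC encoding $\phi$ of $\amo_n$, the formula $(x_1 \vee \cdots \vee x_n) \wedge \phi$ is a PC encoding of $\exone_n$ of size $|\phi| + 1$. For the lower bound $|\psi| \ge |\phi| + 1$, I would split the argument into two claims: (a) $|\phi| = \pencS{n}$, and (b) $|\psi| \ge \pencS{n} + 1$. Together with the bounds $\pencS{n} \le |\phi|, |\psi| \le \pencS{n} + 1$ from Proposition~\ref{prop:relations-for-sizes}, these would imply the conjecture.

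For (a), the key observation is that any \penc{} which encodes $\amo_n$ rather than $\exone_n$ is automatically a PC encoding of $\amo_n$: the only nontrivial implications of $\amo_n$ on input literals have the form $x_i \models \neg x_j$, which is exactly condition~\ref{cond:pc-amo-2} of a \penc{}, and contradictions from incompatible inputs follow from~\ref{cond:pc-amo-1} and~\ref{cond:pc-amo-2}. Therefore it suffices to prove that some minimum \penc{} encodes $\amo_n$. To this end, I would attempt a general-CNF analogue of Lemma~\ref{lem:PositiveInpVars}: some minimum \penc{} contains no positive occurrence of an input variable; such a formula is anti-monotone in $\vek{x}$, so the all-zero input extends to a satisfying assignment and the formula encodes $\amo_n$. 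The argument would assume a minimum \penc{} has a positive literal $x_i$ in some clause $C$ and construct a \penc{} of equal or smaller size by removing $C$ or substituting $x_i$ by a suitable literal, adapting the chain-type arguments of Section~\ref{sec:regular-form}.

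For (b), suppose for contradiction that $\psi$ is a PC encoding of $\exone_n$ with $|\psi| = \pencS{n}$. Since $\exone_n$ excludes the all-zero input while $\amo_n$ does not, $\psi$ must contain at least one clause $C$ with a positive occurrence of an input variable (otherwise setting all input and auxiliary variables to appropriate zero values would satisfy $\psi$). The plan is to show that $\psi \setminus \{C\}$ (or a slight modification thereof) is a PC encoding of $\amo_n$ of size strictly less than $\pencS{n}$, contradicting (a) together with Lemma~\ref{lem:p-enc}. Proving that removing such $C$ preserves propagation completeness for $\amo_n$ requires showing that every derivation $\psi \wedge x_i \vdash_1 \neg x_j$ can be carried out without using $C$ — analogous to the argument in the proof of Lemma~\ref{lem:PositiveInpVars}, but without the 2-CNF chain structure.

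The main obstacle is precisely the absence of an analogue of Lemma~\ref{lem:Chains} for general CNF formulas. In 2-CNF, unit-propagation derivations travel along paths in the implication graph and therefore can be routed away from any clause with positive input literals; in the general case a derivation of $\neg x_j$ from $x_i$ may ramify through many auxiliary literals, and clauses with positive input occurrences may participate essentially, as the prime irredundant $\Theta(n^2)$ PC encoding of $\amo_n$ in Section~\ref{sec:further} illustrates. Ultimately the proof would reduce to ruling out positive input-variable occurrences in minimum-size PC encodings of $\amo_n$, which is a strengthening of the negative answer to Question~\ref{question:2}; establishing this structural claim appears to be the core open problem underlying Conjecture~\ref{conjecture:1}.
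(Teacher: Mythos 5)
The statement you are trying to prove is stated in the paper as a conjecture: the authors give no proof of it, and in fact they prove only the weaker two-sided bounds $\pencS{n} \le |\phi|, |\psi| \le \pencS{n}+1$ of Proposition~\ref{prop:relations-for-sizes}. Your proposal does not close this gap either, and to your credit you say so. The parts of your argument that are sound are the upper bound $|\psi| \le |\phi|+1$ (which is exactly the second construction in Proposition~\ref{prop:relations-for-sizes}, noting $|\phi|\ge\pencS{n}$ by Lemma~\ref{lem:p-enc}) and the observation that a \penc{} which happens to encode $\amo_n$ is automatically a PC encoding of $\amo_n$, since the only input-literal consequences of $\amo_n$ are $x_i \models \neg x_j$ and the inconsistency of two positive inputs, both handled by conditions~\ref{cond:pc-amo-1} and~\ref{cond:pc-amo-2}. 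But both halves of your lower-bound plan rest on unproven structural claims. Step (a) needs a general-CNF analogue of Lemma~\ref{lem:PositiveInpVars}, i.e.\ that some minimum-size \penc{} (equivalently, minimum-size PC encoding of $\amo_n$) has no positive occurrence of an input variable; this is precisely the open problem behind Question~\ref{question:2}, and the authors explicitly write in Section~\ref{sec:quad-lb} that they believe it but have no proof. Step (b) needs that deleting (or modifying) a clause $C$ of $\psi$ with a positive input occurrence yields a \penc{}, which requires rerouting every derivation $\psi \wedge x_i \vdash_1 \neg x_j$ away from $C$; the only tool the paper has for such rerouting is the chain structure of Lemma~\ref{lem:Chains}, which is specific to 2-CNF and has no counterpart in general CNF. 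The $\Theta(n^2)$ irredundant prime PC encoding in Section~\ref{sec:further} shows concretely that clauses with positive input occurrences can be essential for propagation completeness, so the deletion step cannot be justified by irredundancy-style arguments alone.

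So the verdict is: your proposal is a reasonable reduction of the conjecture to two structural claims, but those claims are themselves open (one of them essentially equivalent to Question~\ref{question:2}), and there is no argument in your plan, nor in the paper, that establishes them. As it stands the proposal proves only $|\psi| \le |\phi| + 1$, which is already known from Proposition~\ref{prop:relations-for-sizes}, and the conjecture remains open. If you want to make progress, the place to attack is exactly the claim you identified: rule out positive occurrences of input variables in minimum-size PC encodings (or minimum-size \pencs{}) for general CNF, for instance by finding a size-preserving transformation that eliminates a positive input occurrence while preserving conditions~\ref{cond:pc-amo-1} and~\ref{cond:pc-amo-2}; be aware that any such transformation must fail gracefully on the Section~\ref{sec:further} example, where the positive occurrences are irredundant but the formula is far from minimum size.
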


The requirement that an encoding is propagation complete can
be relaxed to unit refutation completeness introduced in~\cite{V94}. An encoding
$\phi(\vek{x}, \vek{y})$ is unit refutation complete, if for every
partial assignment of $\vek{x}$ that makes $\phi$ unsatisfiable,
a contradiction can be derived by unit propagation.
This makes no difference for a 2-CNF formula. A minimum size
2-CNF encoding is propagation complete and, hence,
unit refutation complete. It follows that for 2-CNF,
the minimum size of a unit refutation complete encoding and
the minimum size of a propagation complete encoding of
the same function are the same. However,
for general CNF formulas, a unit refutation complete encoding
can be smaller than the smallest propagation complete encoding
and, hence, a lower bound on the size of
a unit refutation complete encodings is harder to prove.

\begin{question}
Is the minimum size of a unit refutation complete
encoding of $\amo_n$ or \(\exone_n\) at least $2n + \Omega(\sqrt{n})$?
\end{question}

\section{Conclusion}
\label{sec:conclusion}

We have shown that any propagation complete encoding of the
\(\amo_n\) or \(\exone_n\) constraint for \(n\geq 9\) contains at least
\(2n+\sqrt{n}-2\) clauses. This shows that the best known upper
bound of \(2n + 4 \sqrt{n} + O(\sqrt[4]{n})\) clauses achieved by product
encoding introduced by Chen~\cite{amoCHE_2010} is essentially best
possible. Let us point out that the product encoding is an encoding
of $\amo_n$ in regular form which is the notion playing central role
in our proof.

For the special case of 2-CNF encodings, we have shown
for \(n\geq 9\) a better lower bound \(2n+2\sqrt{n}-3\). This case is
important, because the encodings that appear in the literature are
2-CNF formulas including the product encoding mentioned above.

We have also shown that for \(3 \le n \le 8\), the number
of clauses in a propagation complete encoding of $\amo_n$ is at least \(3n-6\).
This number of clauses is achieved by sequential encoding and therefore
in this case the lower and upper bound match for both general CNF
formulas and 2-CNF formulas.

\section*{Acknowledgments}
Petr Ku{\v c}era was supported by the Czech Science Foundation (grant
GA15-15511S).
Petr~Savick\'y was supported by CE-ITI and GA\v CR under
the grant GBP202/12/G061 and by the institutional
research plan RVO:67985807.

   \bibliography{ms}
   \bibliographystyle{plain}
\end{document}